%% file: v5_arXiv.tex
\documentclass[12pt]{article}

\usepackage{cite}
\usepackage{graphicx,color,epsfig,rotating}
\usepackage{amsfonts,amsmath,amssymb,bbm}
\usepackage{setspace}
\usepackage{algorithm}
\usepackage[algo2e]{algorithm2e} 
\usepackage{subcaption}
\usepackage{mdwtab}
\usepackage{placeins}
\usepackage{psfrag, graphicx}
\usepackage[latin1]{inputenc}
\usepackage{multirow}
\usepackage{stfloats}
\usepackage{tabularx} 
\usepackage{booktabs} 
\usepackage{url}
\usepackage{bm}
\usepackage{bbding}
\usepackage{pifont}
\usepackage{wasysym}
\usepackage{xspace}
\usepackage{tcolorbox}  
\usepackage{listings}
\usepackage{lipsum}
\usepackage{color}
\usepackage{tcolorbox}
\tcbuselibrary{skins,breakable}
\definecolor{lightpurple}{RGB}{245,243,255}
\newtcolorbox{findingbox}{
    enhanced,
    breakable,
    colback=gray!5,
    colframe=black!70,
    boxrule=1pt,
    arc=4mm,
    left=4pt,
    right=4pt,
    top=4pt,
    bottom=4pt,
     before skip=0pt,
    after skip=0pt
    drop shadow=black!25,
    fontupper=\normalsize
}

\usepackage{geometry}
\input{macros}  

\newcommand{\User}[1]{User~$#1$}
\newcommand{\user}[1]{user~$#1$}

\SetKwInput{KwData}{Input}
\SetKwInput{KwResult}{Output}

\allowdisplaybreaks 
\graphicspath{{./images/}} 

\begin{document}


\title{Taming Subpacketization without Sacrificing Communication: A Packet Type-based Framework for D2D Coded Caching}

\author{
Xiang Zhang\thanks{X. Zhang is with the Department of Electrical Engineering and Computer Science, Technical University of Berlin, 10623 Berlin, Germany. Email: xiang.zhang@tu-berlin.de.}
\and
Giuseppe Caire\thanks{G. Caire is with the Department of Electrical Engineering and Computer Science, Technical University of Berlin, 10623 Berlin, Germany. Email: caire@tu-berlin.de.}
\and
Mingyue Ji\thanks{M. Ji is with the Department of Electrical and Computer Engineering, University of Florida, Gainesville, FL 32611, USA. Email: mingyueji@ufl.edu.}
}

\maketitle

\begin{abstract}
Finite-length design is essential for making coded caching practical, as the optimal communication gains of existing schemes often require prohibitively large subpacketization. This paper studies rate-optimal device-to-device (D2D) coded caching with reduced subpacketization. We propose a packet type-based (PT) framework that exploits the geometric structure induced by user grouping. Under this structure, subfiles, packets, and multicast groups are classified into types, allowing the originally symmetric Ji-Caire-Molisch (JCM) design~\cite{ji2016fundamental} to be systematically relaxed without sacrificing the optimal D2D communication rate.
The key feature of the PT framework is that subpacketization reduction is achieved through two complementary mechanisms: \emph{subfile saving}, by excluding redundant subfile types, and \emph{further-splitting saving}, by assigning type-dependent further-splitting factors to subfiles through transmitter selection.
The type-dependent splitting factors are then coordinated across multicast group types to produce a globally consistent file-splitting structure. 
Based on this framework, we construct several classes of rate-optimal D2D coded caching schemes that strictly improve upon the JCM subpacketization. The proposed schemes achieve either order-wise reductions in the number of users or constant-factor reductions over broad memory regimes, while preserving the optimal  rate. These results reveal a structural distinction between D2D and shared-link coded caching: unlike in the shared-link setting, full symmetric subpacketization is not necessary for rate-optimal D2D caching.
\end{abstract}

Coded caching, device-to-device,  subpacketization, packet type, finite-length, framework

\section{Introduction}
\label{section:intro}

\subsection{Background: D2D \CoCa}
\label{subsec:D2D cc, intro}

Device-to-device (D2D) coded caching, proposed by Ji, Caire, and Molisch (JCM)~\cite{ji2016fundamental}, extends the Maddah-Ali and Niesen (MAN)~\cite{maddah2014fundamental} shared-link model to D2D networks where users help satisfy each other's demands through multicast transmissions. As in MAN, D2D coded caching consists of a cache placement phase, where users store parts of the files, and a \mtcst delivery phase, where coded messages are exchanged after the demands are revealed.
Let $K,N,M$ denote the number of users, files, and per-user cache size, and define $\mu \eqdef   M/N$ and $t\eqdef  K\mu $. The MAN scheme achieves  \comm rate (\ie, total number of transmitted bits)
$\Rman= \frac{K(1-\mu)}{t+1}$ at \sbp level\footnote{\Sbp level refers to the minimum number of packets that each file is partitioned into, in order to \achv the desired rate.} $\Fman=\binom{K}{t}$ \pkts per file.
The JCM D2D scheme uses the same combinatorial placement as MAN but achieves $\Rjcm=\frac{K(1-\mu)}{t}$
with higher \sbp
\be
\Fjcm=t\binom{K}{t}.
\ee
Under  uncoded placement and one-shot delivery, $\Rjcm$ was shown optimal when $K \le N$\cite{yapar2019optimality}.

Comparing JCM with MAN, the key difference lies in subpacketization. Both schemes use the same $\binom{K}{t}$    subfiles per file during placement, but 
 JCM \tit{further splits} each subfile into $t$ packets, yielding 
$\Fjcm=t\Fman$. 
This additional splitting enables symmetric packet exchange in D2D delivery: all $t+1$ users in each multicast group act as transmitters, so each requested subfile must provide $t$ distinct packets to support the $t$ coded messages received by its requester.
Thus, the optimal D2D rate is achieved only after paying a \tit{two-layer} subpacketization cost: $\binom{K}{t}$ placement subfiles and an additional factor $t$ for packet exchange.

\subsection{Finite-Length \CoCa}
\label{subsec: FLA of coca, intro}

The communication gain of coded caching is achieved at the cost of large subpacketization. In both MAN and JCM schemes, the subpacketization level $F$ grows exponentially with  $K$, which creates two finite-length limitations. First, the optimal rate is exactly attainable only when $F\mid L$ ($F$ divides  $L$), where $L$ is the file length in bits. 
If $  F\nmid L$, zero-padding can preserve the scheme structure, but rate optimality is no longer guaranteed. Second, even when $F\mid L$, a large $F$ causes substantial implementation overhead for indexing, storing, and managing subfiles and packets.
Therefore, finite-length coded caching seeks to reduce subpacketization while preserving, or only mildly sacrificing, the communication rate. We next review representative finite-length designs for shared-link and D2D coded caching, with particular emphasis on the latter.

\subsubsection{Shared-Link}
\label{subsubsec: shared link cc fla}
\Finlen  coded caching~\cite{7539576, yan2017pda, 9536664,9284439,9037309,8080217,8651553,shangguan2018hpyergraph, tang2018subpacketization,8335418,8006726,8613527,9477627,9148593,9739200,9348098,8815564,zhang2019cache, 8950279,9839268, 9739199,salehi2020low,  sasi2021multi, brunero2022fundamental, cheng2024asymptotically, cheng2024coded,wei2024novel, rajan2024optimal,nt2025hierarchical, cheng2025new, parrinello2019fundamental, 10978210,huang2026placement}
was first studied by Shanmugam \etal~\cite{7539576} in the decentralized setting, showing that if $F\le \exp(\frac{KM}{N})  $, the rate must scale linearly with $K$.
Yan \etal~\cite{yan2017pda} introduced the placement delivery array (PDA) \frmwk and constructed  \schms for $M/N\in \big\{ \frac{1}{q},  \frac{q-1}{q} \big \}(q\ge2)$, achieving exponentially smaller subpacketization than MAN but with reduced global caching gain.
Within the PDA framework, MAN was later shown to simultaneously achieve the minimum rate and subpacketization;
hence any $F< \Fman$ necessarily incurs  $R> \Rman$, revealing an inherent rate-subpacketization tradeoff.
Subsequent PDA-based designs further reduced $F$ at the cost of higher rates~\cite{9536664,9284439,9037309,8080217,8651553}.
Other constructions include hypergraph-based schemes with sub-exponential $F= O(\exp(\alpha \sqrt{K}))$\cite{shangguan2018hpyergraph}, 
resolvable designs based on linear block codes\cite{tang2018subpacketization}, 
Ruzsa-Szem$\acute{\rm e}$redi graph \schms achieving linear
$F\propto K$ with sub-linear rate $R=O(K^{\epsilon}),\epsilon\in(0,1)$, as well as line-graph~\cite{8613527} and projective geometry-based~\cite{9477627} approaches.
Extensions have also been developed for
multi-access, shared-cache, hierarchical, and MIMO coded caching~\cite{9328826,sasi2021multi,brunero2022fundamental,yang2026low, 9839268,parrinello2019fundamental, rajan2024optimal,nt2025hierarchical,9148593,9739200,9348098,8815564,8950279,9739199,zhang2019cache, salehi2020low,9369971,lampiris2024adapt, huang2026placement}.

\subsubsection{\dd}
\label{subsubsec: d2d cc fla} 

Finite-length D2D coded caching has been studied in both caching~\cite{wang2017placement,chittoor2019low,8620232,9913463,9477627,woolsey2020d2d,wu2023coded,wang2025coded,nt2025d2d,rashid2026optimal} and coded distributed computing (CDC)~\cite{8437323,9448271,9103948} settings\footnote{The CDC scheme proposed by Li \etal~\cite{li2017fundamental}  is mathematically equivalent to  the \jsch when each Reduce function is assigned to a single  computing node. In this case, the finite-length analysis of CDC  can be directly translated to D2D coded caching, and vice versa.}.
Chittoor \etal\cite{chittoor2019low,9477627}  adapted projective geometry-based shared-link designs to D2D, reducing subpacketization but increasing the rate relative to JCM.
Wang \etal\cite{wang2017placement,8620232}  introduced the D2D placement delivery array (DPDA) framework and transformed shared-link PDAs into DPDAs.
In particular, \cite{wang2017placement}  established lower bounds on $F$ for
$ \frac{M}{N} \in\left\{\frac{1}{K},\frac{2}{K},\frac{K-2}{K},\frac{K-1}{K}\right\}$,
and constructed DPDAs attaining these bounds while preserving the JCM rate, but only at these limited memory points.
Other DPDA constructions~\cite{8620232,9913463,nt2025d2d,wang2025coded} achieve exponentially smaller $F$ than $\Fjcm$, but either increase the rate or apply only to a very  restricted set of \memo points.
Beyond DPDA, Woolsey \etal\cite{woolsey2020d2d} proposed a hypercube-based scheme with a geometric interpretation, where caches correspond to hyperplanes and coded messages arise from their intersections.
For $M/N=1/q(q\in \mathbb{N}_+)$, it \achvs
$R=\frac{K(1-M/N)}{t-1}>R_{\rm JCM}$ and \sbp $F\in \{q^t, (t-1)q^t\}$.
Related hypercube~\cite{9448271} and resolvable-design\cite{9103948} constructions were also developed for CDC, reducing input files or Map functions but increasing the communication rate relative to the JCM-equivalent CDC baseline.

\subsection{Main Idea of this Paper}
\label{subsec:main idea}

The above discussion shows that existing finite-length D2D coded caching schemes remain \tit{fragmented}: most constructions either reduce subpacketization at the cost of higher communication rate, or preserve the JCM rate only at \tit{isolated} memory points. Hence, despite its large subpacketization, the JCM scheme~\cite{ji2016fundamental} remains the benchmark state-of-the-art for rate-optimal D2D coded caching, while also leaving substantial room for systematic improvement.
This leads  to the first question:
\begin{center}
\begin{tcolorbox}[
    enhanced,
    colback=lightpurple,
    colframe=black!70,
    boxrule=0.8pt,
    arc=2mm,
    width=0.95\textwidth, 
    drop shadow=black!25]
\tbf{Question 1}:
Can the subpacketization of D2D coded caching be reduced while \tit{i)} preserving the optimal communication rate, and  \tit{ii)} maintaining broad applicability across system parameters?
\end{tcolorbox}
\end{center}

To identify the room for improvement, we revisit the JCM scheme from first principles. 
In particular, the JCM scheme attains the optimal communication rate through a highly \tit{symmetric} design: 
\tit{every} $t$-subset of users labels a subfile, \tit{every} subfile is further split into $t$ packets, and \tit{every} user in each $(t+1)$-user multicast group $\Sc$ acts as a transmitter. This symmetry is sufficient for rate optimality, but not always necessary as  we will see later.  
We  then ask:
\begin{center}
\begin{tcolorbox}[
    enhanced,
    colback=lightpurple,
    colframe=black!70,
    boxrule=0.8pt,
    arc=2mm,
    width=0.95\textwidth, 
    drop shadow=black!25]
\tbf{Question 2}:
Which structural symmetries of the JCM scheme\cite{ji2016fundamental}, in both subpacketization and multicast transmission, are necessary for rate optimality, and which can be relaxed to enable \sbp reduction?
\end{tcolorbox}
\end{center}

\tit{Main Idea of PT.}
The  central idea of this paper is  to expose and relax the redundant parts of JCM symmetry without violating the optimal communication-rate condition (see Fig. \ref{fig:two gains demo,intro}).
\begin{figure}[t]
    \centering
    \includegraphics[width=0.5\textwidth]{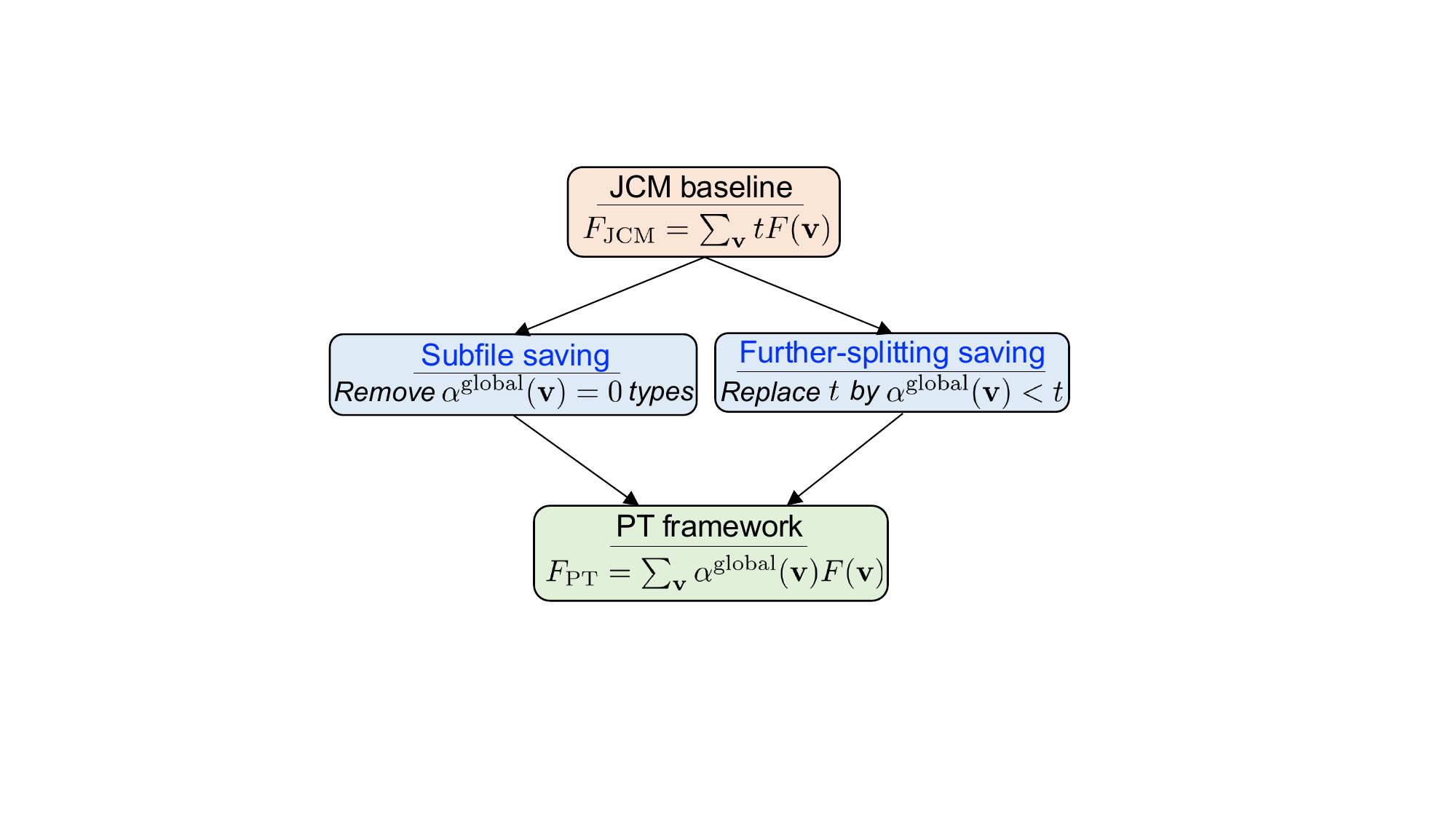}
    \caption{PT \frmwk reduces JCM \sbp through two reduction mechanisms, \ie, subfile saving and further-splitting saving.}
    \label{fig:two gains demo,intro}
\end{figure}
We do so by partitioning users into groups, which induces a geometric classification of subfiles and multicast groups into \tit{types} according to how their users are distributed across groups. Once types are introduced, subfiles and multicast transmissions \tit{no longer need to be treated uniformly}: some subfile types can be excluded, and the retained types can be assigned type-dependent further-splitting (FS) factors\footnote{The further-splitting factor of a subfile type is the number of packets into which each subfile of that type is divided for multicast delivery.} through transmitter selection in each \mgrp  $\Sc$.
This gives a compact interpretation of the subpacketization reduction.
In the proposed  packet type (PT)-based \frmwk,  the \sbp is equal to
\be 
\label{eq:Fpt,intro,main idea}
\Fpt= \sum_{\vv} \alphaglobalnb(\vv)F(\vv),
\ee 
where $F(\vv)$  denotes the number of  type-$\vv$ \sbfs and $\alphaglobalnb(\vv)\le t$ is the final FS factor  assigned to that type. As a special instance, the JCM \sbp can be written as
\be
\label{eq:Fjcm,intro,main idea}
\Fjcm= \sum_{\vv}tF(\vv)=t\binom{K}{t}.
\ee 
Thus, PT reduces \sbp by setting  $\alphaglobalnb(\vv)=0$ for redundant (excluded) \sbf types and by assigning $ \alphaglobalnb(\vv)<t $ to retained types whenever possible.

The main technical challenge is to ensure that these asymmetric choices remain globally consistent across multicast group types while preserving the optimal communication rate. The PT framework achieves this by coordinating the local splitting requirements induced by nonuniform transmitter selections within \diff \mgrp types through a \tit{vector least common multiple} (LCM) operation, which produces a globally consistent FS vector
$\alphaglobal \eqdef(\alphaglobalnb(\vv))_{\vv}$.
Each feasible $\alphaglobal$ therefore specifies a valid rate-optimal D2D caching scheme, whose subpacketization is given by \eqref{eq:Fpt,intro,main idea}.

In  short, the PT framework converts subpacketization reduction into a \tit{geometrically structured} combinatorial design problem: choose a user grouping and, for each induced multicast group type, choose transmitters so as to minimize the resulting global FS vector. The final scheme is compactly represented by $\alphaglobal$, whose entries specify which subfile types are excluded and how finely each retained type is split. The JCM scheme is recovered as the fully symmetric special case where all subfile types are retained and $\alphaglobalnb(\vv) =t$ for every possible type $\vv$.
\Aar, 
the PT framework provides affirmative answers to both  \tbf{Questions 1} and \tbf{2}.
By exploiting the geometric type structure induced by user grouping and transmitter selection, PT improves D2D \sbp over broad memory ranges and goes beyond existing fragmented designs. 
It  offers a general design methodology for reducing subpacketization while preserving the optimal D2D communication rate. 
The key differences between PT and JCM are briefly summarized in Table \ref{tab:central idea of pt,intro}.

\begin{table}[h]
    \centering
    \caption{Comparison of PT and JCM in terms of \sbp.}
    \begin{tabular}{|c|c|}
    \hline
      \tbf{JCM symmetry}   & \tbf{PT relaxation}  \\
      \hline
      All subfile types   & Redundant types excluded   \\
      \hline 
      Uniform FS factors & Type-dependent FS factors\\
    \hline
    Full Tx participation in $\Sc$ & Selective Tx participation\\
    \hline
    \end{tabular}
    \label{tab:central idea of pt,intro}
\end{table}

\subsection{Summary of Contributions}
\label{subsec:summary of contribution,intro}

\begin{itemize}
\item  \tit{We reveal a fundamental  distinction in \sbp between D2D and shared-link coded caching.}
In shared-link coded caching, prior results show that achieving the optimal MAN rate requires subpacketization no smaller than $\binom{K}{t}$ under the PDA framework~\cite{yan2017pda}. In contrast, we show that this limitation does not carry over to D2D coded caching: the optimal JCM  rate can be preserved with subpacketization strictly smaller than $F_{\rm JCM}=t\binom{K}{t}$. This demonstrates that the full symmetry of the JCM scheme is sufficient but not necessary for rate optimality, and exposes a new degree of freedom unique to the D2D setting.

\item  \tit{We propose a packet type (PT)-based framework that turns subpacketization reduction into a structured combinatorial design problem.}
The framework partitions users into groups, which induces geometric types for subfiles, packets, and multicast groups. Given this type structure, the design reduces to choosing the user grouping and the transmitter set for each multicast group type. These choices determine local further-splitting (FS) requirements, which are then coordinated through a vector least common multiple operation to produce a globally valid FS vector $\alphaglobal$.  
Each feasible $\alphaglobal$ specifies a valid rate-optimal D2D caching scheme, with \sbp $\Fpt= \sum_{\vv} \alphaglobalnb(\vv)F(\vv)$. The JCM scheme is recovered as  a special case where all subfile types are retained and $\alpha^{\rm global}(v)=t,\forall \vv$.

\item \tit{We identify two operational subpacketization-reduction mechanisms (see Fig. \ref{fig:two gains demo,intro}).}
The first is the \tit{subfile saving gain}, obtained by excluding redundant subfile types, equivalently setting $\alpha^{\rm global}(\vv)=0$. This reduces the number of subfiles that need to be created and delivered. The second is the \tit{further-splitting saving gain}, obtained by assigning smaller-than-$t$ FS factors to retained subfile types through transmitter selection. Thus, PT reduces subpacketization not by sacrificing multicast gain, but by removing unnecessary subfile types and reducing the packet-per-subfile ratio whenever the type structure allows.

\item \tit{We construct multiple classes of rate-optimal D2D coded caching schemes with reduced subpacketization.}
Based on the PT framework, we obtain several explicit scheme families that preserve the optimal JCM communication rate while achieving either order-wise or constant-factor subpacketization reductions. These schemes apply across broad memory regimes and strictly improve the rate-subpacketization tradeoff of existing finite-length D2D designs, many of which either increase the  rate or apply only to restricted memory points. To the best of our knowledge, the PT framework is the first general design methodology for reducing D2D subpacketization while preserving the optimal  rate.
\end{itemize}

\emph{Paper Organization.}
The remainder  of this paper is organized as follows.
Section~\ref{section: problem description} introduces the  problem setup and provides a brief overview of the proposed PT framework.
The main results, summarized in three theorems, are presented in Section~\ref{section: main results}. 
Section~\ref{section: PT design framework} presents the full PT design \frmwk.
The general caching schemes corresponding to the  three  main theorems are presented in Sections~\ref{section: gen ach schm, thm 1}, \ref{section: gen ach schm, thm 2} and \ref{section: gen ach schm, thm 3}, \resp, each following a common PT design-example-discussion structure.
Finally, Section~\ref{section: conclusion} concludes the paper.

\emph{Notation Convention.}
$\mathbb{N}_+$ denotes positive integers.
For integers $m\le n$, let 
$[m:n]\eqdef\{m,m+1,\cdots,n\}$ and write
$[1:n]$ as $[n]$.
Calligraphic letters $\Ac,\Bc,\cdots$ denote sets and bold  letters $\Am,\Bm,\av,\bv, \cdots $ denote matrices or vectors.
For sets $\Ac$ and $\Bc$, define
$\Ac\backslash \Bc\eqdef \{x\in \Ac:x\notin \Bc\}$, 
$\Ac\times \Bc \eqdef \{ \{a,b\}: a\in \Ac, b\in \Bc  \} $, and 
$\binom{\Ac}{n}\eqdef \{ \Sc \subseteq \Ac: |\Sc| = n   \}$.
For matrix $\Am$, 
$\Am_{i,:}$ and $\Am_{:,j }$ denote its $\ith$ row and $\jth$ column,  \resp.
Let $\underline{a}_{n}\eqdef (a,\cdots,a)$ be the length-$n$ row vector with repeated entry $a$. $\oplus$ denotes bit-wise XOR. 
We write  $m\mid n$ if $m$ divides $n$, and $m \nmid n$ \otws. 
The standard  asymptotic notation $O(\cdot)$ and $\Theta(\cdot)$ is used.

\section{Problem Statement}
\label{section: problem description}

We first review the JCM D2D coded caching scheme~\cite{ji2016fundamental}, emphasizing the structural properties of its subpacketization and coded multicast delivery. We then formalize symmetric and asymmetric subpacketization and packet exchange, which form the basis of the PT framework. Finally, we present an illustrative example that highlights the key ideas behind the PT design.

\subsection{Two-Layer \Sbp of JCM}
\label{subsec: subpacketization structure of D2D}

In  the JCM D2D coded caching~\cite{ji2016fundamental}, there are $N$  files $W_1,\cdots,W_N$ each having $L$ bits,  and $K$ users, each with cache size $|Z_k|=ML$ bits.  
Let $t \eqdef \frac{KM}{N} \in \mathbb{N}_+  $. 
Each file is split into 
\be 
\label{eq:Fjcm def, problem statement}
\Fjcm\eqdef t\binom{K}{t}
\ee 
equal-size packets
\be 
\label{eq: JCM two-layer subpacketization structure}
W_n = \left\{W_{n,\Tc}^{(i)}:\Tc \in \binom{[K]}{t},i\in [t]\right\}, n \in [N]
\ee 
This subpacketization has two layers.
First, each file is split into $\binom{K}{t}$ \tit{\sbfs}, 
\be
\label{eq: JCM first layer subpacketization}
\mrm{[File\;\, to\;\,subfiles]} \quad 
W_n=\lef\{W_{n, \Tc}:\Tc\in \binom{[K]}{t}\rig\}, 
\ee 
where the \sbf $W_{n, \Tc}$ is cached exactly  by the users in $\Tc$. Second, each subfile is further split into $t$ \tit{packets},
\be 
\label{eq: JCM second layer subpacketization}
\mrm{[Subfile\;\, to\;\,packets]} \quad 
W_{n,\Tc}=\lef\{W_{n,\Tc}^{(i)}:i\in [t]\rig\},\; \forall \Tc \in \binom{[K]}{t}
\ee

The first layer realizes the combinatorial cache placement, while the second layer enables symmetric packet exchange in the \mtcst  delivery phase.
Specifically, in the placement phase, \user{k} stores
all subfiles with $ \Tc$ where $k\in \Tc$, \ie, 
$ 
Z_k=\big\{W_{n,\Tc}^{(i)}:\forall \Tc\ni k, i\in [t], n \in[N] \big\}
$. 
In the \deli phase, for each \mgrp  $\Sc \in \binom{[K]}{t+1}$, 
every user $k \in \Sc$ transmits one coded message to the remaining $t$ users:
\be 
\label{eq: JCM coded message, JCM delivery}
Y_{\Sc\backslash\{k\}}^k \eqdef \bigoplus_{k^{\prime}\in \Sc\backslash\{k\}}W_{d_{k^{\prime}}, \Sc\backslash \{k^{\prime}\}}^{(c_{k^{\prime}})}, 
\ee 
where the counters $\{c_k\}_{k \in \Sc}$  ensure that different packets of the same requested subfile are used across transmissions.
Each coded message is useful to exactly $t$ users, satisfying the optimal degrees-of-freedom (DoF) condition (see \Rmk \ref{rmk:optimal dof}) and yielding the JCM rate
\be 
\label{eq: JCM rate}
\Rjcm  \eqdef\frac{K(1-M/N)}{t}.
\ee 
Thus, JCM achieves rate optimality (or equivalently, the optimal \msg DoF) through a highly symmetric structure: every $t$-subset labels a subfile, every subfile is split into $t$ packets, and every user in every $(t+1)$-user multicast group transmits. This symmetry is sufficient for the optimal rate, but the PT framework shows that it is not always necessary.

\begin{remark}[Optimal DoF Condition]
\label{rmk:optimal dof} 
The optimal DoF condition means that each transmitted \msg in \eqref{eq: JCM coded message, JCM delivery} is  simultaneously useful to $t$ other users, which ensures the JCM rate \eqref{eq: JCM rate}. This condition is also enforced in the proposed PT \frmwk to guarantee rate optimality. 
\end{remark}

\subsection{Symmetric vs. Asymmetric \Sbp and Packet Exchange}
\label{subsec: defs, symmetric and asymmetiric subp, packet exchange, problem statement}
A generic two-layer \sbp structure for D2D \coca can be written as
\begin{align}
\label{eq:generic two-layer sbp, problem statement}
W_n &= \left  \{ W_{n, \Tc}: \Tc  \in \mathfrak{T}    \right\},\,\mathfrak{T} \subseteq \binom{[K]}{t},\notag\\
W_{n,\Tc} &= \left \{  W_{n, \Tc}^{(i)}: i \in  [\alpha_{\Tc}]   \right\},\; \alpha_\Tc\in  \mathbb{Z}_{\ge 0}. 
\end{align}
Here,   $\mathfrak{T} $ specifies the retained subfile labels among all $t$-subsets of users, and $\alpha_{\Tc}$ is the FS factor assigned to subfiles with label $\Tc$, which may vary with $\Tc$. Note that all retained subfiles have $\alpha_{\Tc} >0$, while  $\alpha_{\Tc}=0$ represents exclusion.

\begin{defn}[\Sym \Sbp]
\label{def: symmetric subpacketozation}
Under the generic structure \eqref{eq:generic two-layer sbp, problem statement}, the JCM scheme corresponds to the {symmetric}  choice: $\mathfrak{T}= \binom{[K]}{t}  $ and $\alpha_\Tc = t, \forall  \Tc  \in \mathfrak{T}$. Such a choice is referred to  as
\tit{\sym \sbp}, characterized by {1) \tit{Exhaustive coverage of  $\Tc$:}} $\mathfrak{T}=\binom{[K]}{t}$, and {2) \tit{Uniform subfile  splitting:}} $\alpha_\Tc=t$ for all $\Tc \in \binom{[K]}{t}$.
\end{defn}

 Besides \sym \sbp, the JCM \schm also uses  \sym \pkt  \exch during \mtcst delivery.

 \begin{defn}[\Sym Packet Exchange]
\label{def: symmetric data exchange} 
For \tit{each} \mgrp $\Sc \in \binom{[K]}{t+1} $, each \user{k\in \Sc} broadcasts one  \cmsg to the remaining $t$ users, formed as the XOR of $t$ \pkts intended for users in $\Sc\bksl \{k\} $. 
Each receiver decodes one desired packet from each such message, so every user in $\Sc$ obtains $t$ packets forming its desired subfile
$W_{d_k, \Sc \bksl \{k\}}=\left( W_{d_k, \Sc \bksl \{k\}}    \right)_{i=1}^t$.
Equivalently, symmetric packet exchange has three properties:
\begin{itemize}
    \item[1)] \tit{Transmission balance:} all users in each multicast group transmit the same amount of data;
    \item[2)] \tit{Exhaustive coverage:} \mtcst delivery is performed over all $\Sc \in \binom{[K]}{t+1}$;
    \item[3)] \tit{Uniform \txn strategy:} the same transmitter selection and \msg  generation rule is used for every multicast group. 
\end{itemize}

\end{defn}

The following example illustrates why JCM needs the second layer of \sbp to 
\achv $\Rjcm$.

\begin{example}[JCM Scheme,\cite{ji2016fundamental}]
\label{example: necessity of FS}
Consider $K=3$ and $t=2$.  
\begin{figure}[t]
\centering
\begin{subfigure}[b]{0.44\textwidth}
\centering
\includegraphics[width=\linewidth]{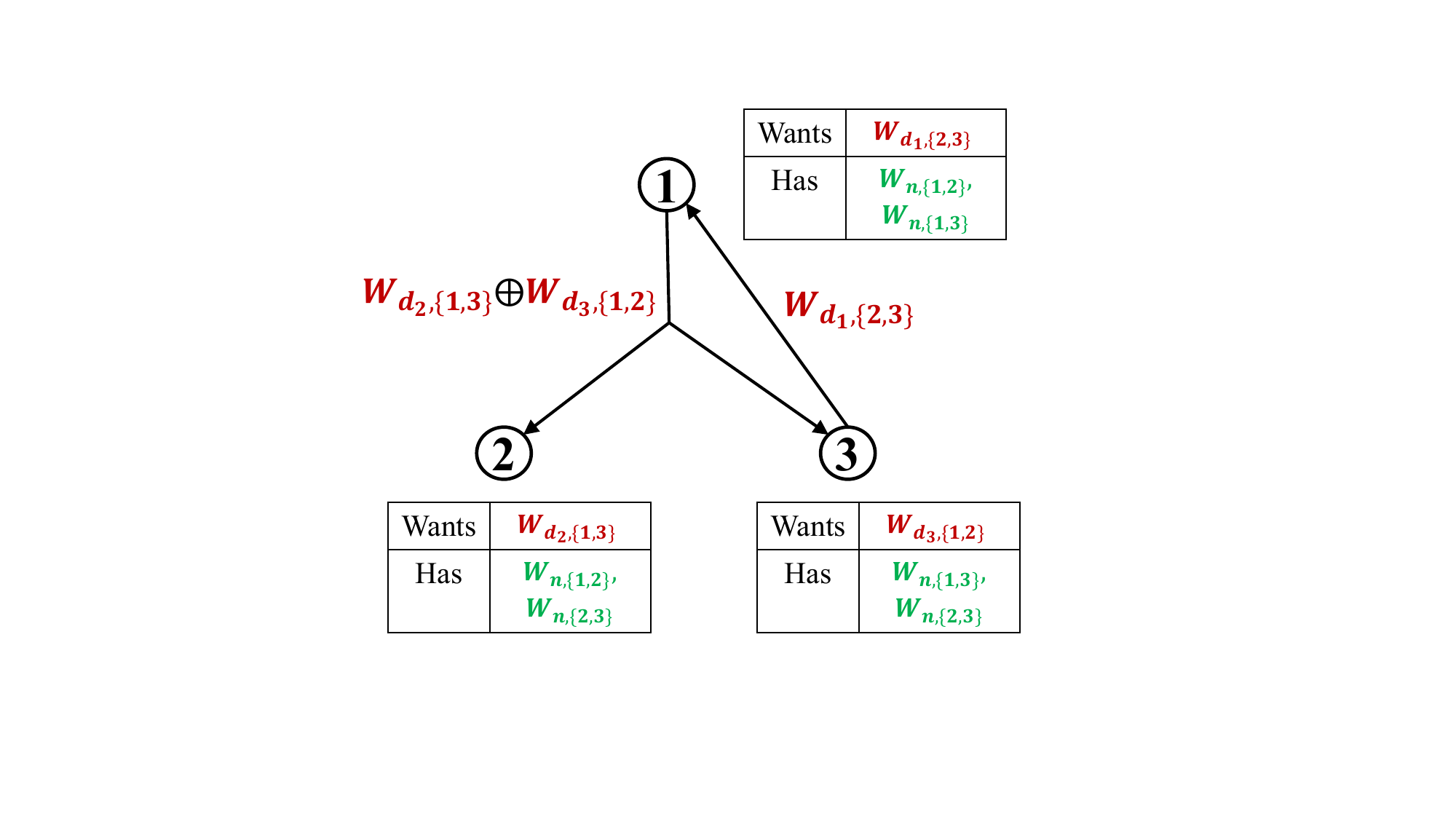}
\caption{Delivery without further splitting, $R=2/3$.}
\label{fig: fig_FS_necessity_1}
\end{subfigure}
\begin{subfigure}[b]{0.5\textwidth}
\centering
\includegraphics[width=\linewidth]{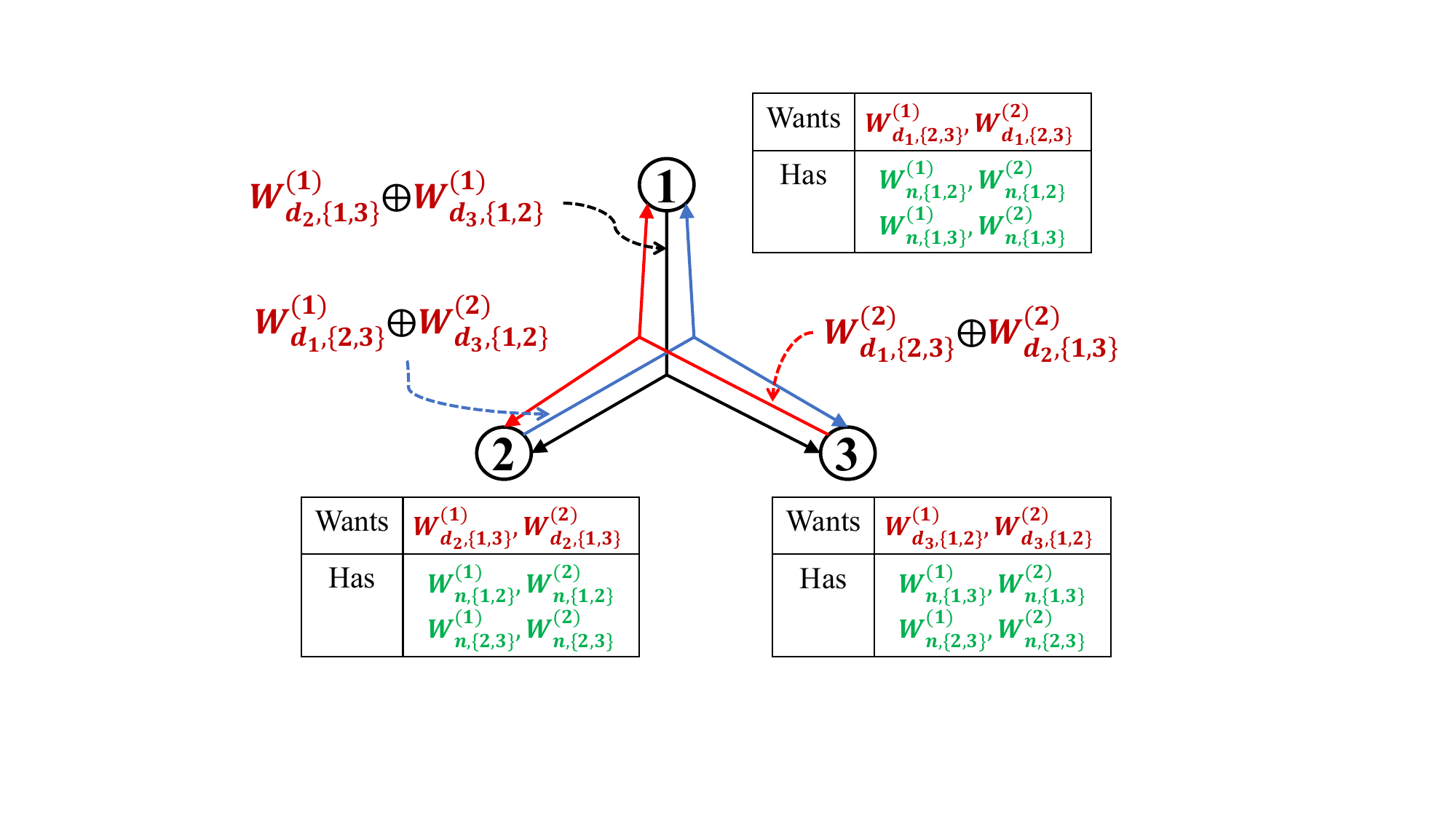}
\caption{Delivery with further splitting, $R=1/2$.}
\label{fig: fig_FS_necessity_2}
\end{subfigure}
\caption{Necessity of further splitting of subfiles into packets. The left  figure \achvs a rate of  $2/3$ with $F=3$ using only subfiles, whereas the right figure (\jsch) \achvs a smaller rate $1/2$ with $F=6$.} 
\label{fig: fig_FS_necessity}
\end{figure}
Each file is first split into 3 subfiles $W_{n}=\{W_{n,\{1,2\}},W_{n,\{1,3\}}  ,W_{n,\{2,3\}}\},\forall n$, and there is only  one \mgrp $\Sc=\{1,2,3\}$.
If \deli is performed directly at the subfile level, one message must be useful to only one user, yielding rate $R=2/3$ as in Fig.~\ref{fig: fig_FS_necessity_1}. 
By contrast, JCM further splits each subfile into two packets,
$W_{n,\Tc}=\big(W_{n,\Tc}^{(1)},W_{n,\Tc}^{(2)}\big)$, and sends
\begin{align}
Y^1_{\{2,3\}}  &= W_{d_2,\{1,3\}}^{(1)}  \oplus W_{d_3,\{1,2\}}^{(1)},\notag \\
Y^2_{\{1,3\}} &= W_{d_1,\{2,3\}}^{(1)} \oplus W_{d_3,\{1,2\}}^{(2)},\notag \\
Y^3_{\{1,2\}} & = W_{d_1,\{2,3\}}^{(2)} \oplus W_{d_2,\{1,3\}}^{(2)}.
\end{align}
Each \msg is useful to exactly two users, so the optimal DoF condition is satisfied and the rate becomes $\Rjcm=1/2$ (see Fig. \ref{fig: fig_FS_necessity_2}). Thus, the second subpacketization layer in JCM is needed to enable symmetric packet exchange and preserve the optimal rate.
\hfill $\lozenge$
\end{example}

\Ex \ref{example: necessity of FS} shows that symmetric subpacketization and  packet exchange are sufficient to achieve the optimal rate. However, they are not always necessary. The key requirement is  \tit{not symmetry itself, but that each transmitted bit remains simultaneously useful to $t$ users}. This observation motivates a more general design space in which both file splitting and packet exchange may be asymmetric while the optimal DoF condition is still preserved.

\begin{defn}[\Asym \Sbp]
\label{def: asymmetric subpacketization} 
\Asym \sbp allows the subfile labels to cover only a subset $\mathfrak{T} \subseteq \binom{[K]}{t}$ (non-exhaustive coverage),  thereby excluding some JCM subfiles, and allows the FS factor  $\alpha_\Tc$ to vary across subfiles (nonuniform further splitting).
\end{defn}

\begin{defn}[\Asym \Pkt Exchange]
\label{def: asymmetric data exchange}
Similarly, asymmetric packet exchange allows the delivery phase to deviate from the JCM rule in three ways:
\begin{itemize}
    \item[1)]\tit{Transmission imbalance:} users in the same multicast group may transmit different amounts of data;
    \item[2)]\tit{Non-exhaustive coverage:} only a subset $\mathfrak{S} \subseteq \binom{[K]}{t+1}  $ of multicast groups may be served;
    \item[3)]\tit{Nonuniform transmission strategy:} different multicast groups may use different transmitter selection and coding strategies.
\end{itemize}
\end{defn}

The JCM \sbp and \deli scheme can be  recovered from  Definitions  \ref{def: asymmetric subpacketization} and \ref{def: asymmetric data exchange} by choosing $ \mathfrak{T}=\binom{[K]}{t}$, $\alpha_\Tc=t ,\forall  \Tc \in\binom{[K]}{t}      $, $\mathfrak{S}=\binom{[K]}{t+1}$, and letting every user in $\Sc \in \binom{[K]}{t+1} $ transmit according to the same rule.

\subsection{PT Design: A Quick Example}
\label{subsec: a first look at PT, problem description}

We next present an illustrative example that showcases the key ideas behind the PT design. The formal description is deferred to Section~\ref{section: PT design framework}.

\begin{example}
\label{example: (4,2,1) PT first look}
Consider $(K,N,M)=(4,2,1)$, so that $t=2$. Let the two files be $A$ and $B$, and partition the users into two groups
$\Qc_1=\{1,2\},\Qc_2=\{3,4\}$. Instead of using all $\binom{4}{2}=6$ JCM  labels,  we keep only the cross-group labels 
$ \mathfrak{T}=\{\{1,3  \},\{1,4   \},\{2,3   \},\{2,4 \} \}$. Thus each file is split into  $4$ \sbfs, 
\begin{align}
\label{eq: file splitting, example (4,2,1) PT first look}
 A & =\left\{A_{\{1,3\}}, A_{\{1,4\}},A_{\{2,3\}},A_{\{2,4\}}\right\} ,\notag\\
 B &=\left\{B_{\{1,3\}}, B_{\{1,4\}},B_{\{2,3\}},B_{\{2,4\}}\right\},
\end{align}
and no further splitting is needed.
\User{k} caches all subfiles whose label contains $k$, for both files. 
Suppose the demand vector is $\dv=(1,1,1,2)$. 
The delivery \schm is given in Table~\ref{table: coded msg, example 1, PT first look}: 
in each multicast group $\Sc=\{k\}\cup \Rc_k$, only user $k$ transmits to the two users in the other group.
\begin{table*}[t]
\caption{Delivery Scheme of Example~\ref{example: (4,2,1) PT first look}}
\centering
\renewcommand{\arraystretch}{0.8}  
\small \begin{tabular}{|c"c|c|c|} 
\hline 
\textrm{Tx $k$} & \textrm{Rxs} $\Rc_k$  & $Y^k_{\Rc_k}$   &  $\Sc\eqdef \{k\} \cup \Rc_k   $  \\
 \thickhline
$1$ & $\{3,4\}$  & $A_{\{1,4\}} \oplus B_{\{1,3\}}$  & $\{1,3,4\}$ \\
 \hline
 $2$ & $\{3,4\}$  & $A_{\{2,4\}} \oplus B_{\{2,3\}}$ & $\{2,3,4\}$ \\
 \hline
$ 3$ & $\{1,2\}$  &$A_{\{1,3\}} \oplus A_{\{2,3\}}$   & $\{1,2,3\}$\\
 \hline
$ 4$ & $\{1,2\}$  & $A_{\{1,4\}} \oplus A_{\{2,4\}}$  & $\{1,2,4\}$ \\
 \hline
\end{tabular}
\label{table: coded msg, example 1, PT first look}
\end{table*}
Each user can recover its desired file. For example, \user{1} has cached 
$A_{\{1,3\}}$ and $A_{\{1,4\}}$, and decodes $A_{\{2,3\}}$ and $A_{\{2,4\}}$ from the transmissions of users  $3$ and $4$, \resp. Hence, \user{1} recovers file $A$. 
The \achvd rate is $R=1= \Rjcm$, while the \sbp is $\Fpt=4$, which is one third of the JCM \sbp $\Fjcm=2\binom{4}{2}=12$.

\Ex \ref{example: (4,2,1) PT first look} highlights the two PT reduction mechanisms in their simplest form:
type-$(2,0)$ subfiles are excluded, while retained type-$(1,1)$ subfiles require no further splitting.
Thus, the same optimal DoF condition is preserved with fewer subfiles and smaller FS factors.
\hfill $\lozenge$
\end{example}

\section{Main Results}
\label{section: main results}

The main results are given in three theorems, each providing a rate-optimal PT design with reduced subpacketization relative to the JCM scheme. 
Let  $ t \eqdef \frac{KM}{N}$, $\bar{t}\eqdef K-t=K(1-\mu)$, and $\mu \eqdef \frac{M}{N}$. 
\Thm~\ref{thm: order reduction}  achieves an order-wise reduction when $K$ and 
$ \tbar$ are even and 
$ \tbar \le K/2$, i.e., in the large-memory regime $\mu \ge 1/2$. 
\Thm~\ref{thm: constant reduction} guarantees at least a half reduction for all memory regimes when $K$ and $t$ are even. 
\Thm \ref{thm: PT design with only subfile saving gain} further provides a constant-factor reduction for a broad class of parameters $K=mq$ and 
$ \mu \le \min \big\{ 1/q,  {q}/{K} \big\}-{1}/{K} $ in the small-memory regime. 
Table~\ref{tab:summary of results} compares the applicable regimes, the resulting bounds on $\Fpt/\Fjcm$, and the corresponding reduction orders.

\begin{table}[t]
\centering
\caption{Comparison of Theorems}
\begin{tabular}{|c"c|c|c|}
\hline
 & \tbf{\Thm 1}    & \tbf{\Thm 2}   & \tbf{\Thm 3}   \\
 \hline  
 $K$    & even  &  even &  even \emph{or} odd    \\
 \hline
$\bar{t}\, (t)$ & even $(\bar{t} \le K/2)$ & even & even \emph{or} odd  \\ 
\hline 
Memory size  & large, $\mu \ge 1/2$ & full range, $0<\mu <1$ & small, $\mu \le \min \big\{   \frac{1}{q},  \frac{q}{K} \big\}-\frac{1}{K}  $  \\
\hline
${\Fpt}/{\Fjcm}$   & $\le \min\big\{ \frac{1}{t}\prod_{i=1}^{\bar{t}/2}(2i-1),1 \big\}     $   & $\le  \frac{1}{2}\left(1-\frac{1}{2^{t-1}} \right)  $ & $ \le    1-
\frac{ m\prod_{i=0}^{t-1}\left(q-i\right)! }
{\prod_{i=0}^{t-1}(K-i)}$  \\ 
\hline 
 Gain  &  order-wise, $ \Theta(1/K)$   & constant-factor, $ \le 1/2$ & constant-factor, $<1$  \\
\hline
\end{tabular}
\label{tab:summary of results}
\end{table}


\subsection{Main \Thms}
\label{subsec:main thms, main results}

\begin{theorem}[Order-Wise Reduction]
\label{thm: order reduction}
Suppose that
$K$ and $\overline{t}$ are even, with $K\ge 4$ and $\bar{t}\le K/2$.
Then the D2D coded caching rate $R=N/M-1$  is achievable with \sbp $\Fpt$ satisfying
\be 
\label{eq: thm 1}
\frac{\Fpt}{\Fjcm} \le \min \left\{\frac{1}{t}\prod_{i=1}^{\overline{t}/2}(2i-1),1\right\}.
\ee
Moreover, the ratio ${F_{\rm PT}}/{F_{\rm JCM}}$
vanishes as $K\to\infty$ if $\mu = 1-O\big({\log_2\log_2 K}/{K}\big)$.
\end{theorem}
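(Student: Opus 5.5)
The plan is to build an explicit PT scheme from a pairing of users and to bound its subpacketization type by type. The two ingredients are a cardinality count that produces the double factorial $(\bar t-1)!!=\prod_{i=1}^{\bar t/2}(2i-1)$, and the trivial fallback to JCM, which gives the $\min\{\cdot,1\}$.

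\emph{The $\min\{\cdot,1\}$ and the asymptotic claim.} Whenever $\frac1t\prod_{i=1}^{\bar t/2}(2i-1)\ge 1$, I simply use the JCM scheme, so $\Fpt/\Fjcm\le 1$ holds trivially. For the vanishing claim, I use $(\bar t-1)!!\le \bar t^{\,\bar t/2}$. If $\mu=1-O(\log_2\log_2 K/K)$, then $\bar t=K(1-\mu)=O(\log_2\log_2K)$. Hence
\[
\log_2\big((\bar t-1)!!\big)=O\big(\log_2\log_2K\cdot\log_2\log_2\log_2K\big)=o(\log_2K).
\]
It follows that $(\bar t-1)!!=K^{o(1)}$. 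On the other hand $t=K-\bar t\ge K/2$, so the ratio is $O(K^{o(1)-1})\to0$.

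\emph{Construction.} Since $K$ is even, I partition the users into $K/2$ groups $\Qc_1,\dots,\Qc_{K/2}$ of size two; this generalizes Example~\ref{example: (4,2,1) PT first look}. It is more convenient to describe a subfile label $\Tc$ by its complement $\bar\Tc=[K]\setminus\Tc$, which has even size $\bar t\le K/2$. The type of $\Tc$ records how $\bar\Tc$ meets the pairs, namely how many pairs lie entirely inside $\bar\Tc$ versus how many are split. A multicast group $\Sc$ of size $t+1$ has a complement of odd size $\bar t-1$, and its type is defined analogously. A receiver $k\in\Sc$ requests the subfile with complement $(\Sc^c)\cup\{k\}$.

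For each multicast-group type I select transmitters so that every coded message is useful to all $t$ other members, as required by the optimal DoF condition of Remark~\ref{rmk:optimal dof}. The selection rule is that only users whose own partner lies in $\Sc^c$, or more generally a designated subset determined by the type, transmit. This rule induces local FS requirements on each subfile type. I then coordinate these requirements by taking the vector LCM over multicast-group types, which yields $\alphaglobal$. The resulting subpacketization is $\Fpt=\sum_{\vv}\alphaglobalnb(\vv)F(\vv)$, and I must check decodability and the counter bookkeeping as in the JCM delivery \eqref{eq: JCM coded message, JCM delivery}.

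\emph{Counting.} The target bound is $\Fpt\le(\bar t-1)!!\binom{K}{t}$. Note that $(\bar t-1)!!$ is exactly the number of perfect matchings of a $\bar t$-set. I therefore expect the argument to show that for every subfile type the final factor satisfies $\alphaglobalnb(\vv)\le(\bar t-1)!!$, or alternatively to charge each packet to a pair consisting of a $t$-set $\Tc$ and a perfect matching of $\bar\Tc$. In either form the count gives $\Fpt\le(\bar t-1)!!\binom Kt$, and dividing by $\Fjcm=t\binom Kt$ gives \eqref{eq: thm 1}.

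\emph{Main obstacle.} The hardest step is the global consistency of the LCM step. Transmitter choices in different multicast-group types impose different divisibility requirements on the same subfile type, and the LCM can blow up. I would first try to choose the transmitters so that every local requirement is a divisor of the odd numbers $1,3,\dots,\bar t-1$. The LCM is then bounded by their product, which yields precisely $\prod_{i=1}^{\bar t/2}(2i-1)$. To make this work I would proceed by induction on $\bar t/2$: removing one pair from the complement should reduce the problem to the case $\bar t-2$ and introduce one new odd factor $\bar t-1$.
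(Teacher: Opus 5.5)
\paragraph{Verdict: genuine gap.} Your construction matches the paper's: users are grouped into pairs, $\qv=(\underline{2}_{K/2})$, and the transmitters of $\Sc$ are exactly the users whose partner lies in $\Sc^c$. Your handling of the $\min\{\cdot,1\}$ is also fine: you fall back to JCM, whereas the paper keeps the single-transmitter choice only in the first group type to get a strict inequality. Your vanishing argument via $(\tbar-1)!!\le \tbar^{\,\tbar/2}=K^{o(1)}$ and $t\ge K/2$ is correct, and more general than the paper's appendix, which only treats $K=2^{2^k}$. The gap is the central step, the bound $\max_{\vv}\alphaglobalnb(\vv)\le(\tbar-1)!!$. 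You never compute the local FS vectors or their vector LCM, and the mechanism you propose for it cannot work.

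\paragraph{Why the odd-divisor plan fails.} Write $\tbar=2r$, and say a pair is split by a set if it meets the set in exactly one user. A group $\Sc$ with $2i-1$ split pairs ($i\in[r]$) has $|\Utx|=2i-1$. A transmitter $k$ requests $\Sc\setminus\{k\}$, which has $2i-2$ split pairs; a non-transmitter requests a set with $2i$ split pairs. By \eqref{eq: FS factor} the local factors are therefore $2i-2$ and $2i-1$.
\begin{itemize}
\item Even local requirements are unavoidable. Since $\Utx$ must be a union of unique sets, for $i\ge 2$ every admissible choice yields either a pair $(u-1,u)$ with $u\ge 2$, or $(t,t)$ with $t$ even. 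Either way some positive local factor is even, so your plan that every requirement divides one of $1,3,\dots,\tbar-1$ is impossible.
\item The vector LCM is not an entrywise LCM. Each row is rescaled by a single $z$, because the delivery needs one common $z(\sv)$ for all types in $\Ic$. So ``LCM $\le$ product of the local factors'' is not the relevant bound. For $r=3$ the entrywise LCMs are $(0,2,12,5)$, whereas the minimal consistent vector is $(0,8,12,15)$.
\end{itemize}

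\paragraph{What actually gives the bound.} The product $\prod_{i=1}^{r}(2i-1)$ comes from the chain structure $\Ic_i=\{\vv_i,\vv_{i+1}\}$, where $\vv_j$ is the type with $2(j-1)$ split pairs.
\begin{itemize}
\item Type $\vv_1$ occurs only in $\sv_1$, with local factor $0$, so it is excluded.
\item Matching $z_i(2i-1)=z_{i+1}\cdot 2i$ on the shared entries forces $\alphaglobalnb(\vv_{j+1})/\alphaglobalnb(\vv_j)=(2j-1)/(2j-2)$ for $j\ge 2$.
\item One integer solution is $\alphaglobalnb(\vv_j)=(2j-3)!!\prod_{l=j}^{r}(2l-2)$ for $2\le j\le r$, together with $\alphaglobalnb(\vv_{r+1})=(2r-1)!!$.
\end{itemize}
The global factors therefore increase along the chain. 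Since the types partition $\binom{[K]}{t}$, this gives $\Fpt\le\alphaglobalnb(\vv_{r+1})\binom{K}{t}=(\tbar-1)!!\binom{K}{t}$. This is also where $\tbar\le K/2$ enters: the last type $\vv_{r+1}$ needs $2r$ split pairs, i.e.\ $2r\le K/2$.

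Your perfect-matching charging and the induction on $\tbar/2$ play no role here and are not substantiated. Without the chain computation, the first inequality is not established, and your vanishing claim, which rests on it, is not either.
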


The proof of \Thm \ref{thm: order reduction} is given in Section \ref{section: gen ach schm, thm 1}. A comparison of the actual ratio $\Fpt/\Fjcm$ with the upper bound $ \frac{1}{t}\prod_{i=1}^{\overline{t}/2}(2i-1)  $ for various $\tbar$ is shown in Fig.~\ref{fig: subpacketization ratio, thm 1}.  
It  can be seen that the actual ratios and the derived upper bounds match closely.
We \highlt the implications of \Thm \ref{thm: order reduction} \af. 
\begin{figure}[t]
    \centering
    \includegraphics[width = 0.5\textwidth]{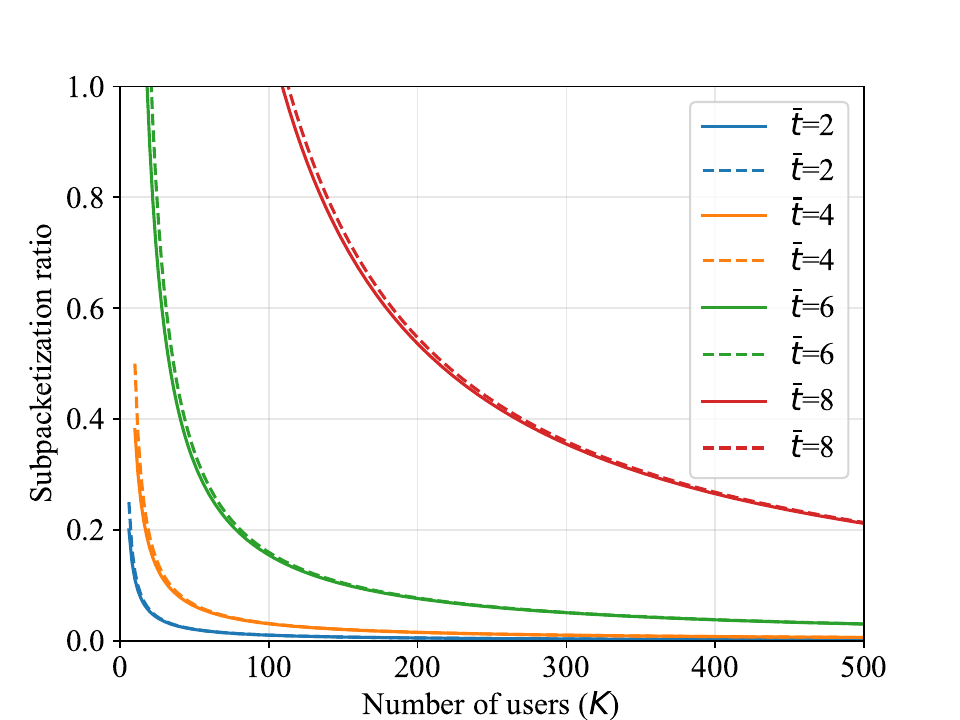}
    \caption{Actual \sbp ratio (solid lines) versus the upper bound (dotted lines) for $\tbar\in \{2,4,6,8\}$.}
    \label{fig: subpacketization ratio, thm 1}
\end{figure}

\begin{enumerate}
    \item \tit{Order-Wise Reduction in Large-\Memo Regime:} 
    For fixed $\tbar=K(1-\mu)$, \Thm \ref{thm: order reduction}  gives $ \Fpt/\Fjcm= \Theta(1/K)$, and thus the \sbp ratio vanishes as $K$ grows. Since fixed $\tbar$ implies $\mu =1- \Theta(1/K)$, this order-wise gain occurs in the large-memory regime $\mu \in  [1/2, 1]$. 

    \item  \tit{Mechanism:} The construction partitions users into $K/2$ groups of size two, \ie, $\qv=(\underline{2}_{K/2})$,  and selects only a subset of structurally distinct users as transmitters. \Ip, in the \ith \mgrp type
    \be
    \label{eq: multicast group, implication of thm 1}
    \sv_i=\big( \underline{2}_{K/2-(\overline{t}/2+i)+1}, \underline{1}^{\dagger}_{2i-1},\underline{0}_{\overline{t}/2-i}\big),\; i\in \big[\overline{t}/2\big]
    \ee
    the $2i-1$ users at positions marked by dagger $\dagger$ are chosen as \txs. Since the local FS factor for each \invod subfile type in $\sv_i$ is proportional to the number of \txs therein (see (\ref{eq: FS factor})), it is bounded by  $2i-1\le  \tbar-1$. Hence, for fixed $\tbar$,  the FS factors remain \tit{constant} (\indept of $K$). In contrast, the \jsch has a uniform FS factor $\alphajcm=t=\Theta(K)$ linear in $K$. Since  the overall \sbp is determined by the \agg contribution of  \sbfs and  their FS factors (see \eqref{eq:Fpt,intro,main idea}),  the \asym  \tx selection in \eqref{eq: multicast group, implication of thm 1} results in an order-wise reduction compared to JCM. 
\end{enumerate}

\begin{theorem}[More-Than-Half Reduction]
\label{thm: constant reduction}
Suppose that both $K$ and $t$ are even.  The D2D coded caching rate $R =N/M -1$  is achievable with \sbp $ \Fpt$ satisfying  
\be 
\label{eq: thm 2}
\Fpt   \le  \frac{1}{2} \cdot \Fjcm  .  
\ee 
\end{theorem}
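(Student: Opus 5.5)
The plan is to instantiate the PT framework with the finest even grouping, $\qv=(\underline{2}_{K/2})$, that is, $K/2$ disjoint pairs of users. Under this grouping a subfile label $\Tc\in\binom{[K]}{t}$ has type $\vv$ determined by the number $p$ of complete pairs it contains; the remaining $t-2p$ users of $\Tc$ are singletons. Similarly, a multicast group $\Sc\in\binom{[K]}{t+1}$ has type determined by its number $p'$ of complete pairs. Because $t+1$ is odd, every $\Sc$ contains at least one singleton. We then have $F(\vv)=\binom{K/2}{p}\binom{K/2-p}{t-2p}2^{t-2p}$, and by \eqref{eq:Fpt,intro,main idea} the goal is a feasible global FS vector $\alphaglobal$ with $\sum_{\vv}\alphaglobalnb(\vv)F(\vv)\le \frac{t}{2}\binom{K}{t}$. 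The target in Table~\ref{tab:summary of results}, $\frac12\bigl(1-2^{-(t-1)}\bigr)$, suggests that the correction term comes from one extreme type, either excluded or split more finely.

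\textbf{Step 1: transmitter selection and local FS factors.} In each $\Sc$, the transmitters will be the singleton users of $\Sc$, i.e., the users whose partner is absent; possibly some pair-members are added in a type-dependent way. Each transmitter sends one XOR of $t$ packets to the other $t$ users of $\Sc$. A receiver $k\in\Sc$ needs $W_{d_k,\Sc\setminus\{k\}}$, so the local FS factor of $\Sc\setminus\{k\}$ equals the number of transmitters in $\Sc\setminus\{k\}$. Every transmitted bit is then useful to $t$ users, so Remark~\ref{rmk:optimal dof} gives $R=\Rjcm=N/M-1$.

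We then record the two ways a type-$p$ subfile $\Tc$ (with $u=t-2p$ singletons) arises. If $k$ is the partner of a singleton of $\Tc$, the local factor is $u-1$. If $k$ is new, so that $k$ is a singleton of $\Sc$, the local factor is $u$.

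\textbf{Step 2: global consistency.} These local requirements conflict, since $u-1\neq u$. The vector-LCM step of the framework has to reconcile them. The plan is to tune the selection per multicast group type so that, for each retained subfile type, all generating group types demand the same factor, or factors whose LCM stays at most $t/2$ or so. For example, in groups where $k$ is new, one could additionally let one pair member transmit, or drop a singleton by a fixed rule. Types for which no consistent choice exists are excluded, i.e., get $\alphaglobalnb(\vv)=0$. For this exclusion we must also check that every excluded $\Tc$ is genuinely unneeded for decoding, which is the subfile-saving mechanism of Example~\ref{example: (4,2,1) PT first look}.

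\textbf{Step 3: counting.} With $\alphaglobalnb$ set to roughly $t-2p$ (or its reconciled value) for the retained types, we bound $\sum_p \alphaglobalnb(p)F(p)$ against $\frac{t}{2}\binom{K}{t}$. The route is a generating-function identity in the pair decomposition,
\[
\sum_p (t-2p)\binom{K/2}{p}\binom{K/2-p}{t-2p}2^{t-2p}=K\binom{K-2}{t-1}\le \tfrac{t}{2}\cdot\tfrac{2(K-t)}{K-1}\binom{K}{t}\cdot\tfrac{1}{?},
\]
which is obtained by differentiating $(1+2x+x^2)^{K/2}$. The $2^{-(t-1)}$ correction should come from the fully singleton type $p=0$ or the full-pair type $p=t/2$. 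Both even parameters, $K$ and $t$, are used here: $K$ even makes the pairing exact, and $t$ even makes $p=t/2$ possible.

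The main obstacle is Step 2. It requires finding a transmitter rule whose induced local FS factors agree across all multicast group types containing a given subfile type while keeping every factor at most about $t/2$. I would first work out $t=2,4$ by hand and then guess the general per-type rule.
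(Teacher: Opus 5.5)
The gap is not only that Step~2 is left open. The pair grouping $\qv=(\underline{2}_{K/2})$ cannot deliver $F_{\rm PT}\le\frac{1}{2}F_{\rm JCM}$ over the range the theorem covers.

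With the singletons as transmitters you have rebuilt exactly the design of Theorem~\ref{thm: order reduction}. There, the mismatch you identified ($u$ versus $u-1$) compounds along the chain of multicast group types under the vector LCM. The result is global factors such as $\prod_{i=1}^{\bar t/2}(2i-1)$, far above $t/2$ unless $\bar t$ is small.

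Other selections do not repair this. Transmitters must be unions of unique sets, so ``one pair member'' is not admissible; only the paired users, the singletons, or everyone can be chosen. Because $t$ is even, the factor a subfile type receives from the group type with one more complete pair is odd unless everyone there transmits. The factor from the other group type is always even, so the two can only agree at the value $t$.

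More decisively, when $K/2\ge t+1$ the group type $(\underline{1}_{t+1})$ is a single unique set. All its users must transmit, so the all-singleton subfile type $(\underline{1}_{t})$ gets a global factor that is a multiple of $t$. This type cannot be excluded without breaking the DoF condition: any message sent in a type-$(2,\underline{1}_{t-1})$ group would reach a paired user who then needs nothing from that group. It also makes up a fraction $2^t\binom{K/2}{t}/\binom{K}{t}\to 1$ of all subfiles for fixed $t$. For instance, with $K=8$, $t=2$ the best this grouping allows is $F_{\rm PT}=48$, while $\frac{1}{2}F_{\rm JCM}=28$.

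Your own Step~3 identity already shows this. Even the idealized assignment $\alphaglobalnb=u$ gives $F_{\rm PT}/F_{\rm JCM}=\frac{K-t}{K-1}$, which exceeds $\frac{1}{2}$ whenever $2t<K+1$. The factor $\frac{1}{?}$ is hiding exactly this.

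The missing idea is to use the \emph{coarsest} even grouping, $\qv=(K/2,K/2)$, and let the users on the \emph{smaller} side of each multicast group transmit. Since $t+1$ is odd, every $\Sc$ has type $(a,b)$ with $a>b$ strictly. The smaller side is therefore a unique set and the selection is admissible. This is where evenness of $t$ really enters, not through the existence of the type with $t/2$ pairs: for odd $t$ the type $(\frac{t+1}{2},\frac{t+1}{2})$ forces everyone to transmit. Evenness of $K$ is needed for the equal split.

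The local factors then align exactly. A subfile of type $(t-j,j)$, $j\le t/2$, is requested in one of two ways:
\begin{itemize}
\item in a group of type $(t-j+1,j)$ by a user on the larger side, who sees all $j$ transmitters;
\item when $j\le t/2-1$, in a group of type $(t-j,j+1)$ by one of the $j+1$ transmitters, who sees the other $j$.
\end{itemize}
Both give factor $j$, so the vector LCM is trivial. For $t\le K/2-1$ this yields $\alphaglobal=(0,1,\dots,t/2)$. Type $(t,0)$ gets factor $0$ because in a type-$(t,1)$ group its requester is the lone transmitter, so it is excluded. For $t\ge K/2$ it yields $\alphaglobal=(t-K/2,\dots,t/2)$.

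Every entry is at most $t/2$, and $\sum_{\vv}F(\vv)=\binom{K}{t}$. This gives $F_{\rm PT}\le\frac{t}{2}\binom{K}{t}=\frac{1}{2}F_{\rm JCM}$ at once, with no generating-function count. The $2^{-(t-1)}$ refinement in the summary table is not part of the statement.
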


The proof of Theorem~\ref{thm: constant reduction}  is given  in Section~\ref{section: gen ach schm, thm 2}. We \highlt its implications \af.

\begin{enumerate}
    \item \tit{More-Than-Half Reduction:}
    For even $K$ and $t$, the two-group user grouping $\qv=(K/2, K/2)$ achieves a global FS vector whose largest entry is at most $t/2$: 
    \begin{eqnarray}
    \label{eq: global FS vector, thm 2 implication}
    \alphaglobal  =\left\{\begin{array}{ll}
    \lef(0,1,2,\cdots,t/2  \rig),& \textrm{if}\; t\le K/2   -1\\
    \lef(t- K/2   ,t- K/2 +1,\cdots, t/2   \rig), &\textrm{if} \; t\ge  K/2 
    \end{array}\right.
    \end{eqnarray}
    Therefore, the total PT \sbp is at most half of the JCM \sbp.  
    When $t\le K/2-1$, the first subfile type is also excluded, yielding additional subfile saving.

    \item \tit{Complementarity with \Thm \ref{thm: order reduction}:} 
    Theorem \ref{thm: order reduction} gives an order-wise reduction in the \tit{large}-memory regime under parity conditions on $K$ and $\tbar$.  Theorem \ref{thm: constant reduction} instead guarantees a constant-factor reduction over the full memory range $0 < \mu < 1$ whenever $K$ and $t$ are even. Thus, different PT groupings and transmitter selections are useful in different parameter regimes.  
\end{enumerate}

\begin{theorem}[Constant-Factor Reduction]
\label{thm: PT design with only subfile saving gain}
Suppose $K=mq$, where $m \ge t+1$ and $ q\ge t+1$. The D2D coded caching  rate $R=N/M-1$   is \achvb with \sbp $\Fpt$ satisfying
\be
\label{eq:thm,PT design with only subfile saving gain}
\frac{F_{\rm PT}}{F_{\rm JCM}}
 = 1-
 \frac{ m\prod_{i=0}^{t-1}\left(q-i\right)! }
 {\prod_{i=0}^{t-1}(K-i)}<1, \quad
 \lim_{q\to \infty} \frac{F_{\rm PT}}{F_{\rm JCM}} = 1 -  \frac{1}{m^{t-1}}.
\ee
\end{theorem}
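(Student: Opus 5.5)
The approach is to use \emph{subfile saving only}: keep the uniform JCM further-splitting factor $t$ and exclude the subfile type whose label lies entirely inside one user group. Partition the $K=mq$ users into $m$ groups $\Qc_1,\dots,\Qc_m$ of size $q$, i.e., $\qv=(\underline{q}_m)$. Let $\mathfrak{T}$ be the set of all $\Tc\in\binom{[K]}{t}$ that are \emph{not} contained in a single group. Each retained subfile is split into $\alpha_\Tc=t$ packets, and excluded labels get $\alpha_\Tc=0$. Then
\[
\Fpt=t\Big(\tbinom{K}{t}-m\tbinom{q}{t}\Big),\qquad \frac{\Fpt}{\Fjcm}=1-\frac{m\binom{q}{t}}{\binom{K}{t}}=1-\frac{m\prod_{i=0}^{t-1}(q-i)}{\prod_{i=0}^{t-1}(K-i)}.
\]
I read the displayed expression in \eqref{eq:thm,PT design with only subfile saving gain} in this way, taking the factorial there as a typo for the falling product. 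Since $q\ge t$ and $m\ge 1$, this ratio is strictly below $1$. Letting $q\to\infty$ with $m,t$ fixed gives $m q^t/(mq)^t=1/m^{t-1}$, which is the stated limit. The remaining work is to check placement, delivery and rate.

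\emph{Placement.} User $k$ caches every retained subfile whose label contains $k$. All groups have equal size, so the retained family is invariant under permutations within groups and under permutations of the groups. This symmetry acts transitively on users, so every user lies in the same number of retained labels. Double counting, $\sum_{\Tc\in\mathfrak{T}}|\Tc|=t|\mathfrak{T}|$, then shows that each user stores a fraction $t/K=\mu$ of each file, so the memory constraint is met.

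\emph{Delivery.} For each $\Sc\in\binom{[K]}{t+1}$, user $k\in\Sc$ needs $W_{d_k,\Sc\setminus\{k\}}$ exactly when $\Sc\setminus\{k\}\in\mathfrak{T}$. Every retained subfile missing at user $k$ arises from exactly one $\Sc$, namely $\Tc\cup\{k\}$. I would split into three cases according to the group profile of $\Sc$.
\begin{itemize}
\item[(i)] $\Sc$ lies inside one group. Then no user of $\Sc$ needs anything, and $\Sc$ is skipped.
\item[(ii)] Exactly $t$ users of $\Sc$ lie in one group $\Qc_j$, and the remaining user $u$ lies in another group. Then $u$ needs nothing, while each of the $t$ users in $\Qc_j$ needs a retained subfile, since its label contains $u$. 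Here only $u$ transmits. It sends $t$ coded messages, the $i$-th being the XOR of the $i$-th packets of the $t$ requested subfiles. Each message is useful to $t$ users, and each receiver collects all $t$ of its packets.
\item[(iii)] At most $t-1$ users of $\Sc$ lie in any one group. Then all $t+1$ labels $\Sc\setminus\{k\}$ are retained, and the standard JCM symmetric exchange \eqref{eq: JCM coded message, JCM delivery} applies verbatim with $t$ packets per subfile.
\end{itemize}
Decodability follows because every interfering packet in a message is cached by the receiver; this is the usual JCM argument.

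\emph{Rate.} In every case each transmitted packet-sized message is useful to exactly $t$ users, and every needed packet is delivered exactly once. So the optimal DoF condition of Remark~\ref{rmk:optimal dof} holds. Each user is missing a fraction $1-\mu$ of its file, so the total load is $K(1-\mu)L/t$ bits, i.e., $R=K(1-\mu)/t=N/M-1$. The hypotheses $m,q\ge t+1$ guarantee that the cases and the type structure are nondegenerate. The main obstacle is case (ii): I must confirm that a single transmitter there still supplies exactly $t$ packets to each needy receiver, so that the factor $t$ is globally consistent across group types and no further vector-LCM adjustment is needed. In case (ii) this is immediate, because $u$ sends exactly $t$ messages.
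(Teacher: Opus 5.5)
Your construction is essentially the paper's own. It uses the grouping $\qv=(\underline{q}_m)$, excludes type-$(t,\underline{0}_{m-1})$ subfiles and keeps FS factor $t$ for every other type. Your case (ii) is the paper's single-transmitter choice $\sv_2=(t,1^{\dagger},\underline{0}_{m-2})$, whose local factor $1$ is lifted to $t$ by the vector LCM, and your case (iii) is its full-JCM exchange, giving $\Fpt=t\binom{K}{t}-mt\binom{q}{t}$. Your symmetry/double-counting check of the memory constraint is a valid substitute for the appendix count $F_i(\vv)=tF(\vv)/K$, and you are right that the factorial in the displayed ratio should be the falling product $\prod_{i=0}^{t-1}(q-i)$.

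One caveat, which the paper shares: the argument tacitly needs $t\ge 2$. For $t=1$ every label lies inside a single group, so $\mathfrak{T}=\emptyset$ and the scheme degenerates. Your remark that $m,q\ge t+1$ alone guarantees nondegeneracy is therefore not quite accurate.
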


The proof of \Thm  \ref{thm: PT design with only subfile saving gain} is given in  Section \ref{section: gen ach schm, thm 3}. We highlight its implications \af.

\begin{enumerate}
    \item \tit{Generic Constant-Factor Reduction via Subfile Exclusion:}
    For $K = mq$ with $m, q \ge t+1$, the PT design excludes type-$(t,0^{m-1})$ subfiles and keeps the JCM FS factor $t$ for all remaining types. Hence, the gain comes solely from subfile saving. The resulting \sbp ratio is strictly smaller than one, and approaches $1 - 1/m^{t-1}$ as $q$ grows.

    \item \tit{Role among the Three Theorems:}
    \Thm \ref{thm: PT design with only subfile saving gain} gives a weaker but more broadly applicable reduction. Unlike Theorems \ref{thm: order reduction} and \ref{thm: constant reduction}, it does not require both $K$ and $t$ to be even, and therefore complements them in parameter regimes where the stronger PT designs do not apply.
\end{enumerate}

\section{Packet Type-based Design Framework}
\label{section: PT design framework}

We now formalize the PT \frmwk. Unlike Section \ref{subsec: a first look at PT, problem description}, which illustrates the idea through a small example,
this section specifies the design workflow with  step-by-step exposition of its key components.
The resulting cache placement and delivery phases are also specified. 
For readability, the notation used in this section is summarized in Table~\ref{tab:notations}.
\begin{table}[t]
  \centering
  \caption{Summary of Notation}
  \label{tab:notations}
  \renewcommand{\arraystretch}{0.78}
  \begin{tabular}{| c | c " c | p{5.1cm}|}
    \thickhline
    Notation & Meaning  &     Notation & Meaning  \\
    \thickhline
    $K$ & $\#$ users  &     $\Ic  $ & \invod  \sbf  type set  \\
    \hline
    $N$ & $\#$ files  &   $ \nd(\qv) $ & $\#$ \usets in $\qv$ \\
    \hline
    $M$ & per-user \memo size  &    $ \hatnd(\sv) $ &  $\#$ \usets in $\sv$ \\
    \hline
    $ \mu \eqdef M/N $   & normalized  \memo size    &   $ \Dtx \subseteq \Sc $ & \Txs in multi.  \grp $\Sc$  \\
    \hline
    $ t \eqdef KM/N  $  & aggregate \memo size   & $ \alpha(\vv)$  & local FS  factor of type $\vv$ \\
    \hline
    $W_n$ & $\nth$  file   & $\alphaglobalnb(\vv)$    & \glb FS  factor of type $\vv$ \\
    \hline
    $W_{n,\Tc} $ & \sbf  & $F(\vv)$    & $\#$ type-$\vv$ \sbfs  \\
    \hline
    $W_{n,\Tc}^{(i)} $ &  $\ith$ \pkt  of \sbf  $W_{n,\Tc} $  & $F_i(\vv)$ &   $\#$ type-$\vv$ \sbfs cached by  each user in $\ith $ \uset of $\qv$ \\ 
    \hline
    $V $ & $\#$ \sbf types & $\Fm \eqdef [F(\vv_i)]_{i=1}^V $  &  \sbf  number vector \\ 
    \hline
    $S $ & $\#$ multicast \grp types  & $\Fm_i \eqdef [F_i(\vv_j)]_{j=1}^V $ & user cache vector \\  
    \hline
    $\qv $ & user  \grpg   & $\bm{\Delta}_i \eqdef \Fm_{i+1} - \Fm_{i}$  & $\ith$ cache \diffce vector \\
    \hline
    $\vv $ & \sbf type  & $ \bm{\Omega} \eqdef [\omega_{ij}]_{N_{\rm d}(\qv)\times V }  $    & \memoconst table \\
    \hline
    $\sv $ & multicast \grp types  & $\bm{\Lambda}\eqdef [ \alpha_{ij}  ]_{S\times V}$  & \fs  table \\
    \thickhline
  \end{tabular}
\end{table}

\subsection{PT Workflow}
\label{subsection: PT overview, PT framework}

\begin{figure}[ht]
    \centering
    \includegraphics[width=0.72\textwidth]{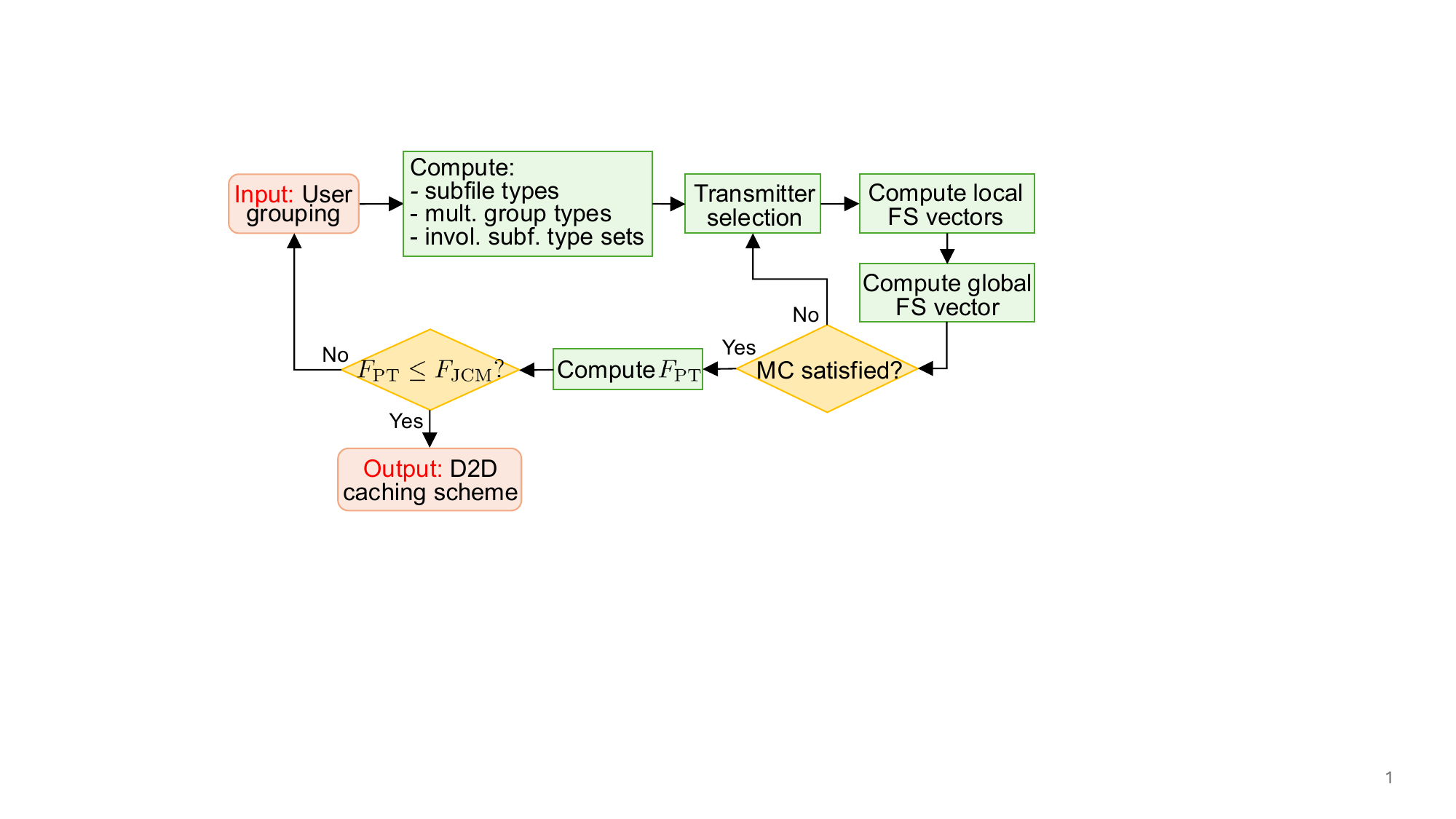}
    \caption{Workflow of the PT design framework.}
    \label{fig: PT flowchart}
\end{figure}

The PT framework follows a sequential design procedure, as illustrated in Fig. \ref{fig: PT flowchart}.
The steps include user grouping, calculation of subfile, packet, and multicast group types, transmitter selection, calculation of local and global FS vectors, and memory-constraint verification, as briefly explained below.

\subsubsection{User grouping \& subfile/\mgrp  types}
The starting point is a user grouping $\qv$, which partitions the  $K$ users into groups and thereby induces a geometric structure on the user set. This grouping determines the possible subfile/packet types and multicast group types. For each multicast group type $\sv_i$, the corresponding \tit{involved subfile type set} $\Ic_i$ is then identified; these are exactly the subfile types that participate in the coded multicast transmissions associated with groups of type  $\sv_i$.

\subsubsection{\Tx selection \& local FS factors}
Given these type structures, the next step is transmitter selection. For each multicast group type $\sv_i$, a \tit{subset} of users is selected as transmitters. This choice determines the local FS vector
\be
\label{eq:local FS vec, pt workflow}
\alphabm_i \eqdef [\alpha_i(\vv)]_{\vv \in \Ic_i} \in  \mathbb{Z}_{\ge 0}^{1\times |\Ic_i|}
\ee 
where $\alpha_i(\vv)$ is the number of \pkts \reqd for type-$\vv$ \sbfs when serving \mgrps of type-$\sv_i$.

\subsubsection{Global FS factors}
Since the same subfile type may appear in multiple multicast group types, the resulting local FS factors must be coordinated. This is done through the \tit{vector LCM operation}, which merges the local FS vectors into a globally consistent FS vector 
\be 
\alphaglobal=[\alphaglobalnb(\vv)]_{\vv\in \Vc}, \quad
\Vc \eqdef \bigcup_{i=1}^{\# \trm{ mult. group types}}  \Ic_i
\ee

\subsubsection{\Memoconst verification}
Once $\alphaglobal$ is obtained,
the memory constraint (MC) is verified under the corresponding file splitting and cache placement rule, where each user stores all packets of subfiles whose labels $\Tc$ contain that user. If MC is satisfied, the PT construction yields a valid D2D coded caching scheme: files are split according to $\alphaglobal$, cache placement is performed accordingly, and delivery follows the transmitter selections specified for all multicast group types.
The resulting \sbp level is 
\be
\Fpt=\sum_{\vv\in \Vc}\alphaglobal(\vv)F(\vv),
\ee 
where $F(\vv)$ denotes the number of type-$\vv$ subfiles.

\subsubsection{ILP formulation}
Finally, $\Fpt$ is compared with the JCM \sbp $\Fjcm=t\binom{K}{t}$. If $\Fpt \le \Fjcm  $, the design provides  a \sbp-efficient D2D  \coca \schm. Otherwise, one may refine the design by changing the user grouping and/or transmitter selections.
In this sense, the PT design problem can be viewed as an optimization over user groupings and transmitter selections, and can be formulated as an integer linear program (ILP).
Each feasible solution corresponds to a valid rate-optimal D2D coded caching scheme, while the objective is to minimize the resulting \sbp.

\subsection{Formal Description of PT Framework}
\label{subsection: PT framework, PT framework}

We first introduce two fundamental concepts, \ie, grouping over a set and the type of a set under a given grouping, which underpin the PT framework.

\begin{defn}[Grouping on a Set]
Given a set $\Ac$, a \emph{grouping} is a partition of $\Ac$ into $n$ non-empty, disjoint subsets $\Ac_1,\cdots,\Ac_n$.
It is called a $\qv$-\grpg, with 
\be
\qv\eqdef(q_1,\cdots,q_n), \; q_1\ge\cdots\ge q_n>0,
\ee 
if there exists a permutation $(i_1, \cdots, i_n)$ of $(1,\cdots,n)$ such that $q_k=|\Ac_{i_k}|,\forall k\in[n]$. \Iow, $\qv$ is the  ordered vector of subset sizes $\{|\Ac_i|\}_{i=1}^n$, independent of element assignments.
The grouping $\qv$ is \emph{equal} if  $q_i=|\Ac|/n,\forall i\in[n]$, and \tit{\uneq} \otws.
\end{defn}

\tit{Example:} For $\Ac=\{ 1,2,3,4 \}$, $\{\{ 1,2   \},\{ 3   \},\{ 4   \}    \}$ is a $(2,1,1)$-\grpg, while $\{\{ 1,2   \},\{ 3 ,4  \}    \}$ is a $(2,2)$-\grpg (equal). \hfill $\lozenge$

\begin{defn}[Type of a Set] 
Given an equal grouping $\qv=\left(|\Ac|/n,\cdots, |\Ac|/n\right)$ of $\Ac$ and a subset $\Bc\subseteq \Ac$, the \tit{type} of $\Bc$ is the vector 
\be 
\vv \eqdef(v_1,v_2,\cdots,v_n), 
\ee 
which is the non-increasing ordering  of $\left( \left| \Bc\cap \Ac_1 \right|, \cdots, \left| \Bc\cap \Ac_{n} \right|\right)$.
For brevity, we write $\vv=(v_1,\cdots,v_m)$ if $v_m>0$ and $v_i=0,\forall i> m$.
\end{defn}

\tit{Example:} For the $(2,2)$-grouping $\{\{1,2\},\{3,4\}\}$ of $\Ac=\{1,2,3,4\}$, singletons have type $(1,0)$, $\{1,3\}$ has type  $(1,1)$, and $\Ac$ itself has type $(2,2)$. \hfill $\lozenge$

\subsubsection{User Grouping}
\label{subsubsection: user grouping, PT framework}

The $K$ users are partitioned  into $m$ non-empty, disjoint groups $\Qc_1,\cdots,\Qc_m$ 
with sizes 
$
\qv\eqdef(q_1,\cdots,q_m), q_1\ge\cdots\ge q_m>0
$,
where $m $ and $  \{q_i\}_{i=1}^m  $ are design \params.
Let $\nd(\qv) $ denote the number of distinct values in $\qv$. Groups with the same size are aggregated into \tit{unique sets}: the $\ith$  \uset $\Uc_i $ contains  $\psi_i$ \grps of size  $\beta_i$, so that
$ 
|\Uc_i|=\psi_i\beta_i, \forall i \in [\nd(\qv) ],  \sum_{i=1}^{N_{\rm d}   (\qv)  }\psi_i\beta_i=K$,  and $\sum_{i=1}^{N_{\rm d}(\qv)}\psi_i=m
$.
Accordingly,
\be 
\label{eq: user grouping vector, PT framework}
\qv =\big(\underbrace{\beta_1,\cdots,\beta_1}_{\psi_1\textrm{ terms}},    \cdots,\underbrace{\beta_{N_{\rm d}(\qv)},\cdots,\beta_{N_{\rm d}(\qv)}}_{\psi_{N_{\rm d}(\qv)}\textrm{ terms}}\big),
\ee
or more compactly, 
$\qv =\big(\underline{\beta_1}_{\psi_1},\cdots,\underline{\beta_m}_{\psi_m}\big)$ with $\beta_1 \ge \cdots \ge \beta_{\psi_{N_{\rm d}(\qv)}}> 0$ and  $ \beta_i=0, \forall i> N_{\rm d}(\qv)  $.

\tit{Example:} For $K=7$ and $\qv=(3,2,1,1)$, we have
$(\beta_1,\psi_1)=(3,1),(\beta_2,\psi_2)=(2,1)$, and $(\beta_3,\psi_3)=(1,2)$. \hfill $\lozenge$

\subsubsection{Subfile Type} 
\label{subsubsection: subfile type, PT framework}
A subfile $W_{n,\Tc}$ is a portion of file $W_n$ cached exclusively by a user set  $\Tc \in \binom{[K]}{t}$. 
In two-layer \sbpzed D2D \coca, \sbfs are conceptual  objects and  are further split into packets for multicast delivery; cache placement is nevertheless performed at the subfile level, i.e., storing $W_{n,\Tc}$  means storing all packets $\{W_{n,\Tc}^{(i)}\}_{\forall i}$. 
A \tit{subfile type} is defined as a partition of $t \eqdef \frac{KM}{N}  $ \wrt  the \ugrping  $\qv=(q_1, \cdots, q_m)$, represented by a vector $\vv=(v_1, \cdots, v_m)$ satisfying $v_1 \ge \cdots \ge v_m \ge 0$ and $\sum_{i=1}^m v_i=t $.
\Diff partitions correspond to different subfile types. For any  $W_{n, \Tc}$, its type is 
\be 
\label{eq: subfile type def}
\vv=\left(|\Tc\cap \Qc_{i_1}|, |\Tc\cap \Qc_{i_2}|,\cdots,|\Tc\cap \Qc_{i_m}|\right),
\ee
for some \perm $(i_1,\cdots,i_m)$ of $(1,\cdots,m)$
that orders the entries non-increasingly.

The type $\vv$ captures the structure of $\Tc$ under grouping $\qv$.  All \pkts $W_{n,\Tc}^{(i)}$ inherit the same type.  Henceforth, the type of $W_{n,\Tc}$, its \pkts, and $\Tc$ are used interchangeably and denoted by $\ttt{type}(W_{n,\Tc})$, $\ttt{type}(W_{n,\Tc}^{(i)})$ and $\ttt{type}(\Tc)$, \resp.

\begin{example}[Subfile Type]
\label{example: subfile type}
Let $(K,t)=(8,3)$ and $\qv=(4,4)$, with  groups $\Qc_1=\{1,2,3,4\}$ and $\Qc_2=\{5,6,7,8\}$ (Fig.~\ref{fig: example subfile type}).
\begin{figure}[ht]
\begin{center}
\includegraphics[width=0.29\textwidth]{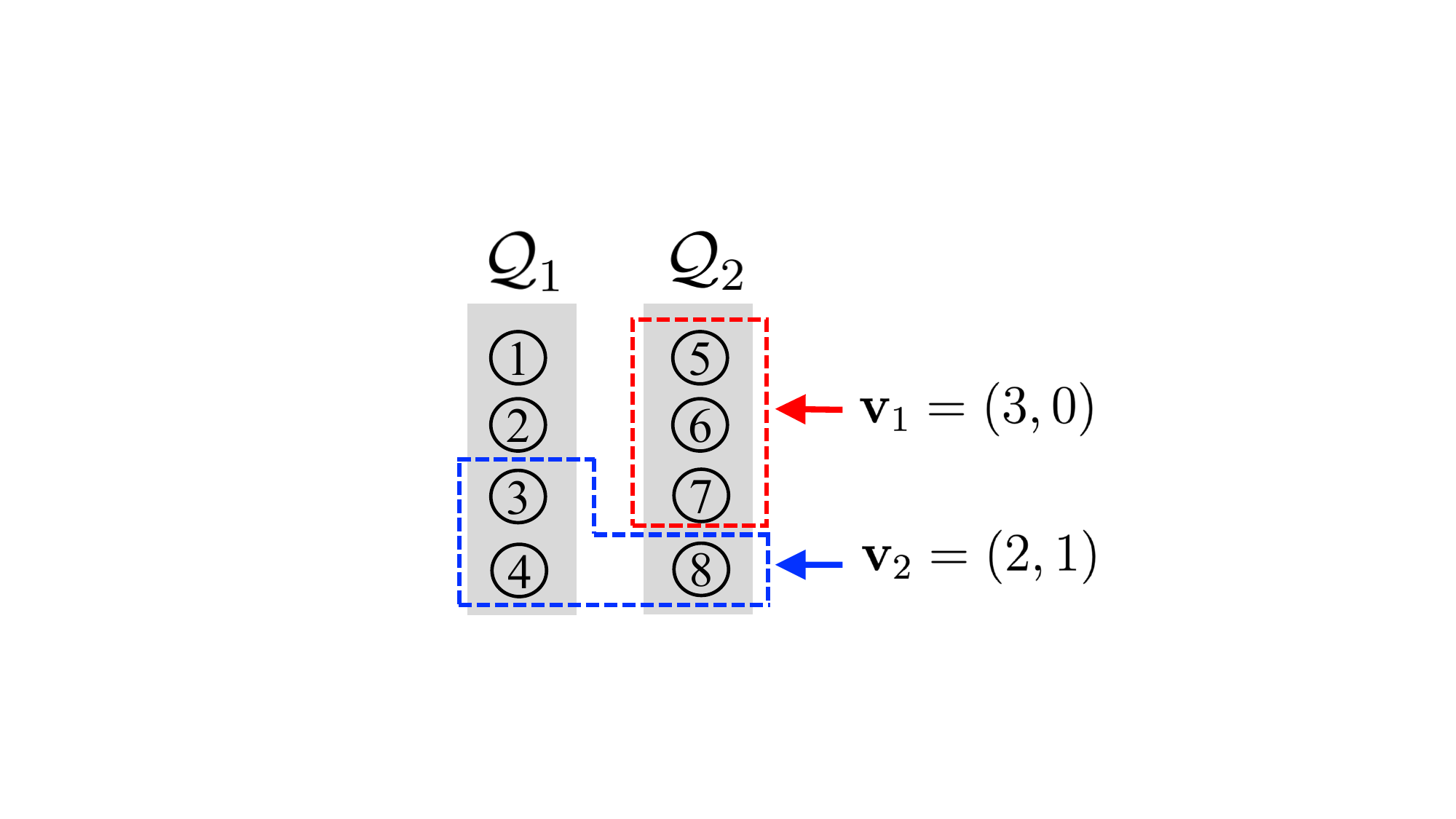}
\caption{Illustration of subfile types.}
\label{fig: example subfile type}
\end{center}
\end{figure}
There are two subfile types: $\vv_1=(3,0)$ and $\vv_2=(2,1)$. 
Type-$\vv_1$ subfiles are cached by $3$ users from a single \grp, whereas type-$\vv_2$ \sbfs are cached by $2$ users from one \grp  and one from the other.
For instance,  $W_{n,\{5,6,7\}}$ is type-$(3,0)$ and $W_{n,\{3,4,8\}}$ is type-$(2,1)$.
Let $F(\vv)$ denote the number of type-$\vv$ subfiles. Then $F(\vv_1)=2\binom{4}{3}=8$ and  $F(\vv_2)=\binom{2}{1}\binom{4}{2}\binom{4}{1}=48$.
\hfill $\lozenge$
\end{example}

\subsubsection{Multicast Group Type}
\label{subsubsection: multi group type, PT framework}
A \mgrp $\Sc\in\binom{[K]}{t+1}$ is a set of $t+1$ users that exchange coded multicast messages in the delivery phase. A \tit{multicast group type} is a partition of $t+1$ with respect to the user grouping $\qv=(q_1, \cdots, q_m)$, represented by $\sv\eqdef(s_1,\cdots,s_m)$ satisfying $s_1\ge\cdots \ge s_m\ge 0$ and $\sum_{i=1}^ms_i=t+1$.
For any $\Sc $, its type is
\be 
\label{eq: multicast group type def}
\sv = \left(|\Sc\cap \Qc_{i_1}|,\cdots, |\Sc\cap \Qc_{i_m}| \right),
\ee
for some permutation $(i_1,\cdots,i_m)$ that orders the entries non-increasingly. We denote this by $\ttt{type}(\mathcal S)$.
Let $\hatnd(\sv)$ be the number of distinct nonzero entries in $\sv$. For a type-$\sv$ \grp  $\Sc$, define the \tit{\usets}  
$\big \{ \hatu_i\big\}_{i=1}^{\widehat{N}_{\rm d}(\sv)} $ as unions of subsets $ \Sc \cap \Qc_k $ with equal cardinality; then 
$\Sc = \cup_{i\in [\widehat{N}_{\rm d}(\sv)]}\hatu_i$.

\begin{defn}[Involved Subfile Type Set]
Given a \mgrp type $\sv$, the \emph{\ists} $\Ic$ is  the set of subfile (or packet) types 
participating in the coded multicast transmission of any type-$\sv$ multicast group.
\end{defn}

\begin{example}[Multicast Group Type]
\label{example: multicast group type, PT framework} 
Continue with Example~\ref{example: subfile type} and \grpg  $\Qc_1=\{1,2,3 ,4\}$, $\Qc_2=\{5,6,7,8\}$.
There are $3$  \mgrp types:  
$\sv_1=(4,0),\sv_2=(3,1)$, and $\sv_3=(2,2)$. 
A type-$\sv_1$ \grp consists of  $4$ users from one \grp;  a type-$\sv_2$ group has $3$ users from one group and one from the other; a type-$\sv_3$ group has $2$ users from each \grp.  
\Fex, 
$\Sc=\{1,2,3,4\}$ is type-$\sv_1$ with \ists $\Ic=\{\vv_1\}$ and $\hatnd(\sv_1)=1$.
$\Sc=\{ 1,2,3,5\}$ is  type-$\sv_2$, involving one type-$(3,0)$ \sbf and $3$ type-$(2,1)$ \sbfs, hence $\Ic=\{\vv_1,\vv_2\}$ and $\hatnd(\sv_2)=2$.
Moreover, $\Sc=\{1,2,5,6\}$ is  type-$\sv_3$  with $\Ic=\{\vv_2\}$ and $\hatnd(\sv_3)=1$.
Let $F(s)$ denote the  number of type-$\sv$ \mgrps. Then 
$F(\sv_1)=2, F(\sv_2)=\binom{2}{1}\binom{4}{3}\binom{4}{1}=32$, and $F(\sv_3)=\binom{4}{2}^2=36$.
\hfill $\lozenge$
\end{example}

\begin{remark}[Subfile \& \Mtcst Group Types under Unequal Grouping]
\label{rmk:subf and mgrp types under unequal grouping}
Although subfile and multicast group types were introduced under equal grouping, they extend directly to unequal groupings. For an unequal grouping  (\ref{eq: user grouping vector, PT framework}) with  $\nd(\qv)\ge  2$ \usets, write 
$\qv=[\qv^{(1)},  \cdots, \qv^{(N_{\rm d}(\qv))}   ]$, where $\qv^{(i)}\eqdef(\beta_i,\cdots,\beta_i)$ 
is the subvector corresponding to the $\ith$ unique set size, with $|\qv^{(i)}|=\psi_i$.
A multicast group type is then defined as
\be 
\label{eq: mgrp type under unequal grouping, PT framework}
\sv=\big[\sv^{(1)},  \cdots, \sv^{({N}_{\rm d}(\qv)} \big  ],
\ee 
where $\sv^{(i)} \eqdef (s^{(i)}_1,\cdots,s^{(i)}_{\psi_i})$ satisfies $s^{(i)}_1\ge \cdots\ge s^{(i)}_{\psi_i}\ge 0$ and represents the projection of $\sv$ onto  $\qv^{(i)}$.
Let the  blocks in $\qv$ with size $\beta_i$ be indexed by a $\Lc_i \subseteq [m]$ (so $\{\Lc_i\}_{i=1}^{N_{\rm d }(\qv)}$ partitions $[m]$).
We say $\Sc$ has type $\sv=[\sv^{(1)},  \cdots, \sv^{({N}_{\rm d}(\qv)}    ]$   if for every $i$ there exists a permutation $\pi_i$ of $\Lc_i$ such that $ (|\Sc\cap \Qc_{\pi_i(\ell)}|)_{\ell \in \Lc_i  } =\sv^{(i)}$.
Similarly, the \sbf type is
\be 
\label{eq: subfile type under unequal grouping, PT framework}
\vv \eqdef \big[\vv^{(1)}, \cdots, \vv^{(N_{\rm d}(\qv))} \big],
\ee
where $\vv^{(i)}$ is the projection of $\vv$ onto $\qv^{(i)}$.

\tit{Example:} For $\qv=(3,2,2,1,1,1)$ with user assignments $\Qc_1=\{1,2,3\},\Qc_2=\{4,5\}, \Qc_3=\{6,7\},\Qc_4=\{8\}, \Qc_5=\{9\}$ and $\Qc_6=\{10\}$, both $\Sc_1=\{1,2,5,6,7,8\}$ and $\Sc_2=\{2,3,4,5,6,10\}$  have type $\sv=(2,2,1,1,0,0)$ with subvectors $\sv^{(1)}=(2),\sv^{(2)}=(2,1)$, and $\sv^{(3)}=(1,0,0)$. \hfill $\lozenge$
\end{remark}

\subsubsection{Further-Splitting  Factor \& \Tx Selection}
\label{subsubsection: FS factor & Tx selection, PT framework}
For a \sbf type  $\vv$, \emph{further-splitting (FS) factor} $\alpha(\vv)\in \mathbb{N}_+$ is the number of packets into which each type-$\vv$ subfile is divided to ensure the optimal  DoF condition (see \Rmk  \ref{rmk:optimal dof}).
While the JCM scheme uniformly sets $\alpha_{\rm JCM}=t$, this choice is sufficient but not necessary for DoF optimality. In contrast, the PT framework allows \tit{nonuniform} FS factors across subfile types, enabling \sbp reduction via \tx selection.

Consider a type-$\sv$ \mgrp $\Sc=\cup_{i\in [\widehat{N}_{\rm d}(\sv)]}\hatu_i$ where  $\hatu_i$ are the \usets.
A subset of users $\Utx\subseteq \Sc  $ is selected as transmitters and is restricted to be a union of unique sets, i.e.,
\be 
\label{eq:Utx as union of unique sets, Tx selection}
\Uc_{\rm Tx}=\bigcup_{i\in\mathcal{D}_{\rm Tx}}\widehat{\Uc}_{i}, \quad  \Dtx \subseteq \big[ \widehat{N}_{\rm d}(\sv)\big],
\ee 
ensuring consistency of the splitting process (see Remark~\ref{remark:reason of Utx as union of unique sets} for explanation).
Let $ g_i \eqdef  |\widehat{\Uc}_{i}|,\forall  i $. Then
$|\Utx | = \sum_{i\in \Dtx} g_i \le t+1$. 
The involved \sbf types are 
$\Ic=\{\vv_i\}_{i=1}^{\widehat{N}_{\rm d}(\sv)}$, where $\vv_i$ corresponds to the \sbfs 
$\{W_{n, \Sc\bksl \{k\}} \}_{k\in \widehat{\Uc}_i }$.
The FS factor for $\vv_i\in \Ic$ is then given by 
\begin{eqnarray}
\label{eq: FS factor}
\mathrm{[Local \;\, FS\;\,factor]} \quad 
\alpha(\vv_i)=\left\{\begin{array}{ll}
\sum\limits_{k\in\mathcal{D}_{\rm Tx}}g_k-1,& \text{if}\; i\in\mathcal{D}_{\rm Tx}\\
\sum\limits_{k\in\mathcal{D}_{\rm Tx}}g_k, &\text{if} \; i\notin\mathcal{D}_{\rm Tx}
\end{array}\right.
\end{eqnarray}
Thus,  $\alpha(\vv_i)$
equals the number of \txs observable by users in $ \hatu_i$, excluding self-\txn when  applicable. 
The \jsch corresponds to $ \Utx=\Sc$, yielding  $\alpha(\vv_i)=t,\forall  i$.
More generally, \tx selection enables asymmetric FS factors, providing additional flexibility and reduced \sbp.

\begin{remark}[Unions of Unique Sets as \Txs]
\label{remark:reason of Utx as union of unique sets}
To ensure consistency of \fs, users within the same unique set must be treated uniformly:
either all are selected as transmitters or none are. Otherwise, two users from the same unique set would induce different FS factors under (\ref{eq: FS factor}), 
contradicting the requirement that subfiles of the same type share identical FS factors.
Specifically, consider two users $k,k'\in \hatu_i  $ belonging to the same \uset.
Suppose $k \in \Utx, k'\notin \Utx$. 
By (\ref{eq: FS factor}), the \sbfs $W_{n,\Sc\backslash \{k\}}$ and $W_{n,\Sc\backslash \{k'\}}$
have   FS factors $|\Utx|-1$ and $|\Utx|$, \resp.
Since they share the same type $\vv_i$, this leads to an inconsistency in further splitting. 
\end{remark}

The FS factors in (\ref{eq: FS factor}) are \tit{local}, as they are defined per multicast group type. When a subfile type appears in multiple multicast groups, its local FS factors must be coordinated. This is achieved via the proposed  \vlcm (LCM) operation (see Definition \ref{def: LCM}), which maps local FS factors to a set of \tit{global} FS factors
\be
\label{eq:def alpha^global, PT framework}
\alphaglobal \eqdef\big [\alpha^{\rm global}(\vv)\big]_{\vv\in \Vc},
\ee 
which govern the final subfile splitting.
The vector LCM operation is defined in the next section.

\begin{example}[\Tx Selection \& Local FS Factor]
\label{example: FS factor & Tx selection, PT framework}
Consider Example~\ref{example: multicast group type, PT framework} with  user assignment $\Qc_1=\{1,2,3,4\},\Qc_2=\{5,6,7,8\}$.
For the type-$(3,1)$ \mgrp $\Sc=\{1,2,3,5\}$, there are two unique sets $\hatu_1=\{1,2,3\}$ and $\hatu_2=\{5\}$.
Each nonempty $\Dtx \subseteq \{1,2\}$ yields a transmitter selection:

\begin{itemize}
    \item[] \tbf{Choice 1}: $\Dc_{\rm Tx} =\{1\}$.
    The \txs are $\Utx=\{1,2,3\}$, giving \lfsfs  $\alpha(\vv_1)=3$  and $\alpha(\vv_2)=2$. 
    After the vector LCM coordination, the global FS factors are $\alpha(\vv_1 )=0$ (excluded), $\alpha(\vv_2 )=6$, resulting in $\Fpt=288>\Fjcm=168$. This is not a good choice as the \sbp exceeds the \jsch.

    \item[] \tbf{Choice 2}: $\Dc_{\rm Tx} =\{1,2\}$.
    All users in $\Sc$ transmit, yielding $\alpha(\vv_1)= \alpha(\vv_2)=3  $ and 
    $\Fpt=\Fjcm=168$, matching the JCM \sbp.

    \item[] \tbf{Choice 3}: $\Dc_{\rm Tx} =\{2\}$.
    The sole transmitter is user $5$. To ensure decodability, all type-$\vv_1$ \sbfs are excluded, giving  \gfsfs $ \alpha(\vv_1)=0,\alpha(\vv_2)=3$. This yields 
    $\Fpt=144< \Fjcm$, demonstrating subpacketization reduction via clever \tx selection. \hfill $\lozenge$
\end{itemize}
\end{example}

The above discussion shows that transmitter selection enables smaller FS factors, thereby reducing subpacketization. The transmitter selection procedure and the coordination of local FS factors are formalized in subsequent sections.

\subsubsection{Further-Splitting Table} 
\label{subsubsection: FS table, PT framework}

For a given transmitter selection, each multicast group type $\sv_i$ induces a set of local FS factors according to (\ref{eq: FS factor}), represented by the local FS vector  
$ \bmalpha_i=  [ \alpha_i(\vv) ]_{\vv \in  \Ic_i}   $. 
Let  $V$ and  $S$ denote the total numbers of subfile types and multicast group types, respectively. The \tit{further-splitting (FS) table} is defined as the matrix
\be
\label{eq:FS table def, PT framework}
\mathbf{\Lambda}\eqdef[\alpha_{ij}]_{S\times V},
\ee 
which collects the local FS factors for all multicast group types. Specifically, the 
$\ith$ row corresponds to $\sv_i$, \ie, 
$\mbf{\Lambda}_{i,:} = \bmalpha_i,\forall  i\in[S] $. Entries corresponding to $\vv \notin \Ic_i$
are marked by $\star$ (meaning no appearance)  while excluded subfile types are assigned zero.

\subsubsection{Global FS Vector} 

The local FS vectors induced by different multicast group types must be coordinated to obtain a \tit{global FS vector}  defined in (\ref{eq:def alpha^global, PT framework}).
To illustrate the necessity, consider a subfile type $\vv$ appearing in \tit{two} \mgrp types  $\sv_1$ and $\sv_2$.
If the \lfsfs are $\alpha_1=2 $ and $\alpha_2=3 $, consistency requires the global FS factor to be $\alpha(\vv)=6$, so that each subfile can be uniformly partitioned and reused across both multicast groups.
In contrast, if two subfile types $\vv_1, \vv_2\in \Ic_1 \cap \Ic_2 $ have \lfsvs  $\bmalpha_1=(1,2), \bmalpha_2=(2,3)$, then their vector LCM does not exist. In this case, the corresponding multicast transmissions cannot be jointly coordinated, rendering transmitter selection infeasible. This shows  that only transmitter selections admitting a well-defined vector LCM lead to a valid global FS vector.

The vector LCM operation is formally defined as follows.

\begin{defn}[Vector Least Common Multiple]
\label{def: LCM}
Let  $\{\av_i\}_{i=1}^n $ be row vectors of length $m$,  whose entries are nonnegative integers or the symbol $\star$.
The \tit{\vlcm} (vecLCM) of $\{\av_i\}_{i=1}^n $, denoted  by 
\be
\label{eq:def a^LCM, PT frmwk}
\av^{\rm LCM}\eqdef{\rm LCM}\lef(\{\av_i\}_{i=1}^n\rig),
\ee 
exists if there are positive integers $z_1,\cdots, z_n$ \suth 
\be
z_1\av_1=\cdots =z_n\av_n,\label{eq: def LCM condition 1}
\ee 
where multiplication follows the rules below. In this case,
\be
\av^{\rm LCM}={\texttt{combine}}\left(\{z_i^*\av_i\}_{i=1}^n\right), \label{eq: def LCM condition 2}
\ee 
where
\be
\label{eq: optimal LCM coefficients} 
\{z_i^*\}_{i=1}^n= \argmin_{z_1,\cdots, z_n}\left\|\texttt{combine}\left(\{z_i\av_i\}_{i=1}^n\right)\right\|_2. 
\ee 
The $\ttt{combine}$ operator returns a vector whose  $\jth$ entry  equals  the unique nonzero, non-$\star$ value among the $\jth$ entries of $\{z_i^*\}_{i=1}^n$; if all such entries are zero or $\star$, the output is zero. 
Moreover, if for some $\av_i$  and $j\in [m]$, the $\jth$ entry is the only nonzero, non-$\star$ entry among all  $\{\av_k\}_{k=1}^n$, then the $\jth$ entry  of $\alphalcm$ is set to zero\footnote{This implies exclusion of subfile type $\vv_j$ in the PT framework.}.

When comparing vectors in (\ref{eq: def LCM condition 1}),
the following rules apply:
\begin{itemize}
    \item[1)] integer $\times\, \star =\star $;
    \item[2)] $\star$ is treated as zero when computing norms;
    \item[3)] $0$ and $\star$ are equal to any integer entry.
\end{itemize}
With these rules, $\alphalcm$ contains no $\star$ entries. Essentially, the vector LCM generalizes the scalar LCM to vectors.
\end{defn}

\begin{example}[LCM Vector]
Consider   $\av_1=(1,2,3,0)$, $\av_2=(\star,4,\star,3)$ and $\av_3=(\star,0,2,1)$.
Choosing scaling coefficients  $(z_1^*,z_2^*,z_3^*)=(2,1,3)$ yields 
$z_1^*\bm{a}_1=(2,4,6,0)$, $z_2^*\bm{a}_2=(\star,4,\star,3)$, and $z_3^*\bm{a}_3=(\star,0,6,3)$. Applying the combine operator entrywise gives
$\alphalcm   =(2,4,6,3)$, as summarized in Table~\ref{table:example,def LCM}. \hfill $\lozenge$
\begin{table}[h]
\caption{Illustration of Vector LCM}
\centering
\renewcommand{\arraystretch}{0.8}
\begin{tabular}{|c|c|c|c|}
\hline
$\av_1$   & $(1,2,3,0)$  & $z_1^{\ast}\av_1$ & $(2,4,6,0)$ \\
\hline
$\av_2$   & $(\star,4,\star,3)$  & $z_2^{\ast}\av_2$ & $(\star,4,\star,3)$ \\
\hline
$\av_3$   & $(\star,0,2,1)$  & $z_3^{\ast}\av_3$ & $(\star,0,6,3)$  \\
\thickhline
$\alphalcm     $ &
\multicolumn{3}{c|}{
 \hspace{3.2cm} $(2,4,6,3) $
}\\
\hline
\end{tabular}
\label{table:example,def LCM}
\end{table}
\end{example}

The LCM vector may not exist; when it does, it is unique. In the PT framework, the global FS vector is obtained as the vector LCM of the local FS vectors,
\be 
\label{eq: global FS from local FS, LCM def}
\mathrm{[Global \;\, FS\;\,vector]} \quad 
\alphaglobal = \textrm{LCM}\left(\bmalpha_1, \cdots , \bmalpha_S\right),
\ee 
where $\alphabm_i$ is the \lfsv induced by  $\sv_i,i\in [S]$. This ensures compatibility across multicast groups by requiring each global FS factor 
$ \alpha^{\rm\glb}(\vv)  $ to be the LCM of the corresponding local FS factor $\{ \alpha_i(\vv) \}_{i=1}^V$. 
With (\ref{eq: global FS from local FS, LCM def}),
the overall \sbp of  the PT design is equal to 
\be 
\label{eq: PT subp, PT frmwk components}
\mathrm{[PT \;\, \sbp]} \quad 
 \Fpt  = \alphaglobal \lef[F(\vv_1), \cdots, F(\vv_V) \rig ]^{\rm T},
\ee 
where $F(\vv_i)$  denotes the total number of type-$\vv_i$ \sbfs.

\begin{example}[Calculation of Global FS Vector]
\label{example: compute global FS factors, PT framework}
Continuing with Example~\ref{example: FS factor & Tx selection, PT framework}, we show how the global FS vector can be calculated.
The subfile and \mgrp types are 
\begin{align}
& \vv_1=(3,0),\vv_2=(2,1), \sv_1 = (4^{\dagger},0  ) ,
\sv_2 = (3,1^{\dagger} ),
\sv_3 = (2^{\dagger},2^{\dagger}),
\label{eq: step 1, example FS calculation}
\end{align}
with  \ists $\Ic_1=\{\vv_1\}, \Ic_2=\{\vv_1,\vv_2\}$, and $\Ic_3=\{\vv_2\}$. 
Transmitters are indicated by the superscript $\dagger$.
The resulting local FS vectors are  $\bmalpha_1=(3,\star),\bmalpha_2=(0,1)   $,  and $\bmalpha_3=(\star,3)$, yielding the \gfsv  $\alphaglobal=(0,3)$ as summarized in Table \ref{table: FS table, example global FS via LCM, PT framework}.
\begin{table*}[ht]
\centering
\caption{FS Table and Global FS Vector}
\renewcommand{\arraystretch}{0.8}
 \begin{tabular}{|c"c|c|} 
\hline
  &  $\vv_1$ & $\vv_2$  \\
\thickhline
$\bm{\alpha}_1$ & 3  & $\star$  \\
\hline
$\bm{\alpha}_2$  & 0 &1 \\
 \hline
 $\bm{\alpha}_3 $  & $\star$ &3 \\
 \hline
 $\alphaglobal $  & 0 &3\\
 \hline
\end{tabular}  
\label{table: FS table, example global FS via LCM, PT framework}
\end{table*}
Since $F(\vv_1)=8$  and $F(\vv_2)=48$, the 
\sbp level is
$\Fpt=\alphaglobal  \left[ F(\vv_1), F(\vv_2)\right]^{\rm T} =(0,3)(8,48)^{\rm T}=144$,
which is smaller than $\Fjcm=168$. The reduction is due to the exclusion of type-$\vv_1$ subfiles (\ie, \ssgain only).
\hfill $\lozenge$
\end{example}

\begin{example}[\Ex \ref{example: (4,2,1) PT first look} Revisited]
With the PT framework in place, we can now formalize Example~\ref{example: (4,2,1) PT first look} (Section \ref{subsec: a first look at PT, problem description}) as follows.
The \grpg $\qv=(2,2)$ is adopted, with user assignments $\Qc_1=\{1,2\}$ and $\Qc_2=\{3,4\}$.
There are two  \sbf types  $\vv_1=(2,0), \vv_2=(1,1)$, and one \mgrp type $\sv=(2,1^\dagger)$. For each $\Sc$, the user \corrspdg  to the second element of $\sv$ is  selected as the \tx, resulting in $ \alphaglobal= (0,1)$, \soth type-$\vv_1$ \sbfs are excluded, and \fs  is not  needed for type-$\vv_2$ since $\alpha^{\rm \glb }(\vv_2)=1$. Hence, each file is split into  $\Fpt=F(\vv_2)=4$ subfiles,
\be 
W_n=\lef \{W_{n, \Tc}\rig  \}_{\Tc  \in \mathfrak{T}},\;  \mathfrak{T}=\{ \{1,3\},\{1,4\},\{2,3\},\{2,4\}\}.
\ee 
\Aar, the cache placement and  \deli phases involve  only type-$\vv_2$ \sbfs. 
\hfill $\lozenge$
\end{example}

\subsubsection{Memory Constraint Table} 
\label{subsubsection: MCT, PT framework}

Given a user grouping  $\qv$  with \usets $\Uc_1,\cdots,\Uc_{N_{\rm d}(\qv)}$, 
the \emph{memory constraint (MC) table} is defined as
\be 
\label{eq:def Omega MC matrix}
\mathbf{\Omega}\eqdef[\omega_{ij}]_{N_{\rm d}(\qv) \times V},
\ee
where $\omega_{ij}\eqdef F_i(\vv_j)$ denotes the number of type-$\vv_j$ \sbfs  presumably\footnote{This includes all subfiles, even those ultimately excluded; the final scheme removes such subfiles accordingly.} cached by each user in $\Uc_i$. 
Define the  \tit{user cache vector}
\be
\label{eq:def F_i, PT framework}
\mathbf{F}_i\eqdef \left[F_i(\vv_1),\cdots,F_i(\vv_V)\right],
\ee 
which equals the  $\ith$ row of $\mbf{\Omega}$, and the \tit{\sbf number vector} 
\be 
\label{eq:def F subfile nubmer vector, PT framework}
\mathbf{F}\eqdef[F(\vv_1),\cdots, F(\vv_V)],
\ee
where $F(\vv)$ is the total number of type-$\vv$ \sbfs. 
Also define the \tit{cache difference vector}
\be
\label{eq:def Delta_i, PT framework}
\mathbf{\Delta}_i\eqdef\mathbf{F}_{i+1}-\mathbf{F}_i,\;i\in [\nd(\qv) -1]
\ee 
whose $\jth$ entry $\Delta_{i}(j)=F_{i+1}(\vv_j)-F_{i}(\vv_j), j\in[V]$
represents the difference in the number of type-$\vv_j$ subfiles  cached per user in $\Uc_{i+1}$ and $\Uc_i$, \resp.

Assuming equal packet sizes, the \tit{memory constraint} (MC) is
\be 
\label{eq: MC}
\mathrm{[Memory\;\,constraint]} \quad 
\alphaglobal\mbf{\Delta}_i^{\rm T}=0,\;\forall i\in[\nd(\qv) -1],
\ee 
which is equivalent to $\alphaglobal \mbf{F}_{1}^{\rm T}=\alphaglobal\mbf{F}_i^{\rm T},\forall i\in [\nd(\qv)]$.
This ensures that all users store the same number of packets, consistent with equal cache sizes $H(Z_k)\le ML,\forall  k\in [K]$.

\begin{remark}[MC under Equal/Unequal User Groupings]
\label{rmk:MC under equal&unequal q, PT overview}
The MC (\ref{eq: MC}) is trivially satisfied under equal groupings, since
all users cache the same subfile types and hence the same number of packets;
$H(Z_k)=ML$ can then be ensured by properly choosing the packet size.
Under unequal groupings, different users may cache different subfile types, and satisfying the MC generally requires careful design.
\end{remark}

\subsubsection{PT Design as Integer Linear Program} 
\label{subsubsection: integer optimization, PT framework}

With the key components defined above,  including user grouping, subfile and \mgrp\ types, transmitter selection, FS vectors, and memory constraints, the PT-based D2D \coca scheme  design can be formulated as an integer linear program (ILP):
\begin{subequations}
\label{eq: PT OPT1}
\begin{align}
& \min_{\qv, \left\{\Dc_{\rm Tx}(\sv_i):  i\in  [S] \right\}  } 
\; F_{\rm PT}\eqf   \alphaglobal\mathbf{F}^{\rm T}\label{eq: PT OPT1 obj} \\
&\quad\;\;\; \; {\rm s.t.} \quad\;\;\;  \alphaglobal\mbf{\Delta}_i^{\rm T}=0,\,\forall i\in\left[\nd(\qv)-1\right] \label{eq: PT OPT1 MC} 
\end{align}
\end{subequations}
where $\Dtx(\sv)$ denotes the \tx selection in \mgrp type $\sv$, $\Fm$ and $\mbf{\Delta}_i$ are defined in  (\ref{eq:def F subfile nubmer vector, PT framework}) and (\ref{eq:def Delta_i, PT framework}), \resp.
The decision variables are the user grouping $\qv$ and the \tx selections   $\big\{\Dtx(\sv_i) \subseteq  \hatnd(\sv_i)  \big \}_{i=1}^S  $.
Hence, this ILP jointly optimizes the user grouping and transmitter selection to minimize subpacketization. 
Each feasible solution corresponds to a valid caching scheme and may achieve strictly smaller subpacketization than the JCM scheme, which arises as a special case that ignores the geometric structure induced by user grouping (see explanation in \Rmk \ref{remark: JCM as a special PT design}).

In summary, the PT framework provides a systematic approach to solving the ILP (\ref{eq: PT OPT1}).
Given a user \grpg $\qv$ and $t=\frac{KM}{N}$
, the \sbf and \mgrp types are determined by (\ref{eq: subfile type def}) and (\ref{eq: multicast group type def}).
For each \mgrp type, local FS vectors are obtained via (\ref{eq: FS factor}) and merged into a global FS vector using the vector LCM operation (\ref{eq: global FS from local FS, LCM def}).
The resulting  $\alphaglobal$
must satisfy the memory constraint (\ref{eq: MC}), and the \sbp level is 
\be
\Fpt=\alphaglobal \mbf{F}^{\rm T}=\sum_{i=1}^V\alpha^{\rm global}(\vv_i)F(\vv_i),
\ee
which is minimized in the ILP objective.
To conclude, the PT design procedure  is summarized in \Algo~\ref{algorithm: PT design procedure, PT framework}.
\begin{algorithm}[t]
\setstretch{0.8}
\caption{PT Design Procedure}
\label{algorithm: PT design procedure, PT framework}
\KwData{$K,N,M$, $\qv$, $\left\{\Dc_{\rm Tx}(\sv_i)\right\}_{i=1}^S$.}

Determine subfile types $\left\{\vv_i\right\}_{i=1}^V$, \mtcst group types
$\{\sv_i\}_{i=1}^S$, and $\{\Ic_i\}_{i=1}^S$  \accrdto (\ref{eq: subfile type def}), (\ref{eq: multicast group type def}).

\For{each $k\in [S]$}{Determine local FS vector $\bm{\alpha}_k\eqdef\left[\alpha_k(\vv) \right]_{\vv\in \Ic_k}$ according to (\ref{eq: FS factor}).}
Determine global FS vector $\alphaglobal=\textrm{LCM}\big(\left\{\bm{\alpha}_k\right\}_{k=1}^S\big)$ according to (\ref{eq: global FS from local FS, LCM def}).

\KwResult{$\alphaglobal=\big[\alpha^{\rm global}(\vv_i)\big]_{i=1}^V$ (if \memoconst (\ref{eq: MC}) is satisfied).}
\end{algorithm}

\begin{remark}[JCM Scheme as a Special Case of PT]
\label{remark: JCM as a special PT design}
The \jsch~\cite{ji2016fundamental} corresponds to the degenerate PT design with a single user group 
$q=(K)$, yielding one subfile type $\vv=(t)$ and one multicast group type $\sv=(t+1)$. Consequently,  all users in each \mgrp transmit, \ie, $ \Utx=\Sc, \forall \Sc$, leading to $\alphaglobal=(t)$ and \sbp  $\Fjcm= \alpha^{\rm global}(\vv)F(\vv)=t\binom{K}{t}$.
\end{remark}

\subsection{Cache Placement and Delivery Phases under PT Framework}
\label{subsection: placement & delivery under PT, PT framework}

Given the  file splitting described in Section~\ref{subsection: PT framework, PT framework}, we specify the cache placement and delivery phases as follows.

\subsubsection{{File Splitting}} 

Given the \gfsv 
$\alphaglobal=[\alpha^{\rm global}(\vv)]_{\vv \in  \Vc  }$, 
let $\Vc_{\rm nex} \subseteq \Vc  $ denote the set of non-excluded subfile types.  Each file  is partitioned into 
$\Fpt=\sum_{\vv\in\Vc_{\rm nex}}\alphaglobalnb(\vv)F(\vv)$ equal-sized packets, each of size $L/F_{\rm PT}$ bits.
Specifically,
\be 
\label{eq:PT file splitting, PT framework}
W_n=\bigcup_{\vv\in \Vc_{\rm nex}}\left\{W_{n,\Tc}^{(i)}: \forall \Tc ,  \texttt{type}(
\Tc)=\vv,\forall i\in \lef[\alphaglobalnb(\vv)\rig] \right\},\;\forall n\in[N]
\ee

\subsubsection{{Cache Placement Phase}}

Each user $k$ stores all packets $W_{n,\Tc}^{(i)}$ \suth  $k\in \Tc$, \ie, 
\be
\label{eq: Z_k, PT framework}
Z_k = \left\{ W_{n,\Tc}^{(i)}: \forall \Tc\ni k,\forall i \in  \left[ \alphaglobalnb(\ttt{type}(\Tc)) \right]  ,\forall n\in[N]   \right\}.
\ee
It can be shown that $H(Z_k)=ML$ bits, and hence the memory constraint is satisfied. A formal proof is given in Appendix~\ref{appendix: proof of MC}.

\subsubsection{{Delivery Phase}}
Let the demand vector be $\dv=(d_1,\cdots,d_K)$.
Consider  transmission in a type-$\sv$
multicast group $\Sc=\cup_{i\in [\widehat{N}_{\rm d}(\sv)]} \hatu_i $, with involved \sbf types $\Ic=\{\vv_i \}_{i=1}^{\widehat{N}_{\rm d}(\sv)}$,  where $\vv_i$ represents the type corresponding to the \sbfs
$
\big\{W_{n,\Sc\backslash \{k\}}\big\}_{k\in \widehat{\Uc}_i}
$.
Let $\Utx =\cup_{i\in \Dc_{\rm Tx}(\sv)}\hatu_i \subseteq \Sc $ be the \tx set.
For each $\vv_i \in  \Ic $, let $\alpha(\vv_i)$ and $\alpha^{\rm global}(\vv_i)$ denote the local and global FS factors, respectively, and
define
\be 
\label{eq: z, PT framework}
z(\sv)\eqdef
\frac{\alphaglobalnb(\vv_i)}{\alpha(\vv_i)}\in \mathbb{N}_+,\; \forall \vv_i \in \Ic
\ee 
By the definition of the LCM vector (\ref{eq:def a^LCM, PT frmwk})$, \alphaglobalnb(\vv_i)$ is an integer multiple of $\alpha(\vv_i)$ for every $\vv_i\in \Ic$, which is given by  (\ref{eq: z, PT framework}).

Each \tx $k\in \Utx$ sends  a \cmsg
\be 
\label{eq: coded message, PT delivery, PT framework}
Y_{\Sc\backslash\{k\}}^k = \bigoplus_{k^{\prime}\in \Sc\backslash\{k\}} \widetilde{W}^{(c_{k^{\prime}})}_{d_{k^{\prime}}, \Sc\backslash \{k^{\prime}\} },
\ee 
where $\widetilde{W}^{(i)}_{d_{k^{\prime}}, \Sc\backslash \{k^{\prime}\}}$ denotes a \emph{concatenated packet} defined  as
\be 
\label{eq:def concatenated packet, PT delivery}
\widetilde{W}^{(i)}_{d_{k}, \Sc\backslash \{k\}}
\eqdef \left[
W^{\left((i-1)z(\sv)+1\right) }_{d_{k}, \Sc\backslash \{k\}},
W^{((i-1)z(\sv)+2) }_{d_{k}, \Sc\backslash \{k\}},\cdots,
W^{(iz(\sv)) }_{d_{k}, \Sc\backslash \{k\}}
\right ],\;\forall k \in \Sc.
\ee 
Equivalently, the coded message can be written as
$
Y_{\Sc\backslash\{k\}}^k = \big[Y_{\Sc\backslash\{k\}}^k(i)\big]_{i=1}^{z(\sv)}
$,
whose $\ith$ entry is
\be
\label{eq: def Y(i), PT delivery}
Y_{\Sc\backslash\{k\}}^k(i)=\bigoplus_{k^{\prime}\in \Sc\backslash \{k\}  } W_{d_{k^{\prime}}, \Sc\backslash \{k^{\prime}\}}^{((c_{k^{\prime}}-1)z(\sv)+i)}.
\ee
$c_k$ is a counter associated with user $k\in \Sc$, initialized to zero and incremented each time a concatenated packet with  subscript $(d_k,\Sc \backslash \{k\})$ appears in the \cmsgs $\{Y^i_{\Sc\backslash \{i\}}     \}_{i\in \Uc_{\rm Tx}}$ for $\Sc$.\footnote{
By (\ref{eq: FS factor}), $c_k\le |\Utx|-1$ for $k\in \Utx$ and $c_k\le |\Utx|$ for $ k\in \Sc\backslash \Utx$.} The counters are reset whenever a new multicast group 
$\Sc$ is considered.

Due to concatenation, each \cmsg length $\big|Y^k_{\Sc\backslash \{k\}}\big|= z(\sv)L/F_{\rm PT}$ bits.
From the above \txn, each \tx $k\in \Utx$ can decode $z(\sv)(|\Utx|-1)$ desired packets $\big\{\widetilde{W}_{d_k,\Sc\backslash\{k\}}^{(i)} \big\}_{i=1}^{|\Uc_{\rm Tx}|-1}$,
while each  non-\tx user $k\in \Sc\backslash \Utx$ can decode $z(\sv)|\Utx|$ desired packets 
$\big\{\widetilde{W}_{d_k,\Sc\backslash\{k\}}^{(i)} \big\}_{i=1}^{|\Uc_{\rm Tx}|}$.
This procedure is applied to every multicast group of every type and is summarized in \Algo~\ref{algorithm: PT delivery, PT framework}.
The resulting delivery scheme achieves the optimal rate $\Rjcm=\frac{N}{M}-1$ since each coded message is simultaneously useful to
$t$ users, thereby satisfying the optimal DoF condition. 

\begin{algorithm}[t]
\setstretch{0.75}
\caption{PT Delivery Scheme}
\label{algorithm: PT delivery, PT framework}
\KwData{$\{Z_k\}_{k\in [K]}$, $\dv=(d_1,\cdots,d_K)$.}
\For{each \mgrp type $\sv$}{
\For{each $\Sc$  of type $\sv$}{
\;\,Set $c_k=0,\forall k\in \Sc$.\\
\For{each user $k\in \Utx$}{
User $k$ sends the \cmsg 
\be 
Y_{\Sc\backslash \{k\}   }^k = \bigoplus_{k^{\prime}\in \Sc\backslash\{k\}} 
\left[  
W^{((c_{k^{\prime}}-1)z(\sv)+1) }_{d_{k^{\prime}}, \Sc\backslash \{k^{\prime}\}},\cdots, W^{(c_{k^{\prime}}z(\sv)) }_{d_{k^{\prime}}, \Sc\backslash \{k^{\prime}\}}
\right]
\ee 
to users in $\Sc\backslash \{k\}$. Update counters $\{c_k\}_{k\in \Sc}$.
}}}
\end{algorithm}

\begin{example}[PT Delivery]
\label{example: (9,3,2) PT first look}
Consider $(K,N,M)=(9,3,2)$ with $t=6$.
Users are partitioned into $3$ groups $\Qc_1=\{1,2,3\}$, $\Qc_2=\{4,5,6\}$, and $\Qc_3=\{7,8,9\}$ under \grpg  $\qv=(3,3,3)$.
There are $V=3$ \sbf types and $S=2$ \mgrp types (see Fig. \ref{fig:PT delivery, PT frmwk}): 
\begin{figure}[t]
\centering
\begin{subfigure}[b]{0.5\textwidth}
\centering
\includegraphics[width=\linewidth]{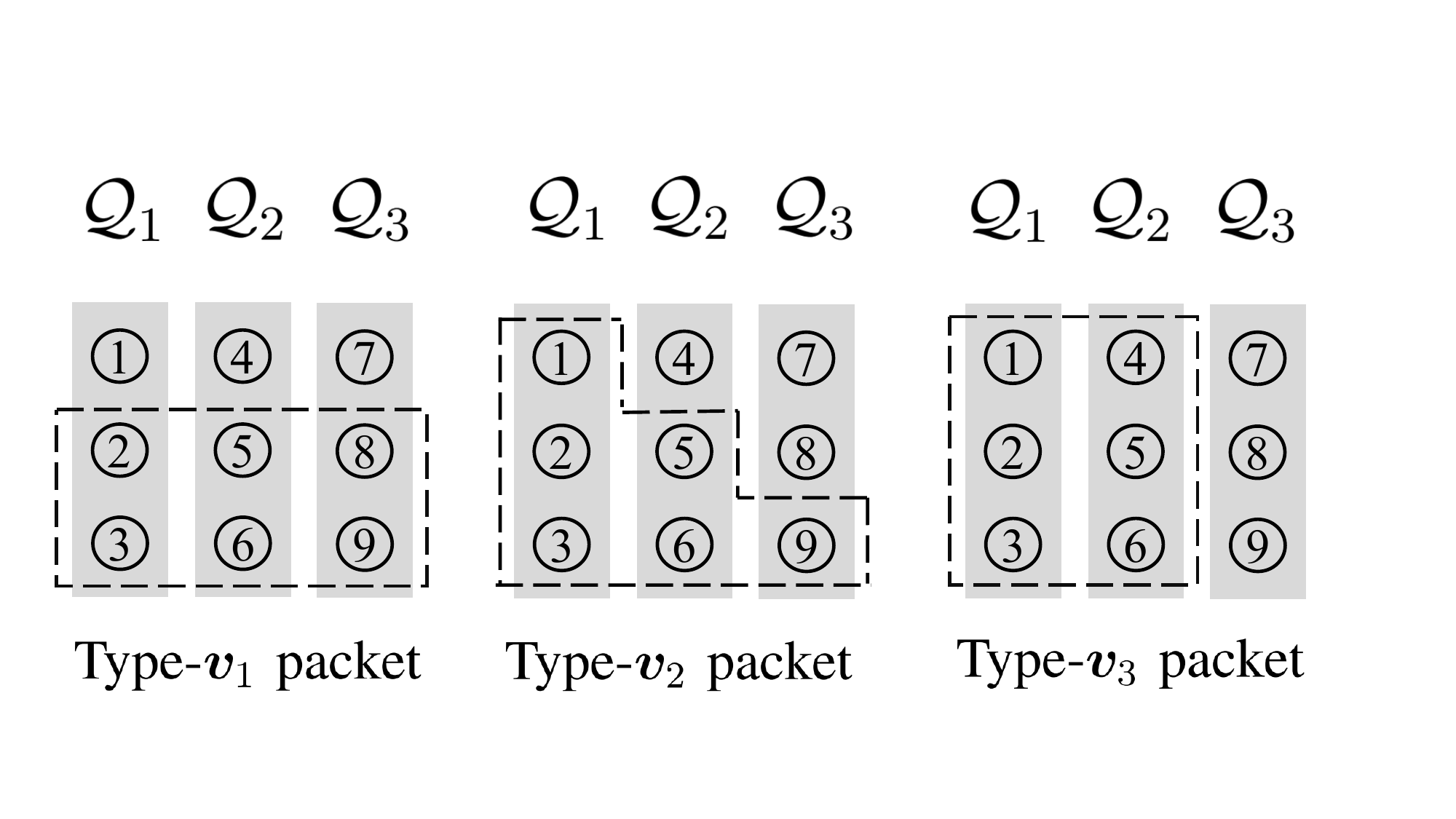}
\caption{Subfile and packet types.}
\label{fig: (9,3,2) example, subfile/packet type, PT first look}
\end{subfigure}
\hfill 
\begin{subfigure}[b]{0.48\textwidth}
\centering
\includegraphics[width=\linewidth]{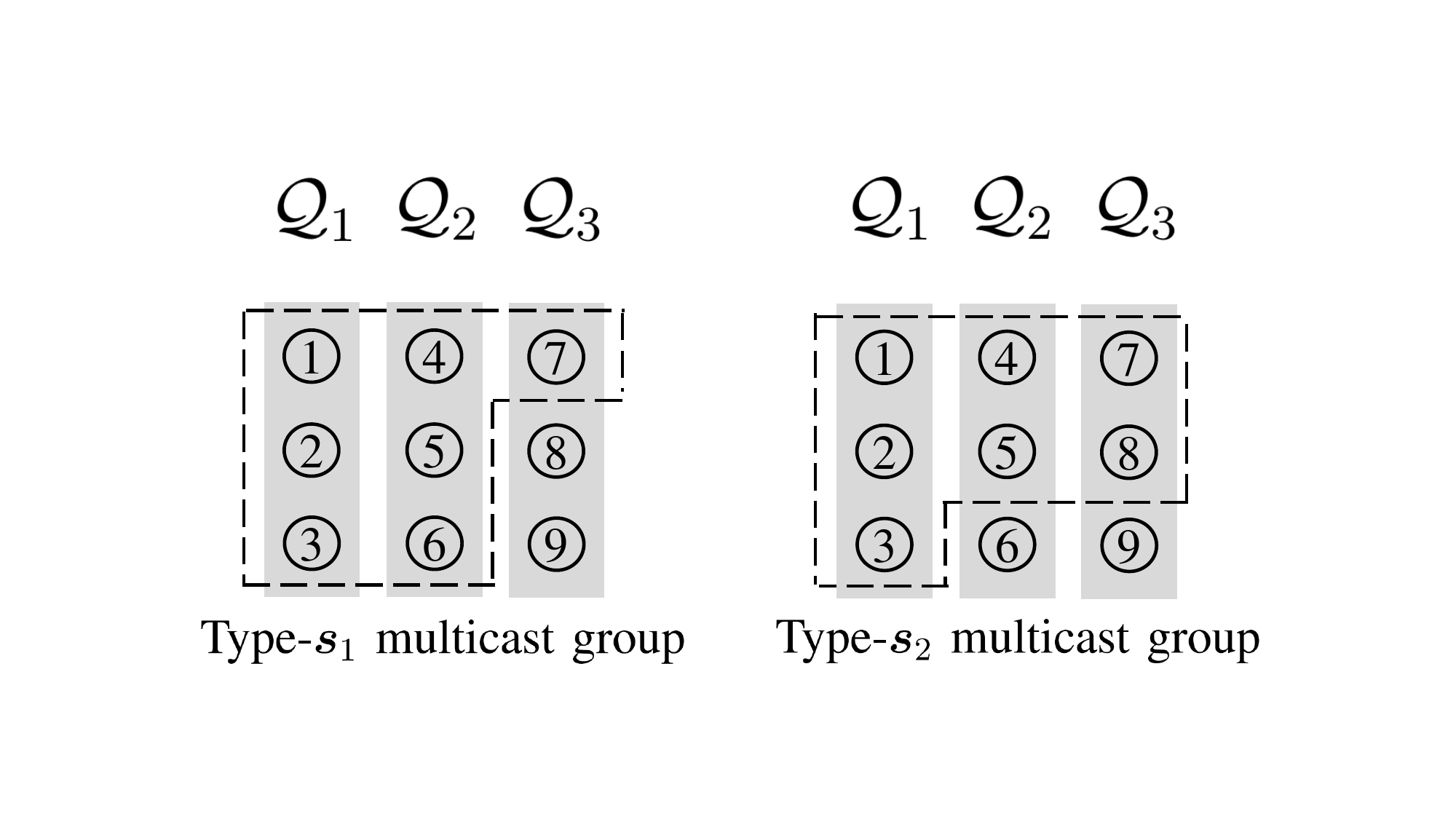}
\caption{\Mgrp types.}
\label{fig: (9,3,2) example, multicast group type, PT first look}
\end{subfigure}
\caption{Illustration of subfile and \mgrp types.} 
\label{fig:PT delivery, PT frmwk}
\end{figure}
\begin{align}
& \vv_1=(2,2,2),\; \vv_2=(3,2,1),\;\vv_3=(3,3,0),\notag\\
& \sv_1= (3,3,1^{\dagger} ),\; \sv_2= (3,2^{\dagger},2^{\dagger}),\; \Ic_1=\{\vv_2, \vv_3 \},\; \Ic_2=\{\vv_1, \vv_2 \},
\end{align}
where \tx selection is marked by $\dagger$ in each $\sv_i$. The local FS vectors  are $\bmalpha_1=(*,1,0) $ and $\bmalpha_2=(4,3,*) $, yielding $\alphaglobal=(4,3,0)$. Hence, type-$\vv_1$ and $\vv_2$ \sbfs  are split into $4$ and $3$
\pkts, \resp, while type-$\vv_3$ \sbfs are excluded.

\tbf{File splitting  \& cache placement.}  Each file is split into  $\Fpt=4F(\vv_1) +3F(\vv_2)=270$ \pkts,
\be 
W_n = \big\{W_{n,\Tc}^{(i)}: \Tc \in \Vc_1, i\in[4]\big\}  \cup  \big\{W_{n,\Tc}^{(i)}:\Tc\in \Vc_2, i\in[3]\big\}, \; \forall n \in[N]
\ee
User $k$ stores all packets $W_{n,\Tc}^{(i)}$ if $k\in \Tc$, i.e., 
\be 
Z_k = \lef\{ W_{n,\Tc}^{(i)} : \forall  \Tc \ni k ,\forall i \in [  \alphaglobalnb(\ttt{type}(\Tc))] , \forall n\in[3]\rig\}, \; k\in [9]
\ee 
Thus,  each user stores $216$ type-$\vv_1$ packets and $324$ type-$\vv_2$ packets, satisfying  the  memory constraint $M=\frac{216+324}{270}=2$.

\tbf{Delivery phase.}
Suppose $\dv=(d_1,\cdots,d_9)$. Consider delivery within two  \grps $\Sc_1=\{1,2,3,4,5,6,7\}$ and $\Sc_2=\{1,2,3,4,5,7,8\}$ of types $\sv_1$ and $\sv_2 $, \resp  as shown  in Fig. \ref{fig: delivery, (9,3,2) example, PT first look}.
\begin{figure}[t]
\centering
\begin{subfigure}[b]{0.44\textwidth}
\centering
\includegraphics[width=\linewidth]{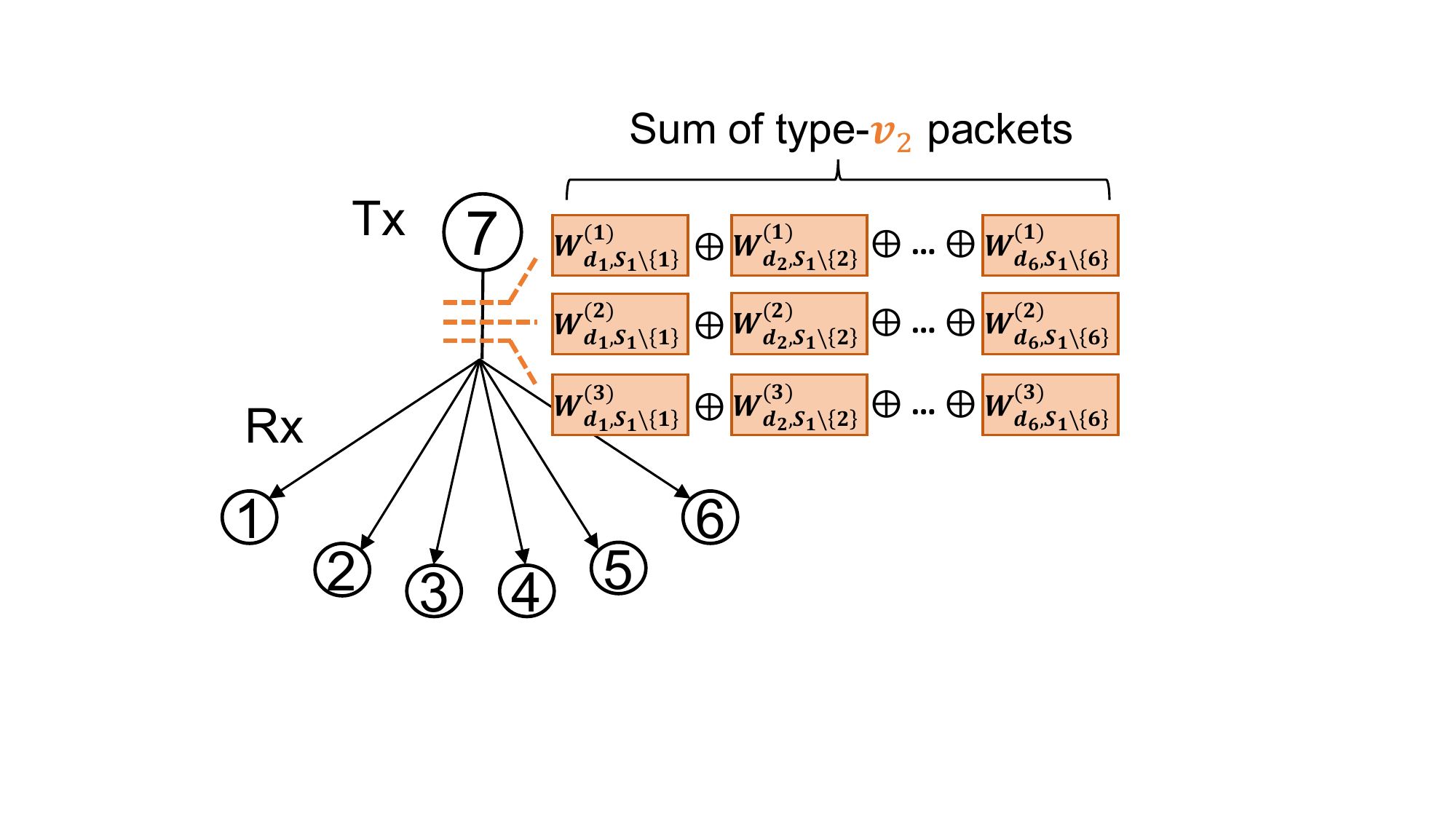}
\caption{Delivery within $\Sc_1=\{1,2,3,4,5,6,7\}$.}
\label{fig: S1 delivery, (9,3,2) example, PT first look}
\end{subfigure}
\qquad 
\begin{subfigure}[b]{0.45\textwidth}
\centering
\includegraphics[width=\linewidth]{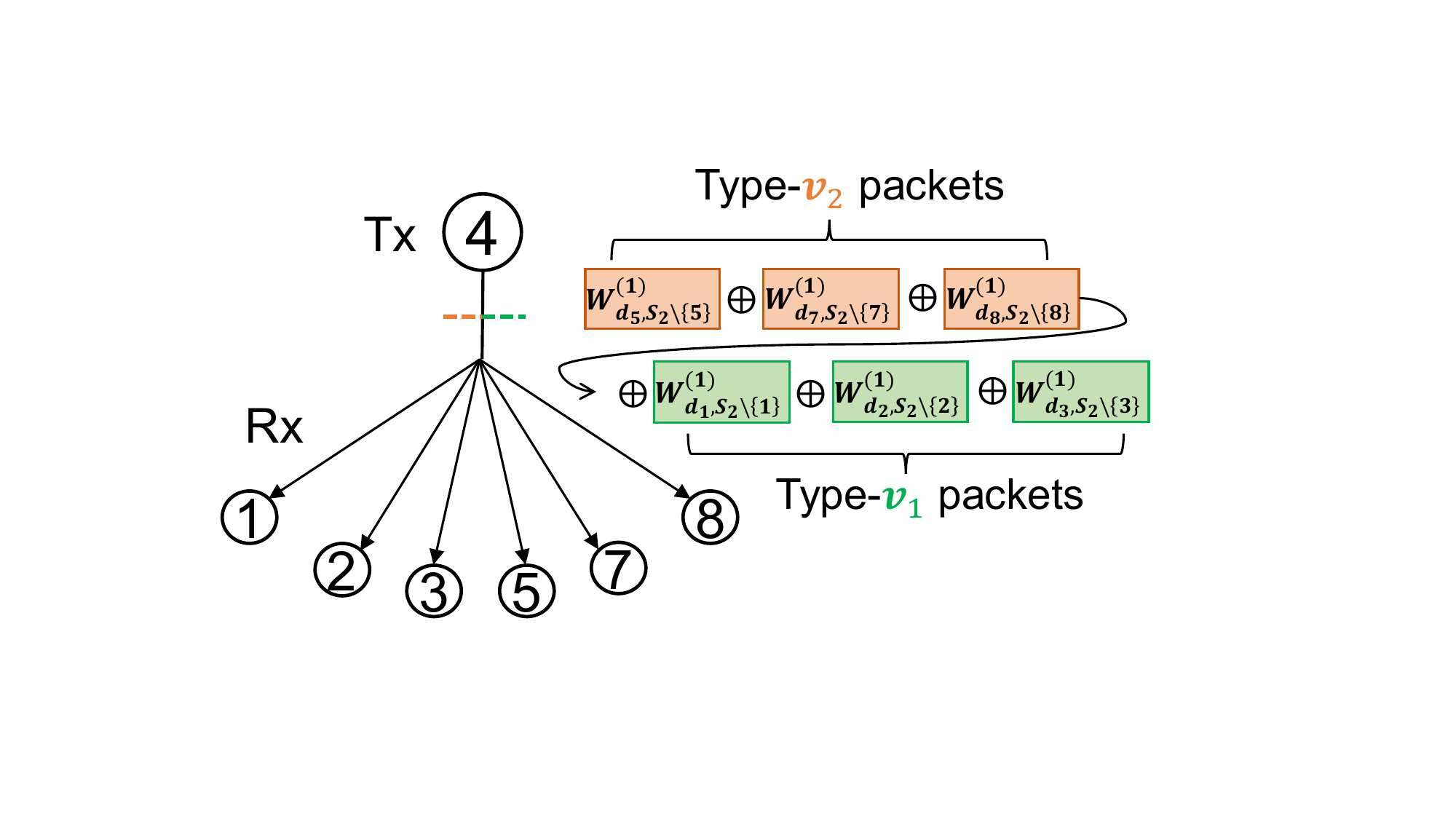}
\caption{Delivery within $\Sc_2=\{1,2,3,4,5,7,8\}$.}
\label{fig: S2 delivery, (9,3,2) example, PT first look}
\end{subfigure}
\caption{Delivery in two different \mgrps $\Sc_1$ and $\Sc_2$.} 
\label{fig: delivery, (9,3,2) example, PT first look}
\end{figure}

\tit{\textbullet \,   \Deli within $\Sc_1=\{1,\cdots, 7\}$:}
\User{7} is the only \tx and sends $3$ \cmsgs 
$$\lef\{\bigoplus\nolimits_{k\in\Sc_1\backslash \{7\}   } W_{d_k,\Sc_1\backslash \{k\}}^{(i)}\rig\}_{i=1}^3$$ to the remaining users in $\Sc_1  $ (see Fig.~\ref{fig: S1 delivery, (9,3,2) example, PT first look}).
Each \cmsg is an XOR of $ 6$ type-$\vv_2$ packets, enabling every user $k\in\Sc_1\backslash \{7\}$ to decode $3$ desired type-$\vv_2$ packets $ \big\{ W_{d_k,\Sc_1\backslash \{k\}}^{(i)}\big\}_{i=1}^3$, which constitute  a type-$\vv_2$ \sbf. 
Only type-$\vv_2$ packets are involved.
The same procedure applies to
all the remaining type-$\sv_1$ \mgrps.

\tit{\textbullet \,  \Deli within $\Sc_2=\{1,2,3,4,5,7,8\}$:} 
The \txs are $\Utx=\{4,5,7,8\}$.
The \deli involves type-$\vv_1$  packets $\big\{W_{d_k,\Sc_2\backslash \{k\}}^{(i)}:i\in [4],k\in [3]\big\}$   
and type-$\vv_2$  packets 
$\big\{W_{d_k,\Sc_2\backslash \{k\}}^{(i)}:i\in [3],k\in \Sc_2\backslash [3] \big\}$.
Defining $W^{(i)}  \eqdef \bigoplus_{k\in [3]} W_{d_k,\Sc_2\backslash \{k\} }^{(i)},\forall i\in[4]$, 
each \tx sends one \cmsg formed by XORing $3$ type-$\vv_1$ and $3$ type-$\vv_2$ packets (see Fig.~\ref{fig: S2 delivery, (9,3,2) example, PT first look}).
Consequently,  each user $k\in [3]$ decodes $ 4$ desired type-$\vv_1$ packets $\big\{W_{d_k,\Sc_2\backslash \{k\}}^{(i)}\big \}_{i=1}^4$, while each non-\tx user decodes $ 3 $ desired type-$\vv_2$ packets $\big\{W_{d_k,\Sc_2\backslash \{k\}}^{(i)} \big\}_{i=1}^3$. 
The same procedure applies to all type-$\sv_2$ \mgrps. \hfill $\lozenge$
\begin{table*}[ht]
\caption{Coded Messages  Associated with $\Sc_2$.}
\centering
\renewcommand{\arraystretch}{0.82}
\small \begin{tabular}{|c|c|c|} 
\hline 
\textrm{Tx $k$} & $Y^k_{\Rc_k}$ & Rx $\Rc_k$\\
 \thickhline
$4$ &  $W^{(1)}\oplus W_{d_5,\mathcal{S}_2\backslash\{5\}}^{(1)}\oplus W_{d_7,\mathcal{S}_2\backslash\{7\}}^{(1)} \oplus W_{d_8,\mathcal{S}_2\backslash\{8\}}^{(1)}$ & $\Sc_2\backslash \{4\} $ \\
 \hline
$ 5$ & $W^{(2)}\oplus W_{d_4,\mathcal{S}_2\backslash\{4\}}^{(1)}\oplus W_{d_7,\mathcal{S}_2\backslash\{7\}}^{(2)} \oplus W_{d_8,\mathcal{S}_2\backslash\{8\}}^{(2)}$ & $\Sc_2\backslash \{5\} $ \\
 \hline
$ 7$ & $W^{(3)}\oplus W_{d_4,\mathcal{S}_2\backslash\{4\}}^{(2)} \oplus W_{d_5,\mathcal{S}_2\backslash\{5\}}^{(2)} \oplus W_{d_8,\mathcal{S}_2\backslash\{8\}}^{(3)}$ & $\Sc_2\backslash \{7\} $ \\
 \hline
$ 8$ & $W^{(4)}\oplus  W_{d_4,\mathcal{S}_2\backslash\{4\}}^{(3)}\oplus W_{d_5,\mathcal{S}_2\backslash\{5\}}^{(3)}\oplus W_{d_7,\mathcal{S}_2\backslash\{7\}}^{(3)} $ & $ \Sc_2\backslash \{8\} $ \\
 \hline
\end{tabular}
\label{table: delivery schm, (9,3,2) example, PT first look}
\end{table*}
\end{example}

\begin{remark}[\Sbp  Analysis]
In Example~\ref{example: (9,3,2) PT first look}, \sbp reduction comes from both \sbf  saving and \fs saving. 
Excluding type-$\vv_3$  \sbfs   removes $tF(\vv_3)=18$  \pkts. Moreover,  type-$\vv_1$ and $\vv_2$  \sbfs are split into fewer than  $t=6 $  \pkts, yielding an additional  reduction  of $(t-4)F(\vv_1) + (t-3)F(\vv_2)=216$ packets. Overall, this results in a reduction of $234$ packets per file relative  to  the JCM  baseline $\Fjcm=t(F(\vv_1)+  F(\vv_2)+F(\vv_3))=504$. 
\end{remark}

\section{PT Design for Theorem~\ref{thm: order reduction}}
\label{section: gen ach schm, thm 1}

This section presents the PT design for Theorem~1, which partitions the users into \(K/2\) equal-size groups and achieves an order-wise subpacketization reduction over the JCM scheme. Sections
\ref{subsection: PT description, thm 1}
and \ref{subsec:subp analysis, thm 1}
give the construction and its subpacketization analysis, Section~\ref{subsection: examples, thm 1} presents illustrative examples, and Section~\ref{subsection: discussion, thm 1} discusses further implications of Theorem \ref{thm: order reduction}. Recall that \(\bar{t} \eqdef K-t\).

\subsection{General PT Design}
\label{subsection: PT description, thm 1}

For $(K,\overline{t})=(2m,2r)$ where $m\geq \overline{t},r\geq 1$,
consider the user grouping $\qv=(\underline{2}_{m})$ with $m=K/2$ groups each containing two users.
There are $V=r+1$ subfile types
\begin{align}
\label{eq: i-th subfile type, thm 1 scheme}
\vv_i & = \big(\underline{2}_{m-(r+i)+1},\underline{1}_{2(i-1)},\underline{0}_{r-i+1}\big),\; i\in [r+1]
\end{align}
Type $\vv_i$ \corrspds to a subset $\Tc$ consisting of $m-(r+i)+1$ full user groups and a single user from each of $2(i-1)$ additional  groups. There are $S=r$ \mgrp types 
\begin{align}
\label{eq: i-th multicast group type, thm 1}
\sv_i =\big(\underline{2}_{m-(r+i)+1},\underline{1}_{2i-1}^{\dagger},\underline{0}_{r-i}\big), \; \Ic_i =\{\vv_i, \vv_{i+1}\}, \; i\in [r]
\end{align}
The \txs  in $\sv_i$ are marked with $\dagger$. 
According to (\ref{eq: FS factor}), the local FS vectors can be computed as
\be 
\label{eq: local FS vector, gen schm, thm 1}
\bm{\alpha}_i \eqdef 
\left[\alpha_i(\vv_k)\right]_{k=1}^{r+1}=\left(\underline{\star}_{i-1},2(i-1),2i-1,\underline{\star}_{r-i}\right),\; i\in [r]
\ee
as shown in Table~\ref{table: FS table, thm 1 gen ach schm}\footnote{For brevity of notation, the $\star$ symbols are omitted.}.
\begin{table}[t]
\caption{FS Table for the PT Design of Theorem~\ref{thm: order reduction}.} 
\begin{center}
\renewcommand{\arraystretch}{0.8}
\begin{tabular}{|c" c| c |c| c| c |c| c| c| c|  c| c |}
\hline
& $\vv_1$ & $\vv_2$  & $\vv_3$ &$\vv_4$&$\cdots $&$\vv_i$&$\vv_{i+1}$ &$\vv_{i+2}$ &$\cdots $ &$\vv_{r}$ & $\vv_{r+1}$\\
\thickhline
$\bm{\alpha}_1  $ &$0$& $1$  &   &  &  & &   & & & & \\
\hline
$\bm{\alpha}_2$&& $2$ & $3$  &  &  &  & &  & &&  \\
\hline
$\bm{\alpha}_3   $& &  & $4 $ & $5$ &  &  &  &  && & \\
\hline
$\vdots$ & &  &   &   &$\ddots$  &    &&  & && \\
\hline
$\bm{\alpha}_i   $ & &  &   & &  & $2(i-1)$&$2i-1$  & &  &&  \\
\hline
$\bm{\alpha}_{i+1}   $ & &  &   &  &  &  &$2i$ &$2i+1$& & &  \\
\hline
$\vdots $ & &  &   &  &  &  &  &  &$\ddots$&&   \\
\hline
$\bm{\alpha}_r$ & &  &   &  &  &  &  &&&  $2(r-1)$& $2r-1$  \\
\hline 
\end{tabular}
\end{center}
\label{table: FS table, thm 1 gen ach schm}
\end{table}
The global FS vector $\alphaglobal=\left[\alphaglobalnb(\vv_i)\right]_{i=1}^{r+1}$ is then given by
\begin{eqnarray}
\label{eq: global FS factor, thm 1}
\alphaglobalnb(\vv_i)   =\left\{\begin{array}{ll}
0,&  i=1 \vspace{-.2cm}\\
2^{r-1}\prod_{k=1}^{r-1}k, &  i=2 \vspace{-.2cm} \\
\prod_{k=1}^{i-1}(2k-1)\left(\prod_{k=i-1}^{r-1}2k\right), &  i\in[3:r-1] \vspace{-.2cm}\\
2(r-1)\prod_{k=1}^{r-1}(2k-1) & i=r\vspace{-.2cm}\\
  \prod_{k=1}^{r}(2k-1)  &  i=r+1  
\end{array}\right.
\end{eqnarray}
Note that type-$\vv_1$ subfiles are excluded since $\alphaglobalnb(\vv_1)=0$. 
Hence, the \sbp is equal to
$\Fpt=\sum_{i=1}^{r+1}\alphaglobalnb(\vv_i) F(\vv_i)$, where the number of type-$\vv_i$ subfiles is given by
\be
\label{eq: F(v_i), thm 1}
F(\vv_i) =2^{2(i-1)}\binom{m}{2(i-1)}   \binom{m-2(i-1)}{r-i+1},\; i\in[r+1].
\ee

\subsection{\Sbp Analysis} 
\label{subsec:subp analysis, thm 1}
The above PT design attains an order-wise (in $K$) \sbp reduction over the \jsch, namely,  $\Fpt/\Fjcm=\Theta(1/K)$, which vanishes when  $K\to \infty$.  For finite $K$,  $\Fpt  \le  \Fjcm$ is guaranteed due to \sbf exclusion. This is proved \af.

\subsubsection{Asymptotic Regime $K\to \infty$}
Since the global FS factors $\{\alphaglobalnb(\vv_i)\}_{i=1}^{r+1}$
are nondecreasing in  $i$, 
the subpacketization ratio of PT relative to JCM satisfies 
\begin{align}
\label{eq: subp rartio upper bound, thm 1}
\frac{F_{\rm PT}}{F_{\rm JCM}} & =\frac{\sum_{i=1}^{r+1}\alphaglobalnb(\vv_i) F(\vv_i)}{t\binom{K}{t}} 
\le 
\frac{ \alphaglobalnb(\vv_{r+1})}{t}   \frac{\sum_{i=1}^{r+1} F(\vv_i)}{\binom{K} {t}}
= \frac{ \alphaglobalnb(\vv_{r+1})}{t},
\end{align}
where we used the identity $\binom{K}{t}=\sum_{i=1}^{V} F(\vv_i)$, which holds for any PT design  since all subfile types  collectively cover all  $t$-subsets of $[K]$.  
\Thf, for any fixed $\overline{t}$, 
\be 
\label{eq: Fpt/Fjcm ratio bound, sec. proof of thm 1}
\frac{F_{\rm PT}}{F_{\rm JCM}} \le  
\frac{ \prod_{i=1}^{\overline{t}/2}(2i-1)     }{K-\overline{t}} = \Theta\left(  \frac{1}{K} \right),
\ee 
implying an order-wise (in $K$) reduction  over JCM.
This gain  stems mainly from \fs savings: the largest FS factor 
\be 
\alphaglobalnb(\vv_{r+1})=\prod\nolimits_{k=1}^{\overline{t}/2}(2k-1)=\Theta(1) 
\ee 
does \tit{not scale with $K$} for fixed $\tbar$,  whereas the JCM FS factor  $\alphajcm= t=K-\tbar=\Theta(K)$ \tit{grows linearly with $K$}. 
Hence, the PT design \achvs an $\Theta(K)$ reduction over JCM. In addition, subfile saving is also available  due to the exclusion of type-$\vv_1$ \sbfs.
Since $\overline{t}=K(1-\mu)$ ($\mu$ is the per-user memory size), letting $K\to \infty$ with fixed $\overline{t}$ implies $\mu \to 1$. Therefore, Theorem~\ref{thm: order reduction} establishes an order-wise reduction in the large memory regime.
In fact, the vanishing ratio \sbp 
$ \Fpt/\Fjcm \overset{K\to \infty}{\to} 0 $ continues to hold even  when $\tbar$ is allowed to scale as $ \tbar=  O\left(\log_2\log_2 K\right)$,  as shown in the following lemma.
\begin{lemma}
\label{lemma: vanishing upper bound, thm 1 gen schm}
\emph{The \sbp ratio bound $\big( \prod_{i=1}^{\overline{t}/2}(2i-1) \big) /(K-\overline{t})$ on the RHS of (\ref{eq: Fpt/Fjcm ratio bound, sec. proof of thm 1}) vanishes as $K\to\infty$ if  $\overline{t}= O\left(\log_2\log_2 K\right)$.}
\end{lemma}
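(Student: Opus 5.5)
Write $r=\overline{t}/2$. The plan is to bound the numerator by something of the form $2^{r\log_2 r}$ and then check that this is $o(K)$ when $\overline{t}=O(\log_2\log_2 K)$.

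For the numerator we use the crude estimate
\[
\prod_{i=1}^{r}(2i-1)\le (2r)^r=\overline{t}^{\,\overline{t}/2}=2^{(\overline{t}/2)\log_2\overline{t}}.
\]
Alternatively, one can note that $\prod_{i=1}^{r}(2i-1)=(2r)!/(2^r r!)$ and apply Stirling's formula, but the crude bound should suffice.

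For the denominator we need $K-\overline{t}\ge K/2$ for $K$ large. This holds because $\overline{t}\le c\log_2\log_2 K$ for some constant $c$, which is eventually at most $K/2$; it is also consistent with the standing assumption $\overline{t}\le K/2$ of Theorem~\ref{thm: order reduction}. Combining the two estimates gives
\[
\frac{\prod_{i=1}^{\overline{t}/2}(2i-1)}{K-\overline{t}}\le \frac{2\cdot 2^{(\overline{t}/2)\log_2\overline{t}}}{K}.
\]

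The main step is to show that the exponent $(\overline{t}/2)\log_2\overline{t}$ is at most $(1-\epsilon)\log_2 K$ for some fixed $\epsilon>0$, or even just $\log_2 K-\omega(1)$. Setting $x=\log_2\log_2 K$ and using $\overline{t}\le cx$, we get
\[
\frac{\overline{t}}{2}\log_2\overline{t}\le \frac{cx}{2}\log_2(cx)=O(x\log x).
\]
Since $\log_2 K=2^{x}$ and $x\log x=o(2^x)$, the exponent is $o(\log_2 K)$. The ratio is therefore at most $2\cdot 2^{o(\log_2 K)}/K=K^{-1+o(1)}\to 0$.

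The only delicate point is this last comparison. It is not a tightness issue: there is substantial slack, and in fact the same argument works up to $\overline{t}=O(\log K/\log\log K)$. One only has to make sure that the implied constant in the $O(\cdot)$ does not interfere, and it does not, because $x\log x$ grows slower than $2^x$ for any constant factor $c$. I would also note that when $\overline{t}$ is bounded the claim reduces directly to the $\Theta(1/K)$ statement in (\ref{eq: Fpt/Fjcm ratio bound, sec. proof of thm 1}). The final assertion of Theorem~\ref{thm: order reduction}, namely $\mu=1-O(\log_2\log_2 K/K)$, follows because it is equivalent to $\overline{t}=K(1-\mu)=O(\log_2\log_2 K)$.
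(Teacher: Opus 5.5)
Your proof is correct and is essentially the paper's argument. Both bound $\prod_{i=1}^{\overline{t}/2}(2i-1)$ by $\overline{t}^{\,\overline{t}/2}$, use $K-\overline{t}\ge K/2$, and compare an exponent of order $x\log x$ with $2^x=\log_2 K$, where $x=\log_2\log_2 K$. Your version is slightly cleaner: the paper only treats the subsequence $K=2^{2^k}$ with $\overline{t}=2C\log_2\log_2 K$ exactly, whereas you cover all $K$ and all $\overline{t}\le c\log_2\log_2 K$.

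One correction to your aside: the extension to $\overline{t}=O(\log K/\log\log K)$ holds only when the implied constant is small enough. If $\overline{t}=c\log_2 K/\log_2\log_2 K$ with $c>2$, Stirling's formula gives a numerator of $K^{c/2-o(1)}$, so the bound diverges.
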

\begin{proof} 
See Appendix \ref{sec:proof of vanishing subp ratio, app}.
\end{proof}

\subsubsection{Finite $K$ Regime}
For large values of $\overline{t}$, the upper bound
$ \frac{\alpha^{\rm  global}(\vv_{r+1})}{t} = \frac{ \prod_{i=1}^{\overline{t}/2}(2i-1) }{K-\overline{t}} $ in (\ref{eq: subp rartio upper bound, thm 1}) may exceed $1$.
Nevertheless, $\Fpt \le  \Fjcm$ can always be ensured by using a \diff \tx selection from~(\ref{eq: i-th multicast group type, thm 1}).
\Ip,  the same selection of~(\ref{eq: i-th multicast group type, thm 1}) is still applied to $\vv_1$; however, all $t+1$ users are chosen as \txs  in the  remaining   \mgrp types. This results  in the \lfsvs 
$\bmalpha_1=\left[\alpha(\vv_1),\alpha(\vv_2)\right]=(0,1)$, $\bmalpha_i=\left[\alpha(\vv_i),\alpha(\vv_{i+1})\right]=(t,t),i\ge 2$, yielding  $\alphaglobal =(0,\underline{t}_r)$.  Thus,  only subfile saving gain is available and 
\be 
\label{eq: subp ratio upper bound 1, thm 1}
\frac{F_{\rm PT}}{F_{\rm JCM}}= \frac{ \sum_{i=2}^{r+1}tF(\vv_i)       }{\sum_{i=1}^{r+1}tF(\vv_i)} = 1 -\frac{F(\vv_1)}{\sum_{i=1}^{r+1}F(\vv_i)} \overset{(\ref{eq: F(v_i), thm 1})}{=}   1  -
\frac{ \binom{K/2}{\overline{t}/2}}{\binom{K}{t}} <1.
\ee 
Combining (\ref{eq: Fpt/Fjcm ratio bound, sec. proof of thm 1}) and (\ref{eq: subp ratio upper bound 1, thm 1}), we conclude that 
\be 
\frac{F_{\rm PT}}{F_{\rm JCM}} \le \min\left\{ \frac{\prod_{i=1}^{\overline{t}/2}(2i-1)}{K-\overline{t}}, 1\right\},
\ee
completing the proof of \Thm \ref{thm: order reduction}.

\subsection{Concise Examples}
\label{subsection: examples, thm 1}

\begin{example}
\label{example 1: subsection examples, thm 1}
Consider $(K,N,M)=(6,3,2)$ with $(t,\overline{t})=(4,2)$. Let  the user assignment  be $\Qc_1=\{1,2\}$, $\Qc_2=\{3,4\}$, and $\Qc_3=\{5,6\}$ under $\qv=(2,2,2)$. The \sbf and \mgrp types are $
 \vv_1=(2,2,0),  \vv_2=(2,1,1), \sv=(2,2,1^{\dagger}), \Ic= \{\vv_1, \vv_2\}$ 
as illustrated in Fig.~\ref{fig: (6,2) example thm 1, subfile n multicast group type}.
\begin{figure}[t]
    \centering
    \includegraphics[width = 0.7\textwidth]{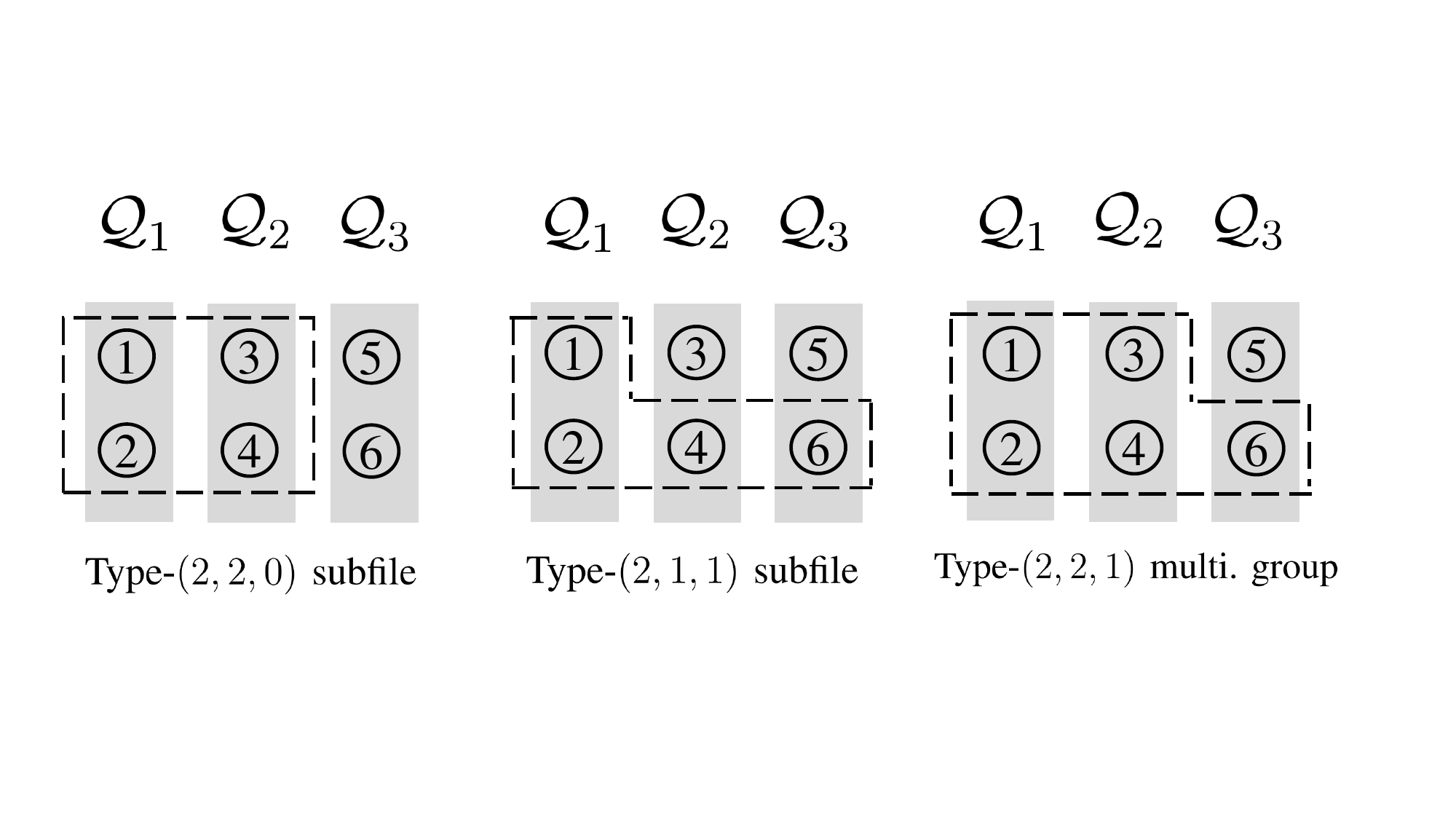}
    \caption{Subfile and \mgrp types for $(K,\overline{t})=(6,2)$.}
    \label{fig: (6,2) example thm 1, subfile n multicast group type}
\end{figure}
(\ref{eq: global FS factor, thm 1}) yields    
$\alphaglobal=(0,1)$, so type-$\vv_1$ subfiles are excluded and no \fs is required for type-$\vv_2$.  Hence, $\Fpt=F(\vv_2)=12$, which is only one fifth of $\Fjcm=60$. 

\tbf{File splitting \&  cache placement.}  Each file is split into $12$ \sbfs:
\be
W_n =\lef\{ W_{n,\Tc}: \forall \Tc, \type(\Tc)=(2,1,1)     \rig\}, \; n =1,2,3
\ee

\tbf{\Deli phase.} 
Consider \deli within $\Sc=\{1,2,3,4, 6\}$ with a single \tx $\Utx=\{6\}$. \User{6} sends $ Y_{  \{1,2,3,4\}   } ^6 = \bigoplus_{k\in[4]} W_{d_k, \Sc \bksl \{k\}   } $ from  which each remaining user decodes one desired (type-$\vv_2$) subfile. 
\Deli for other \mgrps  follows analogously. \hfill $\lozenge$
\end{example}

\subsection{Discussion}
\label{subsection: discussion, thm 1}

This section provides further insights into the PT design of Theorem 1, including 1) subpacketization optimality within a class of equal-grouping PT designs, 2)  nonuniqueness of PT designs achieving order-wise reduction, 3) extension to broader \param regimes, 
and 4) connection to existing methods.

\subsubsection{\Sbp Optimality among Equal Groupings}

An interesting observation is that 
the \grpg  $\qv=(\underline{2}_{K/2})$ used in Theorem~\ref{thm: order reduction}
actually achieves the minimum \sbp among all equal-grouping PT designs, as shown in Lemma~\ref{lemma: min sbp series grouping, thm 1 gen schm}.
\begin{lemma}
\label{lemma: min sbp series grouping, thm 1 gen schm}
\emph{For $(K,\overline{t})=(mq,2)$, the  grouping $\qv=\left(\underline{2}_m\right)$ achieves the minimum \sbp among all groupings 
$ 
\big\{\qv=\big(\underline{q}_m\big): mq=K,  m,q \ge2\big\}
$. The minimum \sbp is  $F_{\rm PT}^*=\frac{K(K-2)}{2}$.}
\end{lemma}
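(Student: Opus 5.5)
The plan is to enumerate, for $\overline{t}=2$, every admissible PT design under an equal grouping $\qv=(\underline{q}_m)$ with $mq=K$ and $m,q\ge 2$. We then compute each design's subpacketization in closed form and show that all of them are at least $K(K-2)/2$, with equality for $q=2$. By Remark~\ref{rmk:MC under equal&unequal q, PT overview}, the memory constraint \eqref{eq: MC} holds automatically under equal groupings, so only the objective in \eqref{eq: PT OPT1} needs to be minimized.

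\textbf{Types.} Since $t=K-2$, a subfile label $\Tc$ is the complement of a pair of users.

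There are exactly two subfile types:
\begin{itemize}
\item[1)] $\vv_1=(\underline{q}_{m-1},q-2)$, where the missing pair lies in one group, with $F(\vv_1)=m\binom{q}{2}$;
\item[2)] $\vv_2=(\underline{q}_{m-2},q-1,q-1)$, where the missing pair lies in two different groups, with $F(\vv_2)=\binom{m}{2}q^2$.
\end{itemize}
When $q=2$, the type $\vv_1$ reads $(\underline{2}_{m-1},0)$.

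A multicast group $\Sc$ has size $K-1$, so it is the complement of a single user. Hence there is a single multicast group type $\sv=(\underline{q}_{m-1},q-1)$. Its unique sets are $\hatu_1$, the $g_1=(m-1)q$ users lying in full groups, and $\hatu_2$, the $g_2=q-1$ remaining users of the deficient group.

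Removing a user $k\in\hatu_1$ from $\Sc$ yields a label whose missing pair is split across two groups, i.e., type $\vv_2$. Removing $k\in\hatu_2$ yields a label of type $\vv_1$. Thus $\Ic=\{\vv_1,\vv_2\}$.

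\textbf{Transmitter selections.} Because there is only one multicast group type, the vector LCM in Definition~\ref{def: LCM} reduces to the single local FS vector: $z=1$ minimizes the norm, and any zero entry means the type is excluded. By \eqref{eq: FS factor}, the three nonempty choices of $\Dtx$ give the following designs.
\begin{itemize}
\item[(A)] $\Dtx=\{2\}$: $\alphaglobal=(q-2,\,q-1)$, so
\[
F_A=(q-2)\,m\binom{q}{2}+(q-1)\binom{m}{2}q^2 .
\]
\item[(B)] $\Dtx=\{1\}$: $\alphaglobal=\big((m-1)q,\,(m-1)q-1\big)$.
\item[(C)] $\Dtx=\{1,2\}$: $\alphaglobal=(t,t)$, i.e., the JCM value $t\binom{K}{2}$.
\end{itemize}

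\textbf{Comparison.} For $q=2$, design (A) gives $F_A=2m(m-1)=K(K-2)/2$, which is the claimed value.

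For general $q$, the inequality $F_A\ge mq(mq-2)/2$ is equivalent, after dividing by $mq/2$, to
\[
(q-1)q(m-1)+(q-2)(q-1)\ \ge\ mq-2 .
\]
Using $(q-1)q(m-1)\ge q(m-1)$, it suffices to have $(q-2)(q-1)\ge q-2$, which holds for all $q\ge 2$.

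For (B) and (C), every FS factor is at least $(m-1)q-1\ge q-1$, and the (B) factors dominate the (A) factors entrywise. Hence $F_B,F_C\ge F_A$; this needs a short entrywise check.

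The main care point is the bookkeeping of which removed user produces which subfile type, together with the treatment of a zero local factor as exclusion. It is this exclusion that makes $q=2$ special, since there $\vv_1$ is dropped entirely. The minimum over all groupings is therefore $K(K-2)/2$, attained by $\qv=(\underline{2}_m)$.
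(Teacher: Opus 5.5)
Your proposal is correct and follows the paper's proof. You use the same two subfile types, the single multicast group type $(\underline{q}_{m-1},q-1)$, and the deficient-group transmitter choice that gives the global FS vector $(q-2,q-1)$. The paper simply factors $F_A=\frac{mq(q-1)}{2}\big[(q-2)+(m-1)q\big]=\frac{K(q-1)(K-2)}{2}$, which makes minimality at $q=2$ immediate and would replace your inequality argument. Your extra check that $\mathcal{D}_{\rm Tx}=\{1\}$ and $\mathcal{D}_{\rm Tx}=\{1,2\}$ are entrywise dominated is not in the paper. It usefully establishes minimality over all transmitter selections, not only the one the paper fixes.
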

\begin{proof} 
For a general \grpg  $\qv=\big(\underline{q}_m\big)$ with $m,q\ge 2$, there are two subfile types $ 
\vv_1=\big(\underline{q}_{m-1},q-2\big)$,   $\vv_2=\big(\underline{q}_{m-2},\underline{(q-1)}_{2}\big)
$, and one \mgrp  type 
$\sv =\big( \underline{q}_{m-1}, q-1^{\dagger}\big)$ with $\Ic=\{\vv_1,\vv_2\}$. The global FS vector is
$\alphaglobal=(q-2,q-1)$, yielding 
\begin{align}
\label{eq: F(q), lemma min sbp series grouping, thm 1 gen schm}
\Fpt(q) & =\alphaglobal [F(\vv_1), F(\vv_2)]^{\rm T}=
(q-2)F(\vv_1) + (q-1)F(\vv_2)= \frac{K(q-1)(K-2)}{2},
\end{align}
which is minimized at $q=2$ with $\Fpt(2)=\frac{K(K-2)}{2}$.
\end{proof}

\subsubsection{Nonuniqueness of PT Designs}
The PT designs  achieving \ow \sbp reduction are \tit{not} unique. For instance, \Ex \ref{example: q=(4...) for t_bar=4, discussion thm 1} shows that such a  reduction is attainable   under \grpg $\qv=(\underline{4}_m)$ for $K=4m$. Such nonuniqueness 
is also demonstrated by
Examples \ref{example: discussion of Thm 1, PT design not unique} and \ref{example: example 3m equal grouping, continued} in \subseq  sections.

\begin{example}
\label{example: q=(4...) for t_bar=4, discussion thm 1}
Consider $(K,\bar{t})=(4m, 4)$ with $m\ge 5$. Under the grouping $\qv=(\underline{4}_m)$, there are $5$  \sbf types and $3$ \mgrp  types, 
\begin{align}
& \vv_1=\left(\underline{4}_{m-1},0 \right ), \;
\vv_2=\left(\underline{4}_{m-2},3,1 \right ),\;
\vv_3=\left(\underline{4}_{m-2}, 2,2 \right ),\;
\vv_4=\left(\underline{4}_{m-3},3,3,2 \right ),\;
\vv_5=\left(\underline{4}_{m-4},\underline{3}_4 \right ),\notag\\
 & \sv_1 =\big( \underline{4}_{m-1},1^{\dagger}   \big),\;
 \sv_2 =\big( \underline{4}_{m-2},3,2^{\dagger}   \big),\;
 \sv_3 =\big( \underline{4}_{m-3},\underline{3}_3^{\dagger}  \big),\notag\\
 & \Ic_1=\{\vv_1,\vv_2\},\;
 \Ic_2=\{\vv_2,\vv_3,\vv_4\},\;
 \Ic_3=\{\vv_4,\vv_5\}.
\end{align}
The \lfsvs are $\bm{\alpha}_1=(0,1, \star, \star, \star)$, $\bm{\alpha}_2=(\star,1,2, 2, \star)$, and $\bm{\alpha}_3=(\star, \star, \star,8,9)$,
yielding $\alphaglobal=(0,4,8,8,9)$. 
Hence, 
$
F_{\rm PT}(\qv) = \alphaglobal[F(\vv_1), \cdots, F(\vv_5)  ]^{\rm T} = \frac{K(K-4)(3K(K-4)+8)}{8}
$, and
\be
\frac{F_{\rm PT}(\qv)}{F_{\rm JCM}} = 
\frac{9K(K-4)+24}{ (K-1)(K-2)(K-3)} =  \Theta \lef( \frac{9}{K}\rig),
\ee 
which confirms an order-wise reduction.
By contrast, using  the grouping $\qv^{\prime}=(\underline{2}_{2m})$ from \Thm~\ref{thm: order reduction} yields   $\Fpt(\qv^{\prime})=\frac{K(K-2)^2(K-4)}{8}$, and 
\be
\frac{ \Fpt(\qv^{\prime})}{F_{\rm PT}(\qv)}= 
\frac{K(K-2)^2(K-4)}{K(K-4)(3K(K-4)) +8)} \overset{K\to \infty}{\to } \frac{1}{3}. 
\ee
Although both $\qv$ and $\qv^\prime$
achieve order-wise reductions, the \grpg
$\qv^\prime$   attains an asymptotically three-fold smaller \sbp than $\qv$, demonstrating an advantage of \Thm \ref{thm: order reduction}.
\hfill $\lozenge$
\end{example}

\subsubsection{Broader Regimes---Beyond Even $K$ and $\tbar$}
The  PT design in \Thm \ref{thm: order reduction} requires
 both  $K$ and $\tbar$ to be even.
We relax  these  conditions  and present several  PT designs  that  apply  to odd $K$  or $\tbar$, complementing \Thm \ref{thm: order reduction}.

\tbf{Design 1.} 
First, Lemma \ref{lemma: min sbp series grouping, thm 1 gen schm} shows that for $\tbar=2$ and \tit{\arbi} $K=mq$,   an \ow  reduction is \achvb using 
$\qv=\big(\underline{q}_m\big)$,
 without requiring $K$ to be even.
\Specly, by  (\ref{eq: F(q), lemma min sbp series grouping, thm 1 gen schm}),  
\be 
\frac{\Fpt(q)}{F_{\rm JCM}}= \frac{q-1}{K-1}= \Theta \lef( \frac{q}{K}\rig), 
\ee 
implying an \ow reduction for fixed  \grp size $q$ as  $K\to \infty$.

\tbf{Design 2} (Odd $K$ and $\tbar=2$)\tbf{.} 
The next \ex  shows that for $\tbar=2$  and odd $K$, an \ow reduction can be  \achvd  using an  \tit{\uneq} \grpg.

\begin{example}[Order-wise Reduction with \Uneq  Grouping]
\label{example: discussion of Thm 1, PT design not unique} 
Consider $(K,\tbar)=(2m+1,2)$  with $m\ge 3$ and grouping $\qv=(3,\underline{2}_{m-1})$. The subfile and \mgrp types are 
\begin{align}
&\vv_1=\big(1,\underline{2}_{m-1} \big), \vv_2=\big(2,\underline{2}_{m-2},1 \big),
\vv_3=\big(3,\underline{2}_{m-3},\underline{1}_2 \big), \vv_4=\big(3,\underline{2}_{m-2},0 \big), \notag\\
& \sv_1=\big(2^{\dagger},\underline{2}_{m-1}\big),
\sv_2=\big(3,\underline{2}_{m-2}, 1^{\dagger}\big), \;
\Ic_1=\{\vv_1,\vv_2\}, \Ic_2=\{\vv_2,\vv_3,\vv_4\}. 
\end{align}
The local and global FS vectors are $\bm{\alpha}_1=(1,2, \star, \star)$, $\bm{\alpha}_2=(\star, 1,1,0)$, and $\alphaglobal=(1,2,2,0)$.
The memory constraint (\ref{eq: MC}) is satisfied since $\alphaglobal \bm{\Delta}_1^{\rm T}=0$ (see Table~\ref{table: MC table, example, nonunique design, discussion, thm 1}).
\begin{table}[t]
\caption{MC Table for Example~\ref{example: discussion of Thm 1, PT design not unique}. }
\begin{center}
\renewcommand{\arraystretch}{0.8}
\begin{tabular}{|c" c| c| c| c|}
\hline
 & $\vv_{1}$   & $\vv_{2}$ &  $\vv_{3}$ & $\vv_{4}$ \\
\thickhline
 $\mbf{F}_1$ & $1$   & $4(m-1)$ &  $2(m-1)(m-2)$ & $m-1$ \\
\hline
$\mbf{F}_2$ & $3$   & $3(2m-3)$ &  $2(m-2)^2$ & $m-2$ \\
\hline
$\bm{\Delta}_1$ & $2$   & $2m-5$ &  $-2(m-2)$ & $-1$ \\
\hline
\end{tabular}
\end{center}\label{table: MC table, example, nonunique design, discussion, thm 1}
\end{table}
Hence, 
\begin{align}
 F_{\rm PT}  =\alphaglobal\Fm^{\rm T}
 =(1,2,2,0)\left[  3,6(m-1),2(m-1)(m-2),m-1 \right]^{\rm T} =K(K-2),
\end{align} 
and  ${F_{\rm PT}}/{F_{\rm JCM}}=2/(K-1)$, 
implying an order-wise reduction. \hfill $\lozenge$
\end{example}

\tbf{Design 3.} (Odd $\tbar$)\tbf{.}
\Ex \ref{example: example 3m equal grouping, continued} shows  that an \ow reduction is \achvb for $\tbar=3$ using the  equal \grpg $\qv=(\udl{3}_m)$.
The same approach extends to larger odd $\tbar$, which we omit for brevity.

\begin{example}[PT Design for $\tbar=3$]
\label{example: example 3m equal grouping, continued}
Consider $(K, \tbar)=(3m, 3)$ with $m\ge 3$, where $K$ can  be  even or odd. With $\qv=(\udl{3}_m)$, the  \sbf and \mgrp types are 
\begin{align}
 & \vv_1=( \underline{3}_{m-3} ,\underline{2}_3     ), 
\vv_2=( \underline{3}_{m-2} ,2,1  ), \vv_3= ( \underline{3}_{m-1} ,0  ), \notag\\
& \sv_1  =\big(  \underline{3}_{m-1}, 1^{\dagger} \big), \sv_2 =\big(  \underline{3}_{m-2}, \underline{2}_{2}^{\dagger} \big), \; \Ic_1=\{\vv_2, \vv_3\},   \Ic_2=\{\vv_1, \vv_3\}. 
\end{align}
The local and global \fs vectors are $\bm{\alpha}_1=(\star, 1,0)$,  $\bm{\alpha}_2=(4,3,\star)$, and $\alphaglobal=(4,3,0)$. Hence,
\begin{align}
F_{\rm PT} &= (4,3,0)\left[ F(\vv_1), F(\vv_2),F(\vv_3)\right]^{\rm T}\notag\\
&  = (4,3,0)\left[  \binom{m}{3}\binom{3}{1}^3, \binom{m}{1}\binom{3}{1}\binom{m-1}{1}\binom{3}{2},\binom{m}{1}\binom{3}{0} \right]^{\rm T} =  \frac{2K(K-3)(K-3/2)}{3},
\end{align}
and 
${F_{\rm PT}}/{F_{\rm JCM}}= \frac{4(K-3/2)}{(K-1)(K-2)}=\Theta(4/K)$, indicating an \owr. \hfill $\lozenge$ 
\end{example}

\subsubsection{Connection to Existing  DPDA Designs~\cite{wang2017placement}}
\label{subsubsection: connection to DPDA, discussion thm 1}
Wang \etal~\cite{wang2017placement} showed that lower-than-JCM \sbps  are \achvb  for $t=2,K-2$ under the DPDA \frmwk while ensuring the  optimal rate:
i) For $t=2$, $F_{\rm DPDA}=K^2/4$ (Lemma 3, \cite{wang2017placement}); 
ii) For $t=K-2$, $\Fdpda=K(K-2)$ for odd $K$, and $\Fdpda= \frac{ K(K-2)}{2}$ for even $K$ (Lemma 4, \cite{wang2017placement}). 
These results  are also \achvb by PT and
the corresponding designs are given in
Appendix \ref{app:PT for DPDA}.

\section{PT Design for \Thm~\ref{thm: constant reduction}}
\label{section: gen ach schm, thm 2}

This section presents the  PT design for \Thm~\ref{thm: constant reduction} under a two-\grp equal user \grpg $\qv=(K/2,K/2)$, where $K$ is even.
The design \achvs  a \sbp at most half that of the JCM scheme. 
Unlike \Thm \ref{thm: order reduction}, which 
only applies to the large-\memo regime $ \mu \ge 1/2$,\footnote{From Section \ref{subsection: PT description, thm 1}, for $K=2m$, it is required that $\tbar \le  m$, \ie,  $K-t \le K/2 \Rightarrow  \mu \ge  1/2$.} this design  covers the full memory range $ 0< \mu <1$. 
Section~\ref{subsection: general PT design, thm 2} gives the construction and its  \sbp analysis, 
Section~\ref{subsec:examples, thm 2 gen PT design} presents examples, and 
Section~\ref{subsection: discussion of thm 2} provides further 
discussion.

\subsection{General PT Design}
\label{subsection: general PT design, thm 2}

Let $(K,t)=(2q, 2r)$ with $r \le q$, and consider the  equal grouping $\qv=(q,q)$.  
The number of subfile and \mgrp types are given by
\be 
\label{eq: (V,S) general ach. scheme thm 2}
V=\frac{t}{2} +  1 -\max\left\{t-\frac{K}{2},0\right\},\quad 
S=\frac{t}{2}   + 1 -\max\left\{t-\frac{K}{2}+1,0\right\},
\ee 
\soth $V=S=t/2+1$ if $ t\le K/2-1$, and $V=S+1=\frac{K-t}{2}  +1$  if $t\ge K/2$.
The \sbf and \mgrp types are given by
\begin{subequations}
\label{eq: subfile and multi. group types, thm 2}
\begin{align}
\vv_i &=(t-X-i+1, X+i-1),\;  i\in [V]
\label{eq: i-th subfile type, thm 2} \\
\sv_i &=\left(t-Y-i+2, Y+i-1\right), \; i\in [S]\label{eq: j-th multi. group type, thm 2} 
\end{align}
\end{subequations}
where $X\eqdef \max\{t-K/2,0\}$ and $Y\eqdef \max\{t+1-K/2,0\}$.
We consider the cases $t\le K/2-1$ and $t\ge K/2$ separately.

\begin{table*}[t]\caption{FS Table for Case 1} 
\label{table: FS table gen. ach. case 1, thm 2} 
\begin{center}
\renewcommand{\arraystretch}{0.8}
\begin{tabular}{|c" c| c| c |c |c |c |c| c| c| c|c |}
\hline
& $\vv_1$ & $\vv_2$ & $\vv_3$ &$\cdots $ &$\vv_{i-1}$ &$\vv_i$ &$\vv_{i+1}$ &$\cdots $ &$\vv_{r-1}$&$\vv_r$ & $\vv_{r+1}$\\
\thickhline
  $\bmalpha_1$ &$t$&   &   &  &  & &   & & & & \\
 \hline
   $\bmalpha_2$& $0$ & 1 &   &  &  &  & &  & &&  \\
 \hline
   $\bmalpha_3$& & 1 & 2  &  &  &  &  &  && & \\
 \hline
 $\vdots$ & &  &   & $\ddots$  &  &    &&  & && \\
 \hline
  $\bmalpha_i$ & &  &   & & $i-2$ & $i-1$&  & &  &&  \\
 \hline
   $\bmalpha_{i+1}$ & &  &   &  &  &$i-1$  & $i$ && & &  \\
 \hline
   $\vdots $ & &  &   &  &  &  &  & $\ddots$ &&&   \\
 \hline
   $\bmalpha_{r}$ & &  &   &  &  &  &  &&$r-2$&  $r-1$&  \\
 \hline 
  $\bmalpha_{r+1}$ & &  &   &  &  &  &  &  &&$r-1$ &$r$\\
  \thickhline
    $\alphaglobal$ &0 & 1 & 2  & $\cdots$ & $i-2$ &$i-1$ &$i$  &$\cdots$ &$r-2$ & $r-1$ &$r$  \\
    \hline
\end{tabular}
\end{center} 
\end{table*}

\subsubsection{{Case 1:} $t\le K/2-1$}
\Itc, $V=S=r+1$. The subfile and \mgrp types are  
$\vv_i =\left( t-i+1, i-1 \right)$ and
$\sv_i=(t-i+2, i-1),i\in[r+1]$. \Tx selection is given by 
\begin{align}
\label{eq:tx select, case 1, thm 2}
 & \sv_1=\big(t+1,0 \big), \;
 \sv_i =\big(  t-i+2, (i-1)^{\dagger} \big),\;\forall i\in[2:r+1]
\end{align}
with involved \sbf types  $\Ic_1=\{\vv_1\},  \Ic_i=\{\vv_{i-1},\vv_i\},i\in [2:r+1]$.
The \lfsvs are
$\bmalpha_1=(t, \udl{\star}_{V-1})$ and
$\bmalpha_i=(\udl{\star}_{i-2}, i-2, i-1, \udl{\star}_{V-i} ), \forall i\in[2:V]$ (see  Table~\ref{table: FS table gen. ach. case 1, thm 2}), yielding 
\be
\label{eq: global FS vector, case 1, thm 2}
\alphaglobal=(0,1,2,\cdots,r-1,r).
\ee
Thus, type-$\vv_1$ \sbfs are excluded, and all remaining types have FS factors  at most $r=t/2$, achieving both \sbf saving and \fs saving gains.
The resulting \sbp is
\begin{align}
\Fpt & = \alphaglobal \lef(\big[F(\vv_i)\big]_{i=1}^{r+1}\rig)^{\rm T}=\sum_{i=1}^{r+1}(i-1)F(\vv_i),
\end{align}
where $F(\vv_i) =2\binom{q}{i-1}\binom{q}{t-i+1},i\in [r]$ and $F(\vv_{r+1})=\binom{q}{r}^2$.
\Conseqtly,
\be
\frac{F_{\rm PT}}{F_{\rm JCM}}
=\frac{\sum_{i=1}^{r+1}(i-1)F(\vv_i)}{t  \binom{K}{t}   }
=\frac{\sum_{i=1}^{r+1}(i-1)F(\vv_i)}{t\sum_{i=1}^{r+1}F(\vv_i)}
\le 
r/t=1/2  \Rightarrow \frac{F_{\rm PT}}{F_{\rm JCM}} \le  \frac{1}{2},
\ee
where  the actual ratio is typically smaller than  $1/2$ and increasing with $t$ (see Fig.~\ref{fig: actual sub. ratio, case 1, thm 2}). 
\begin{figure*}[ht]
\centering
\begin{subfigure}[b]{0.45\textwidth}
\centering
\includegraphics[width=\linewidth]{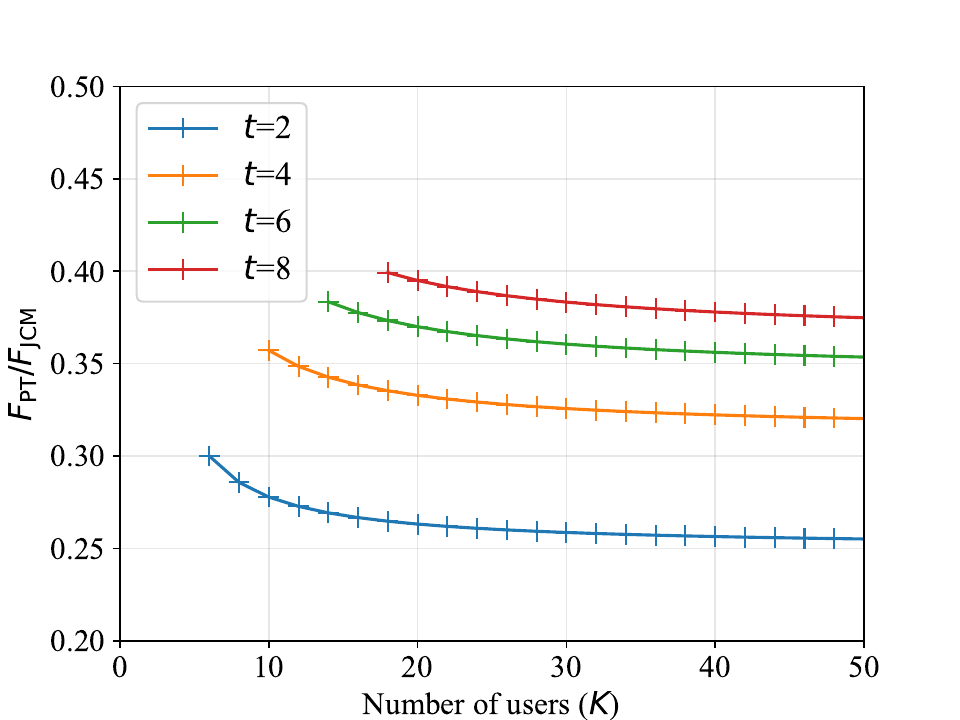}
\caption{\Sbp ratios for $t\in \{2,4,6,8\}$, Case 1.}
\label{fig: actual sub. ratio, case 1, thm 2}
\end{subfigure}
\begin{subfigure}[b]{0.45\textwidth}
\centering
\includegraphics[width=\linewidth]{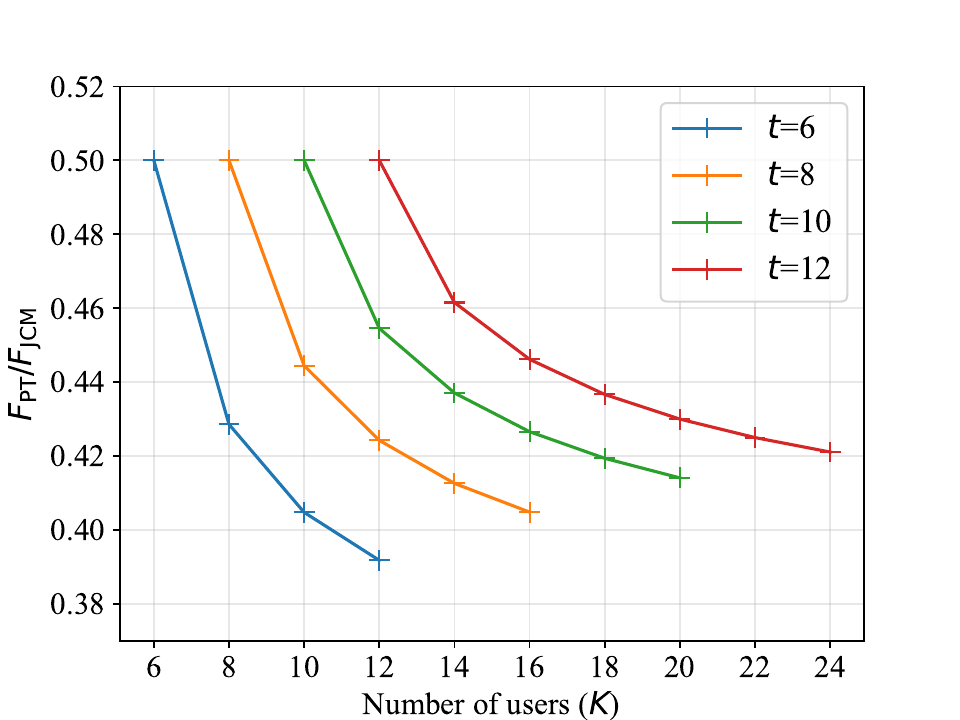}
\caption{\Sbp ratios for $t\in \{6,8,10,12\}$, Case 2.}
\label{fig: actual sub. ratio, case 2, thm 2}
\end{subfigure}
\caption{\Sbp ratio  $F_{\rm PT}/F_{\rm JCM}$  vs. $K$.} 
\label{fig: actual sub. ratio, thm 2 }
\end{figure*}

\subsubsection{{Case 2:} $t\ge K/2$}
In this case, 
$V=S+1=(K-t)/2+1$.
The \sbf  and \mgrp types are given in (\ref{eq: subfile and multi. group types, thm 2}) where
$x= t-K/2$, $y=t+1-K/2$.  
The transmitter selection is 
\be
\label{eq:tx select, case 2, thm 2}
\sv_i=\big( t-y-i+2,(y+i-1)^{\dagger} \big),\;  i\in\left[\frac{K-t}{2} \right]
\ee 
with  \istss $\Ic_i=\{\vv_i,\vv_{i+1}\},i\in [(K-t)/2]$.
\begin{table*}[ht]
\caption{FS Table for Case 2.} 
\begin{center}
\renewcommand{\arraystretch}{0.8}
\begin{tabular}{|c" c| c| c |c |c |c | c| c| c|c |}
\hline
& $\vv_1$ & $\vv_2$ & $\vv_3$ &$\cdots $ &$\vv_j$ &$\vv_{j+1}$ &$\cdots $ &$\vv_{\frac{K}{2}-r-1}$&$\vv_{\frac{K}{2}-r}$ & $\vv_{\frac{K}{2}-r+1}$\\
 \thickhline
  $\bmalpha_1$ &$y-1$& $y$  &   &  &  &    & & & & \\
 \hline
   $\bmalpha_2$& & $y$ &$y+1$   &  &  &   &  & &&  \\
 \hline
 $\vdots$ & &  &   & $\ddots$  &  &    &  & && \\
 \hline
  $\bmalpha_j$ & &  &   & & $y+j-2$ &   & &  &&  \\
 \hline
   $\bmalpha_{j+1}$ & &  &   &   &$y+j-2$  & $y+j-1$ && & &  \\
 \hline
   $\vdots $ & &  &   &  &  &  &   $\ddots$ &&&   \\
 \hline
   $\bmalpha_{\frac{K}{2}-r-1 }$ & &  &   &  &  &    &&$r-2$& $r-1$&  \\
 \hline 
  $\bmalpha_{ \frac{K}{2}-r}$ & &  &   &  &    &  &  &&$r-1$ &$r$\\
  \thickhline
    $\alphaglobal$ &$y-1$ & $y$   & $y+1$ & $\cdots$ &$y+j-2$ &$y+j-1$  &$\cdots$ &$r-2$ & $r-1$ &$r$  \\
    \hline
\end{tabular}
\end{center}
\label{table: FS table gen. ach. case 2, thm 2}
\end{table*}
The  \lfsvs are given in Table~\ref{table: FS table gen. ach. case 2, thm 2}, from which the \gfsv can be computed as 
\be
\label{eq: global FS vector, case 2, thm 2}
\alphaglobal=(y-1,y,y+1,\cdots,r-1,r),
\ee 
where $y=t+1-K/2$. 
Thus, 
\be 
\Fpt=\sum_{i=1}^{K/2-r+1}\lef(t+i-1-\frac{K}{2} \rig)  F(\vv_i),
\ee
where
$F(\vv_i) =2\binom{q}{t-q+i-1} \binom{q}{q-i+1},  i\in[q-r]$ and $F(\vv_{q-r+1})=\binom{q}{r}^2$. 
Note that $y\ge 1$ since $t \ge K/2$. When $t = K/2$, type-$\vv_1$ is excluded because $\alphaglobalnb(\vv_1) = y - 1 = 0$. As the largest \gfsf is $r = t/2$, it follows that $\Fpt \le \Fjcm/2$, similarly to Case 1. A demonstration of the \sbp ratios for various $t$ is given in Fig.~\ref{fig: actual sub. ratio, case 2, thm 2}.

Combining both cases, we conclude that $\Fpt/\Fjcm\le 1/2$ for all $(K,t)=(2q,2r)$ with $r\le q$,  thereby completing the proof of Theorem~\ref{thm: constant reduction}.

\begin{remark}
Comparing \tx selections in (\ref{eq:tx select, case 1, thm 2}) and (\ref{eq:tx select, case 2, thm 2}),
the two cases are distinguished because the transmitter is always chosen from the smaller user group in each type-$\sv_i$  \mgrp, which yields smaller FS factors according to (\ref{eq: FS factor}).
\end{remark}

\subsection{Concise Examples}
\label{subsec:examples, thm 2 gen PT design}

\begin{example}[$t=2$]
Consider $(K,t)=(2q,2 )$ with $q\ge 3$ and \grpg $\qv=(q,q)$,  extending \Ex~\ref{example: (4,2,1) PT first look} in Section \ref{subsec: a first look at PT, problem description}.
The  \sbf and \mgrp types are   $\vv_1=(2,0), \vv_2=(1,1),\sv_1=(3^\dagger,0)$, and $\sv_2=(2,1^\dagger)$, yielding
$\alphaglobal=(0,1)$ and $\Fpt=(0,1)[F(\vv_1),F(\vv_2)]=K^2/4< \Fjcm/2= \frac{K(K-1)}{2}$.\hfill $\lozenge$
\end{example}

\begin{example}[$t=4$]
Let $t=4$ and $\qv=
(K/2,K/2)$ with $K\ge 10$ so  that the PT design in Case 1 applies.
Specifically, the \sbf  and \mgrp types are 
$\vv_1 =(4,0)$, $\vv_2 =(3,1)$, $\vv_3 =(2,2)$,
$\sv_1 =(5,0)$, $\sv_2 = (4,1^{\dagger} )$, and
$\sv_3 = (3,2^{\dagger} )$. 
The \gfsv is $\alphaglobal=(0,1,2)$ so that type-$\vv_1$  is excluded. 
Since $F(\vv_1)=2\binom{K/2}{4}$, $F(\vv_2)=2\binom{K/2}{3}\binom{K/2}{1}$, and $F(\vv_3)=\binom{K/2}{2}^2$,  
\be
F_{\rm PT}=(0,1,2)\left[F(\vv_1),F(\vv_2),F(\vv_3)\right]^{\rm T}=\frac{K^2(K-2)(5K-14)}{96}.
\ee
\Conseqtly,
\be 
\frac{F_{\rm PT}}{F_{\rm JCM}}=\frac{0.3125K(K-2.8)}{(K-1)(K-3)}\leq 0.3571,\;\forall K\geq 10
\ee 
\ie, the PT \sbp is about one-third of the \jsch asymptotically.
\hfill $\lozenge$
\end{example}

\subsection{Discussion}
\label{subsection: discussion of thm 2}

\subsubsection{The Case of Odd $t$ (Even $K$)}
\label{subsec:the case of odd t, discussion thm 2}
Let $K$ be even.  For odd $t$, the \morethanhalf \red is impossible  because no global FS vector can have all entries bounded by $t/2$ (cf. (\ref{eq: global FS vector, case 1, thm 2}), (\ref{eq: global FS vector, case 2, thm 2})). The obstruction is the multicast group type 
$ \sv=\lef( \frac{t+1}{2},   \frac{t+1}{2}   \rig)$, which forces all users in $\sv$ to transmit, yielding a local FS factor $\alpha=t$ for the involved subfile type 
$ \vv=\lef( \frac{t+1}{2},   \frac{t}{2}   \rig)$. Moreover, due  to the misalignment\footnote{In Tables \ref{table: FS table gen. ach. case 1, thm 2} and \ref{table: FS table gen. ach. case 2, thm 2}, the  local FS factors are perfectly aligned (\ie, identical) across 
each column of the FS table. \Conseqtly,  the global FS factors coincide  with the local ones, and all entries are bounded by $t/2$.}  of  the local FS factors, the global FS factors are amplified by the vector LCM operation  and may exceed $t$, nullifying the more-than-half reduction achievable for even $t$.
Nevertheless, constant-factor reduction remains \achvb for odd $t$ under $\qv=(K/2, K/2)$; the next example shows a $1/4$ reduction for $t=3$.

\begin{example}[$t=3$]
\label{example: dicussion of thm 2}
Let $ (K,t)=(2q, 3)$  with $q\ge 4$ and   $\qv=(q,q)$. The subfile and \mgrp types are  
\begin{align}
 & \vv_1 =(3,0),\vv_2=(2,1),\;
 \sv_1=(4,0), \sv_2=(3,1^{\dagger}),\sv_3=(2^{\dagger},2^{\dagger}),
\end{align}
with $ \Ic_1=\{\vv_1\},\Ic_2=\{\vv_1,\vv_2\}$, and $\Ic_3=\{\vv_2\}$.
The \lfsvs are $\alphabm_1=(3, \star), \alphabm_2=(0, 1)$, and $\alphabm_3=(\star, 3)$, yielding $\alphaglobal=(0,3)$.
Hence, $F_{\rm PT}=(0,3)\left[F(\vv_1),F(\vv_2)\right]^{\rm T}
=\frac{3K^2(K-2)}{8}$. \Conseqtly, 
\be 
\frac{F_{\rm PT}}{F_{\rm JCM}} = \frac{3}{4}\cdot \frac{K}{K-1}\le \frac{6}{7}, \;\forall K\ge 8, \quad \lim_{K\to \infty}\frac{F_{\rm PT}}{F_{\rm JCM}} =\frac{3}{4},
\ee 
 suggesting  a $1/4$ asymptotic \red.
\hfill $\lozenge$
\end{example}

\subsubsection{The Case of Odd $K$}
\label{subsec:the case of odd K, discussion thm 2}
For general odd $K=mq$ with $ m \ge  t+1,q \ge  t+1 $, the \PTds proposed in \Thm \ref{thm: PT design with only subfile saving gain} can be applied to \achv constant-factor \reds. 
However, the more-than-half \red is no longer guaranteed.

\section{PT Design for \Thm~\ref{thm: PT design with only subfile saving gain}}
\label{section: gen ach schm, thm 3}

\subsection{General PT Design}
\label{subsec:gen PT design,thm 3}
Suppose $K=mq$ with $t+1 \le \min \{m,q\}$, implying  $ \mu \le \min\{ 1/q, {q}/{K}\}-{1}/{K} $. Consider the $m$-\grp equal \grpg $\qv=\big( \underline{q}_m\big)$. 
Let $P(n)$ denote the \tit{partition function}\footnote{In number theory, a \tit{partition} of a positive integer $n$ is  a way of writing $n$ as a sum of positive integers, where the order of summands is irrelevant. The partition function $P(n)$ counts the number of distinct partitions of $n$.
\Fex, $P(4)=5$, \corrspdg to 
$4,3+1,2+2,2+1+1$, and $1+1+1+1$.} of $n \in \mathbb{N}_+$. 
Then, there are $V=P(t)$ subfile types and $S=P(t+1)$ multicast group types, 
\begin{align}
& \vv_1=\left(t,\underline{0}_{m-1}\right), \vv_2=\left(t-1,1,\underline{0}_{m-2}\right),\cdots, \vv_{P(t)}=\left(\underline{1}_{t},\underline{0}_{m-t}\right),\\
& \sv_1=\left(t+1,\underline{0}_{m-1}\right),
\sv_2=\left(t,1,\underline{0}_{m-2}\right)\cdots, 
\sv_{P(t+1)}=\left(\underline{1}_{t+1}, \underline{0}_{m-t-1}\right),
\end{align}
with \invod \sbf types $\Ic_1=\{\vv_1\}$, $\Ic_2=\{\vv_1,\vv_2\},\cdots,\Ic_{P(t+1)}=\{\vv_{P(t)}\}$. 
Note that type $\vv_1$ is only \invod in $\Ic_1$ and $\Ic_2$.
The \tx selection is \af:
\begin{align}
\sv_1  &=\big((t+1)^{\dagger},\underline{0}_{m-1} \big ) \Rightarrow \alphabm_1=(t, \underline{ \star}_{P(t)-1}     ),  \notag\\
\sv_2  &  =\big(t,1^{\dagger},\underline{0}_{m-2} \big ) \Rightarrow \alphabm_2=(0,1, \underline{ \star}_{P(t)-2} ), 
\end{align}
and all $t+1$ users in each $\sv_i,\forall  i\in [3:P(t+1)]$ are chosen as \txs, leading to $  \alphabm_i=(\underline{t}_{|\Ic_i|})$ ($\star$ ignored). 
The \gfsv  is  
\be 
\alphaglobal= \big(0, \underline{t}_{P(t)-1}\big),
\ee
so that type-$\vv_1$ \sbfs are excluded while all other types have a uniform FS factor of $t$. 
The resulting \sbp is
\begin{subequations}
\label{eq:Fpt, thm 3, gen scheme}
\begin{align}
F_{\rm PT} & =\alphaglobal \Fm^{\rm T}=  \left(\bm{\alpha}_{\rm JCM} - \left(\bm{\alpha}_{\rm JCM}-\alphaglobal  \right)    \right) \mathbf{F}^{\rm T}\\
& = \bm{\alpha}_{\rm JCM}\mathbf{F}^{\rm T}-
\big(t,\underline{0}_{P(t)-1}\big)\mathbf{F}^{\rm T}\\
& = F_{\rm JCM} - tF(\vv_1)\\
& = t\binom{K}{t} -  mt   \binom{q}{t}, 
\end{align}
\end{subequations}
where $\bm{\alpha}_{\rm JCM}=(\underline{t}_{P(t)} )$ denotes the  JCM FS vector.
Thus, 
\be 
\label{eq: Fpt/Fjcm exact/asymp, thm 3}
\frac{F_{\rm PT}}{F_{\rm JCM}} = 1-
\frac{m\prod_{i=0}^{t-1}\left(q-i\right)!}{\prod_{i=0}^{t-1}(mq-i)}, \quad \lim_{q\to \infty}\frac{F_{\rm PT}}{F_{\rm JCM}} =1-\frac{1}{m^{t-1}},
\ee
completing the proof of \Thm~\ref{thm: PT design with only subfile saving gain}.

\subsection{Discussion}
\label{subsec: discussion of Thm 3}

\subsubsection{Broader Coverage Regime} 
The asymptotic reduction gain $1/m^{t-1}$ in (\ref{eq: Fpt/Fjcm exact/asymp, thm 3}) vanishes  when $m \to \infty$ with $q$ and $t$ fixed. This occurs \bcuz the \sbp \red in (\ref{eq:Fpt, thm 3, gen scheme}) arises solely from excluding \sbfs of type $\vv_1=(t,\underline{0}_{m-1})$, whose number becomes negligible when $m\to \infty$ (with $q$ fixed), \ie, 
\be   
\lim_{m\to\infty} \frac{F(\vv_1)}{\sum_{i=1}^{P(t)}F(\vv_i)  } =\lim_{m\to\infty}  \frac{m\binom{q}{t} }{\binom{mq}{t}} =0.
\ee 
Nevertheless, the key \advtg  of the \PTd in \Thm \ref{thm: PT design with only subfile saving gain} lies in its \tit{genericity}; that is,
it applies to  almost \tit{any} pair $(K,t)$, beyond the coverage regimes of both \Thms \ref{thm: order reduction} and \ref{thm: constant reduction}.
While the \sbp  \red is weaker, it complements \Thms \ref{thm: order reduction} and \ref{thm: constant reduction} for parameter pairs they do not cover.

\subsubsection{Minimal \Sbp among Equal-Grouping Designs}
It is interesting to observe that $\qv=(K/2,K/2)$ actually \achvs the minimum \sbp among all equal-grouping designs
$\big\{\big( \underline{K/m}_m\big)\big\}_{m=2}^K$. 
\Specly, under $\qv=\big(\underline{q}_{m}\big)$, 
the PT design in Section \ref{subsec:gen PT design,thm 3} \achvs  \pkt \red 
\be 
\label{eq: m=2 maximizes trial PT design saving}
F_{\rm JCM}-F_{\rm PT} = mt \binom{q}{t} = \frac{K}{(t-1)!}\prod_{i=1}^{t-1}(q-i)\overset{\trm{(a)}}{\le}
\frac{K}{(t-1)!}\prod_{i=1}^{t-1}\left({K}/{2}-1  \right),
\ee
where $\trm{(a)}$ follows from $q\le K/2$ since $m\ge 2$. \Conseqtly, $\qv=(K/2,K/2)$  yields the maximal packet \red relative to $\Fjcm$.

\section{Conclusion}
\label{section: conclusion}

In this paper, we studied finite-length D2D coded caching with the goal of preserving the optimal communication rate. We proposed a packet type (PT)-based design framework that uses user grouping to induce a geometric classification of subfiles, packets, and multicast groups. This type-based viewpoint enables structured asymmetry in file splitting and multicast delivery. In particular, PT reduces subpacketization through two complementary mechanisms: subfile saving, by excluding redundant subfile types, and further-splitting saving, by assigning type-dependent splitting factors through transmitter selection. Based on this framework, we constructed several classes of rate-optimal D2D coded caching schemes with strictly smaller subpacketization than the JCM scheme, achieving either order-wise or constant-factor reductions depending on the system parameters. 
These results show that, unlike in shared-link coded caching, the symmetric subpacketization structure of JCM is not necessary for rate-optimal D2D caching. This reveals a fundamental structural distinction between the two models and provides a systematic methodology for finite-length D2D coded caching design.

\subsection{Future Work}

Several directions remain open. First, although the PT framework applies to a broad range of parameters, a fully general construction for arbitrary \(K\) and \(t\) is still unknown. Second, the current framework enforces the optimal DoF condition by construction; allowing a controlled rate increase may lead to further subpacketization reductions and a richer rate-subpacketization tradeoff. Third, fundamental lower bounds on the minimum subpacketization of rate-optimal D2D coded caching remain largely unexplored. Finally, extending the PT framework to heterogeneous cache and packet sizes may further enlarge its design space and practical applicability.

\section*{Acknowledgment}
The authors  would like to thank Prof. Antti T\"olli and Prof. Petros Elia for their valuable discussions and suggestions during the preparation of this paper.


\appendix

\section{Proof of Memory Constraint}
\label{appendix: proof of MC}

Recall that $F_i(\vv)$ denotes the number of type-$\vv$ \sbfs stored by user $i$.
Without loss of generality,  consider an equal grouping  $\qv=(\underline{q}_m)$ (i.e., $K=qm$) and
$\vv=(v_1,\cdots,v_m)$ where $v_1>\cdots>v_m> 0$ (\ie, there are $m$ unique sets in $\vv$).
Suppose $i\in \Qc_{\tilde{i}}$ for some $\tilde{i}\in [m]$.  
Define $\Vc_{i}(j)\eqdef\{\Tc:\texttt{type}(\Tc)=\vv, |\Tc\cap \Qc_{\tilde{i}}|=v_j\} $ where
$ 
|\Vc_{i}(j)|=\binom{q-1}{v_j-1}\prod_{j^{\prime}\ne j}\binom{q}{v_{j^{\prime}}}(m-1)!
$.
Since $ \Qc_{\tilde{i}}$ could correspond to any $v_j,j\in [m]$, the number of type-$\vv$ subfiles stored by user $i$ is equal to 
$ 
F_i(\vv) =\sum_{j=1}^m|\Vc_{i}(j)|
$, \ie, 
\begin{subequations}
\label{eq: Z verify, PT framework}
\begin{align}
F_i(\vv) &= \sum_{j=1}^m \binom{q-1}{v_j-1}\prod_{j^{\prime}\in[m]\backslash \{j\} }\binom{q}{v_{j^{\prime}}}(m-1)!\\
& = \sum_{j=1}^m\frac{v_j}{q}\binom{q}{v_j}   \prod_{j^{\prime}\in[m]\backslash \{j\} }\binom{q}{v_{j^{\prime}}}(m-1)!\\
&=\frac{m(m-1)!}{K}\sum_{j=1}^m v_j \binom{q}{v_j}   \prod_{j^{\prime}\in[m]\backslash \{j\} }\binom{q}{v_{j^{\prime}}}\\
&=\frac{m!}{K}\sum_{j=1}^m v_j\prod_{j^{\prime}=1}^m\binom{q}{v_{j^{\prime}}}\\
&= \frac{\sum_{j=1}^m v_j   }{K} m!\prod_{j}^m\binom{q}{v_j}\\
&= \lef(\sum_{j=1}^m v_j\rig) \frac{F(\vv)}{K}   \label{eq: step 0, Z verify, PT framework}\\
&=   {tF(\vv)}/{K} ,\label{eq: step 1, Z verify, PT framework}
\end{align}
\end{subequations}
where (\ref{eq: step 0, Z verify, PT framework}) is due to $F(\vv)=m!\prod_{j=1}^m\binom{q}{v_j}$; (\ref{eq: step 1, Z verify, PT framework}) is due to $ \sum_{j=1}^m v_j =t$. (\ref{eq: Z verify, PT framework}) shows that $F_i(\vv)=(t/K)F(\vv),\forall \vv\in \Vc$. 
Therefore,
\be 
H(Z_k) =\frac{NL}{F_{\rm PT}}\sum_{\vv\in \Vc}\alpha^{\rm \glb}(\vv)F_i(\vv)
= \frac{NL}{F_{\rm PT}}\frac{t}{K}\sum_{\vv\in \Vc}\alpha^{\rm \glb}(\vv)F(\vv)
= \frac{tNL}{K}=ML, 
\ee 
where note  that $\Fpt=\sum_{\vv\in 
\Vc}\alpha^{\rm \glb}(\vv)F(\vv) $.
The same argument extends to arbitrary user groupings; the details are however omitted to avoid excessive technicality.

\section{Proof of Lemma \ref{lemma: vanishing upper bound, thm 1 gen schm}}
\label{sec:proof of vanishing subp ratio, app}

Suppose $K=2^{2^k}$ and $\overline{t}=2C\log_2\log_2 K$ for some constant $C$. Then
\begin{subequations}
\label{eq: vanishing bound, thm 1 gen schm}
\begin{align}
\lim\limits_{K\to\infty}\frac{F_{\rm PT}}{F_{\rm JCM}} &=\lim\limits_{K\to\infty}\frac{\prod_{i=1}^{C\log_2\log_2 K}(2i-1)}{K-2C\log_2\log_2 K}\\
&=\lim\limits_{k\to\infty}\frac{\prod_{i=1}^{k C}(2i-1)}{2^{2^k}-2k C}\\
& \le  \lim\limits_{k\to\infty}\frac{2\prod_{i=1}^{k C}(2i-1)}{2^{2^k}}\label{eq: step 0, vanishing bound, thm 1 gen schm}   \\
&\le \lim\limits_{k\to\infty}\frac{2(2kC)^{kC}}{2^{2^k}}\\
& =  \lim\limits_{k\to\infty} 2^{\log_2\left(2(2kC)^{kC} \right) - \log_2(2^{2^k})   } \\
& = \lim\limits_{k\to\infty} 2^{ 1+ C\log_2(2C)k  + Ck\log_2k     - 2^k   } \\
& = \lim\limits_{k\to\infty} 2^{ \Theta(k\log_2k) - 2^k   } =0, 
\end{align}
\end{subequations}
where (\ref{eq: step 0, vanishing bound, thm 1 gen schm}) is due to the fact that there exists $ k_0>0$ such that $K-2kC \ge K/2, \forall k \ge k_0$ since $K=2^{2^k}$ grows doubly exponentially with $k$. 
This completes the proof of Lemma~\ref{lemma: vanishing upper bound, thm 1 gen schm}.

\section{\Achvblty of DPDA \Sbp  under PT}
\label{app:PT for DPDA} 
Within the PT \frmwk, we show that  $F=K^2/4$ is \achvb for $t=2$. For $t=K-2$,  we  further show  that
$F=K(K-2)$  is \achvb when $K$ is odd, and 
$F=\frac{K(K-2)}{2} $ is \achvb when $K$ is even. 
Hence,  the DPDA results  \corrspdg to $t=2$ and $K-2$ are fully  recovered under PT.

\subsubsection{$t=2$}
Consider \grpg $\qv=(K/2,K/2)$.
The \sbf   types are  $\vv_1=(2,0), \vv_2=(1,1)$. Depending on whether $K\ge 6$, the \mgrps may differ. Specifically, if $K=4$, there is one \mgrp type $
\sv=(2,1^\dagger)$;
if $K\ge 6$, there  are  two \mgrp types $
\sv_1=(3^\dagger,0), \sv_2=(2,1^\dagger)$ with local FS vectors $\alphabm_1=(2, \star), \alphabm_2=(0, 1)  $. Both cases yield the same \gfsv $\alphaglobal=(0,1)$, implying $\Fpt=(0,1)[F(\vv_1),F(\vv_2)]=F(\vv_2)=K^2/4$.
In fact, this is exactly the PT design presented in \Ex \ref{example: (4,2,1) PT first look} (Section \ref{subsec: a first look at PT, problem description}).

\subsubsection{$t=K-2$}

First consider the case of even $K$, say $K=2m,m \ge 2$, and equal grouping $\qv =\left(\underline{2}_m \right)$.  The  \sbf and \mgrp types are $
\vv_1=\left(\underline{2}_{m-1}\right),
\vv_2=\left(\underline{2}_{m-2},\underline{1}_2\right), 
\sv=\big(\underline{2}_{m-1},1^{\dagger} \big),\;
\Ic_1=\{\vv_1,\vv_2\}
$,
yielding $\alphaglobal=(0,1)$. Hence, 
$
F_{\rm PT}=\alphaglobal[F(\vv_1),F(\vv_2)  ]^{\rm T}   =
(0,1)\big[\binom{m}{1},\binom{m}{2}\binom{2}{1}^2 \big]^{\rm T}=\frac{K(K-2)}{2}
$.

Next, consider the case of odd $K$, say $K=2m+1$.
When $K\ge 7$, the PT design in Example~\ref{example: discussion of Thm 1, PT design not unique} with  grouping $\qv=\left(
3,\underline{2}_m\right)$ can be used to \achv   $F_{\rm PT}=K(K-2)$.
When $K=3$, the \jsch itself \achvs $F=3$.
When $K=5$, under $\qv =\left(3,2 \right)$, the \sbf and  \mgrp types are $
\vv_1=\left( 1,2 \right),
\vv_2=\left( 2,1\right),
\vv_3=\left(3,0\right)$, 
$\sv_1= ( 2^{\dagger},2  ),
\sv_2= (3, 1^{\dagger} ),
\Ic_1=\{\vv_1,\vv_2\},
\Ic_2=\{\vv_2,\vv_3\}$. 
The local and global FS vectors  are $\alphabm_1=(1,2,\star)   ,  \alphabm_2=(\star,1,0)     $, and $\alphaglobal=(1,2,0)$. The memory constraint is satisfied as 
$\alphaglobal  \bm{\Delta}_1=(1,2,0)(2,-1,-1)^{\rm T}=0$.
Hence, 
$ 
\Fpt =\alphaglobal\big[F(\vv_1), F(\vv_2),F(\vv_3) \big]=15=K(K-2) $.


\bibliographystyle{IEEEtran}
\bibliography{references_newest}
\end{document}

%% file: macros.tex
\long\def\comment#1{}

\newtheorem{defn}{Definition}
\newtheorem{example}{Example}
\newtheorem{theorem}{Theorem}
\newtheorem{lemma}{Lemma}
\newtheorem{proof}{Proof}

\newtheorem{remark}{Remark}

\makeatletter
\newcommand{\subalign}[1]{%
  \vcenter{%
    \Let@ \restore@math@cr \default@tag
    \baselineskip\fontdimen10 \scriptfont\tw@
    \advance\baselineskip\fontdimen12 \scriptfont\tw@
    \lineskip\thr@@\fontdimen8 \scriptfont\thr@@
    \lineskiplimit\lineskip
    \ialign{\hfil$\m@th\scriptstyle##$&$\m@th\scriptstyle{}##$\crcr
      #1\crcr
    }%
  }
}
\newcommand{\thickhline}{%
    \noalign {\ifnum 0=`}\fi \hrule height 1pt
    \futurelet \reserved@a \@xhline
}
\newcolumntype{"}{@{\hskip\tabcolsep\vrule width 1pt\hskip\tabcolsep}}

\makeatother

\newfont{\bbb}{msbm10 scaled 700}

\newfont{\bb}{msbm10 scaled 1100}

\let \bksl\backslash

\usepackage{xcolor} 

\newcommand{\arbi}{arbitrary\xspace}

\newcommand{\deli}{delivery\xspace}
\newcommand{\Deli}{Delivery\xspace}
\newcommand{\suth}{such that\xspace}
\newcommand{\soth}{so that\xspace}
\newcommand{\sbpzed}{subpacketized\xspace}

\newcommand{\uneq}{unequal\xspace}
\newcommand{\Uneq}{Unequal\xspace}
\newcommand{\ugrping}{user grouping\xspace}

\newcommand{\glb}{global\xspace}

\newcommand{\highlt}{highlight\xspace}
\newcommand{\memo}{memory\xspace}

\newcommand{\Memo}{Memory\xspace}

\newcommand{\subseq}{subsequent\xspace}

\newcommand{\accrdto}{according to\xspace}

\newcommand{\Rmk}{Remark\xspace}

\newcommand{\uset}{unique set\xspace}
\newcommand{\usets}{unique sets\xspace}

\newcommand{\Fex}{For example\xspace}

\newcommand{\Ex}{Example\xspace}
\newcommand{\ex}{example\xspace}

\newcommand{\morethanhalf}{more-than-half\xspace}

\newcommand{\PTd}{PT design\xspace}
\newcommand{\PTds}{PT designs\xspace}

\newcommand{\achvs}{achieves\xspace}
\newcommand{\achvd}{achieved\xspace}
\newcommand{\achv}{achieve\xspace}

\newcommand{\Achvblty}{Achievability\xspace}
\newcommand{\achvb}{achievable\xspace}

\newcommand{\Itc}{In this case\xspace}

\newcommand{\Iow}{In other words\xspace}

\newcommand{\indept}{independent\xspace}

\newcommand{\otws}{otherwise\xspace}
\newcommand{\grpg}{grouping\xspace}

\newcommand{\schms}{schemes\xspace}
\newcommand{\schm}{scheme\xspace}

\newcommand{\Conseqtly}{Consequently\xspace}

\newcommand{\corrspdg}{corresponding\xspace}

\newcommand{\corrspds}{corresponds\xspace}

\newcommand{\advtg}{advantage\xspace}

\newcommand{\bcuz}{because\xspace}

\newcommand{\Aar}{As a result\xspace}
\newcommand{\sbfs}{subfiles\xspace}
\newcommand{\sbf}{subfile\xspace}

\newcommand{\exch}{exchange\xspace}

\newcommand{\perm}{permutation\xspace}

\newcommand{\grps}{groups\xspace}
\newcommand{\grp}{group\xspace}

\newcommand{\frmwk}{framework\xspace}

\newcommand{\Pkt}{Packet\xspace}
\newcommand{\pkt}{packet\xspace}
\newcommand{\pkts}{packets\xspace}

\newcommand{\reqd}{required\xspace}

\newcommand{\msg}{message\xspace}
\newcommand{\ie}{i.e.\xspace}
\newcommand{\agg}{aggregate\xspace}
\newcommand{\mtcst}{multicast\xspace}
\newcommand{\Mtcst}{Multicast\xspace}
\newcommand{\CoCa}{Coded Caching\xspace}

\newcommand{\coca}{coded caching\xspace}

\newcommand{\af}{as follows\xspace}

\newcommand{\Specly}{Specifically\xspace}

\newcommand{\resp}{respectively\xspace}

\newcommand{\wrt}{with respect to\xspace}
\newcommand{\diffce}{difference\xspace}

\newcommand{\diff}{different\xspace}
\newcommand{\Diff}{Different\xspace}
\newcommand{\jsch}{JCM scheme\xspace}

\newcommand{\Ip}{In particular\xspace}

\newcommand{\tx}{transmitter\xspace}
\newcommand{\txs}{transmitters\xspace}

\newcommand{\txn}{transmission\xspace}

\newcommand{\Tx}{Transmitter\xspace}
\newcommand{\Txs}{Transmitters\xspace}

\newcommand{\ssgain}{subfile saving gain\xspace}

\newcommand{\mgrp}{multicast group\xspace}
\newcommand{\mgrps}{multicast groups\xspace}
\newcommand{\Mgrp}{Multicast group\xspace}

\newcommand{\sym}{symmetric\xspace}
\newcommand{\Sym}{Symmetric\xspace}
\newcommand{\asym}{asymmetric\xspace}
\newcommand{\Asym}{Asymmetric\xspace}

\newcommand{\cmsg}{coded message\xspace}
\newcommand{\cmsgs}{coded messages\xspace}

\newcommand{\Algo}{Algorithm\xspace}
 
\newcommand{\comm}{communication\xspace}

\newcommand{\etal}{et al.\xspace}

\newcommand{\Finlen}{Finite-length\xspace}

\newcommand{\Sbp}{Subpacketization\xspace}
\newcommand{\sbp}{subpacketization\xspace}
\newcommand{\sbps}{subpacketizations\xspace}

\newcommand{\Thm}{Theorem\xspace}
\newcommand{\Thms}{Theorems\xspace}
\newcommand{\red}{reduction\xspace}
\newcommand{\reds}{reductions\xspace}

\newcommand{\param}{parameter\xspace}
\newcommand{\params}{parameters\xspace}

\newcommand{\Thf}{Therefore\xspace}

\newcommand{\fs}{further-splitting\xspace}

\newcommand{\ow}{order-wise\xspace}
\newcommand{\owr}{order-wise reduction\xspace}

\newcommand{\tbar}{\mbox{$\overline{t}$}}
\newcommand{\alphaglobal}{\mbox{$\bm{\alpha}^{\rm  global}$}}

\newcommand{\alphaglobalnb}{\mbox{$\alpha^{\rm  global}$}}

\newcommand{\bmalpha}{\mbox{$\bm{\alpha}$}}

\DeclareMathOperator*{\argmin}{arg\,min}
\newcommand{\alphalcm}{\mbox{$\bm{\alpha}^{\rm LCM}$}}
\newcommand{\alphajcm}{\mbox{$\alpha_{\rm JCM}$}}
\newcommand{\alphabm}{\mbox{$\bm{\alpha}$}}
\newcommand{\ith}{\mbox{$i^{\rm th}$}}
\newcommand{\jth}{\mbox{$j^{\rm th}$}}
\newcommand{\nth}{\mbox{$n^{\rm th}$}}
\newcommand{\Fpt}{\mbox{$F_{\rm PT}$}}
\newcommand{\Fjcm}{\mbox{$F_{\rm JCM}$}}
\newcommand{\Fdpda}{\mbox{$F_{\rm DPDA}$}}

\newcommand{\Fman}{\mbox{$F_{\rm MAN}$}}
\newcommand{\Rjcm}{\mbox{$R_{\rm JCM}$}}
\newcommand{\Rman}{\mbox{$R_{\rm MAN}$}}
\newcommand{\nd}{\mbox{$N_{\rm d}$}} 
\newcommand{\hatnd}{\mbox{$\widehat{N}_{\rm d}$}} 
\newcommand{\hatu}{\mbox{$\widehat{\Uc}$}}
\newcommand{\Utx}{\mbox{$\Uc_{\rm Tx}$}}
\newcommand{\Dtx}{\mbox{$\Dc_{\rm Tx}$}}

\newcommand{\av}{{\bf a}}
\newcommand{\bv}{{\bf b}}

\newcommand{\dv}{{\bf d}}

\newcommand{\qv}{{\bf q}}

\newcommand{\sv}{{\bf s}}

\newcommand{\vv}{{\bf v}}

\newcommand{\Am}{{\bf A}}
\newcommand{\Bm}{{\bf B}}

\newcommand{\Fm}{{\bf F}}

\newcommand{\Ac}{{\cal A}}
\newcommand{\Bc}{{\cal B}}

\newcommand{\Dc}{{\cal D}}

\newcommand{\Ic}{{\cal I}}

\newcommand{\Lc}{{\cal L}}

\newcommand{\Qc}{{\cal Q}}
\newcommand{\Rc}{{\cal R}}
\newcommand{\Sc}{{\cal S}}
\newcommand{\Tc}{{\cal T}}
\newcommand{\Uc}{{\cal U}}

\newcommand{\Vc}{{\cal V}}

\newcommand{\dd}{D2D\xspace}

\newcommand{\invod}{involved\xspace}

\newcommand{\ists}{involved subfile type set\xspace}
\newcommand{\istss}{involved subfile type sets\xspace}

\newcommand{\memoconst}{memory constraint\xspace}
\newcommand{\Memoconst}{Memory constraint\xspace}

\newcommand{\gfsfs}{global FS factors\xspace}
\newcommand{\gfsf}{global FS factor\xspace}
\newcommand{\gfsv}{global FS vector\xspace}
\newcommand{\lfsfs}{local FS factors\xspace}

\newcommand{\lfsvs}{local FS vectors\xspace}
\newcommand{\lfsv}{local FS vector\xspace}

\newcommand{\vlcm}{vector least common multiple\xspace}

\newcommand{\eqdef}{\stackrel{\Delta}{=}}

\newcommand{\be}{\begin{equation}}
\newcommand{\ee}{\end{equation}}
\newcommand{\bea}{\begin{eqnarray}}
\newcommand{\eea}{\end{eqnarray}}

\def \type{\ttt{type}}

\let \rig\right
\let\lef \left 
\let\eqf\eqdef
\let\ttt\texttt
\let\trm\textrm
\let\tbf\textbf
\let\tit\textit

\let\mbf\mathbf

\let\udl\underline
\let \mrm \mathrm

%% file: references_newest.bib
@STRING{globecom    = {Proc. {IEEE} Global Commun. Conf. (GLOBECOM)}}

@STRING{icc         = {Proc. {IEEE} Int. Conf. on Commun. (ICC)}}

@STRING{isit        = {Proc. {IEEE} Int. Symp. on Inform. Theory (ISIT)}}

@STRING{itw         = {Proc. {IEEE} Inform. Theory Workshop (ITW)}}

@inproceedings{chittoor2019low,
  title={Low subpacketization coded caching via projective geometry for broadcast and d2d networks},
  author={Chittoor, Hari Hara Suthan and Krishnan, Prasad},
  booktitle={2019 IEEE Global Communications Conference (GLOBECOM)},
  pages={1--6},
  year={2019},
  organization={IEEE}
}

@article{wang2025coded,
  title={Coded Caching for D2D Multi-access Networks with Linear Subpacketization via Vector Set},
  author={Wang, Jinyu and Cheng, Minquan and Wu, Youlong and Wang, Pengfei},
  journal={IEEE Transactions on Communications},
  year={2025},
  publisher={IEEE}
}

@inproceedings{nt2025d2d,
  title={D2D Coded Caching from Two Classes of Optimal DPDAs Using Cross Resolvable Designs},
  author={NT, Rashid Ummer and Rajan, B Sundar},
  booktitle={2025 IEEE Wireless Communications and Networking Conference (WCNC)},
  pages={1--6},
  year={2025},
  organization={IEEE}
}

@article{rajan2024optimal,
  title={Optimal Placement Delivery Arrays from $ t $-Designs with Application to Hierarchical Coded Caching},
  author={Rajan, B Sundar and others},
  journal={arXiv preprint arXiv:2402.07188},
  year={2024}
}

@article{nt2025hierarchical,
  title={Hierarchical Coded Caching With Low Subpacketization and Coding Delay Using Combinatorial t-Designs},
  author={NT, Rashid Ummer and Rajan, B Sundar},
  journal={IEEE Internet of Things Journal},
  year={2025},
  publisher={IEEE}
}

@article{rashid2026optimal,
  title={Optimal Device-to-Device Placement Delivery Arrays using Combinatorial Designs},
  author={Rashid Ummer, NT and Rajan, B Sundar},
  journal={Discrete Mathematics, Algorithms and Applications},
  year={2026},
  publisher={World Scientific}
}

@article{sasi2021multi,
  title={Multi-access coded caching scheme with linear sub-packetization using PDAs},
  author={Sasi, Shanuja and Rajan, B Sundar},
  journal={IEEE Transactions on Communications},
  volume={69},
  number={12},
  pages={7974--7985},
  year={2021},
  publisher={IEEE}
}

@inproceedings{zhang2019cache,
  title={Cache-aided interference management using hypercube combinatorial cache designs},
  author={Zhang, Xiang and Woolsey, Nicholas and Ji, Mingyue},
  booktitle={ICC 2019-2019 IEEE International Conference on Communications (ICC)},
  pages={1--6},
  year={2019},
  organization={IEEE}
}

@article{brunero2022fundamental,
  title={Fundamental limits of combinatorial multi-access caching},
  author={Brunero, Federico and Elia, Petros},
  journal={IEEE Transactions on Information Theory},
  volume={69},
  number={2},
  pages={1037--1056},
  year={2022},
  publisher={IEEE}
}

@article{parrinello2019fundamental,
  title={Fundamental limits of coded caching with multiple antennas, shared caches and uncoded prefetching},
  author={Parrinello, Emanuele and {\"U}nsal, Ay{\c{s}}e and Elia, Petros},
  journal={IEEE Transactions on Information Theory},
  volume={66},
  number={4},
  pages={2252--2268},
  year={2019},
  publisher={IEEE}
}

@article{yang2026low,
  title={A Low-Complexity Architecture for Multi-access Coded Caching Systems with Arbitrary User-cache Access Topology},
  author={Yang, Ting and Cheng, Minquan and Yi, Xinping and Qiu, Robert Caiming and Caire, Giuseppe},
  journal={arXiv preprint arXiv:2601.10175},
  year={2026}
}

@article{huang2026placement,
  title={Placement Delivery Array for Cache-Aided MIMO Systems},
  author={Huang, Yifei and Wan, Kai and Cheng, Minquan and Wang, Jinyan and Caire, Giuseppe},
  journal={arXiv preprint arXiv:2601.10422},
  year={2026}
}

@inproceedings{wu2023coded,
  title={Coded Caching Design for Dynamic Networks with Reduced Subpacketizations},
  author={Wu, Xianzhang and Cheng, Minquan and Chen, Li and Li, Congduan},
  booktitle={2023 IEEE International Symposium on Information Theory (ISIT)},
  pages={430--435},
  year={2023},
  organization={IEEE}
}

@article{lampiris2024adapt,
  title={Adapt or wait: Quality adaptation for cache-aided channels},
  author={Lampiris, Eleftherios and Caire, Giuseppe},
  journal={IEEE Transactions on Communications},
  year={2024},
  publisher={IEEE}
}

@INPROCEEDINGS{10978210,
  author={N T, Rashid Ummer and Rajan, B. Sundar},
  booktitle={2025 IEEE Wireless Communications and Networking Conference (WCNC)}, 
  title={D2D Coded Caching from Two Classes of Optimal DPDAs Using Cross Resolvable Designs}, 
  year={2025},
  volume={},
  number={},
  pages={1-6},
  doi={10.1109/WCNC61545.2025.10978210}}

@article{cheng2024asymptotically,
  title={Asymptotically Optimal Coded Distributed Computing via Combinatorial Designs},
  author={Cheng, Minquan and Wu, Youlong and Li, Xianxian and Wu, Dianhua},
  journal={IEEE/ACM Transactions on Networking},
  year={2024},
  publisher={IEEE}
}

@inproceedings{cheng2024coded,
  title={Coded Caching for MISO Systems with Linear Subpacketization},
  author={Cheng, Minquan and Tan, Rongqing and Wang, Jinyu and Wu, Youlong and Li, Xianxian},
  booktitle={2024 9th International Conference on Computer and Communication Systems (ICCCS)},
  pages={593--598},
  year={2024},
  organization={IEEE}
}

@inproceedings{wei2024novel,
  title={A Novel Construction of Coded Caching Schemes with Polynomial Subpacketizations via Projective Geometry},
  author={Wei, Huimei and Cheng, Minquan and Leung, Kahin},
  booktitle={2024 IEEE Information Theory Workshop (ITW)},
  pages={496--501},
  year={2024},
  organization={IEEE}
}

@article{cheng2025new,
  title={A New Construction Structure on Coded Caching with Linear Subpacketization: Non-Half-Sum Disjoint Packing},
  author={Cheng, Minquan and Wei, Huimei and Wan, Kai and Caire, Giuseppe},
  journal={arXiv preprint arXiv:2501.11855},
  year={2025}
}

@ARTICLE{9328826,
  author={Katyal, Digvijay and Muralidhar, Pooja Nayak and Rajan, B. Sundar},
  journal={IEEE Transactions on Communications}, 
  title={Multi-Access Coded Caching Schemes From Cross Resolvable Designs}, 
  year={2021},
  volume={69},
  number={5},
  pages={2997-3010},
  doi={10.1109/TCOMM.2021.3053048}}

@ARTICLE{9037309,
  author={Zhong, Xi and Cheng, Minquan and Jiang, Jing},
  journal={IEEE Communications Letters}, 
  title={Placement Delivery Array Based on Concatenating Construction}, 
  year={2020},
  volume={24},
  number={6},
  pages={1216-1220},
  doi={10.1109/LCOMM.2020.2981071}}

@INPROCEEDINGS{9148593,
  author={Salehi, MohammadJavad and Tolli, Antti and Shariatpanahi, Seyed Pooya},
  booktitle={ICC 2020 - 2020 IEEE International Conference on Communications (ICC)}, 
  title={A Multi-Antenna Coded Caching Scheme with Linear Subpacketization}, 
  year={2020},
  volume={},
  number={},
  pages={1-6},
  doi={10.1109/ICC40277.2020.9148593}}

@ARTICLE{9284439,
  author={Mingming, Zhang and Minquan, Cheng and Jinyu, Wang and Xi, Zhong and Chen, Yishan},
  journal={IEEE Access}, 
  title={Improving Placement Delivery Array Coded Caching Schemes With Coded Placement}, 
  year={2020},
  volume={8},
  number={},
  pages={217456-217462},
  doi={10.1109/ACCESS.2020.3042849}}

@INPROCEEDINGS{8613527,
  author={Krishnan, Prasad},
  booktitle={2018 IEEE Information Theory Workshop (ITW)}, 
  title={Coded Caching via Line Graphs of Bipartite Graphs}, 
  year={2018},
  volume={},
  number={},
  pages={1-5},
  doi={10.1109/ITW.2018.8613527}}

@ARTICLE{9536664,
  author={Cheng, Minquan and Wang, Jinyu and Zhong, Xi and Wang, Qiang},
  journal={IEEE Transactions on Information Theory}, 
  title={A Framework of Constructing Placement Delivery Arrays for Centralized Coded Caching}, 
  year={2021},
  volume={67},
  number={11},
  pages={7121-7131},
  doi={10.1109/TIT.2021.3112492}}

@ARTICLE{8651553,
  author={Cheng, Minquan and Jiang, Jing and Yan, Qifa and Tang, Xiaohu},
  journal={IEEE Transactions on Communications}, 
  title={Constructions of Coded Caching Schemes With Flexible Memory Size}, 
  year={2019},
  volume={67},
  number={6},
  pages={4166-4176},
  doi={10.1109/TCOMM.2019.2901686}}

@ARTICLE{8080217,
  author={Yan, Qifa and Tang, Xiaohu and Chen, Qingchun and Cheng, Minquan},
  journal={IEEE Communications Letters}, 
  title={Placement Delivery Array Design Through Strong Edge Coloring of Bipartite Graphs}, 
  year={2018},
  volume={22},
  number={2},
  pages={236-239},
  doi={10.1109/LCOMM.2017.2765629}}

@ARTICLE{7539576,
  author={Shanmugam, Karthikeyan and Ji, Mingyue and Tulino, Antonia M. and Llorca, Jaime and Dimakis., Alexandros G.},
  journal={IEEE Transactions on Information Theory}, 
  title={Finite-Length Analysis of Caching-Aided Coded Multicasting}, 
  year={2016},
  volume={62},
  number={10},
  pages={5524-5537},
  doi={10.1109/TIT.2016.2599110}}

@INPROCEEDINGS{8335418,
  author={Shanmugam, Karthikeyan and Dimakis, Alexandras G. and Llorca, Jaime and Tulino, Antonia M.},
  booktitle={2017 51st Asilomar Conference on Signals, Systems, and Computers}, 
  title={A unified {Ruzsa-Szem$\acute{\rm e}$redi} framework for finite-length coded caching}, 
  year={2017},
  volume={},
  number={},
  pages={631-635},
  doi={10.1109/ACSSC.2017.8335418}}

@ARTICLE{9913463,
  author={Li, Jian and Chang, Yanxun},
  journal={IEEE Communications Letters}, 
  title={New Constructions of {D2D} Placement Delivery Arrays}, 
  year={2023},
  volume={27},
  number={1},
  pages={85-89},
  doi={10.1109/LCOMM.2022.3212679}}

@INPROCEEDINGS{9839268,
  author={Malik, Adeel and Serbetci, Berksan and Elia, Petros},
  booktitle={ICC 2022 - IEEE International Conference on Communications}, 
  title={Stochastic Coded Caching with Optimized Shared-Cache Sizes and Reduced Subpacketization}, 
  year={2022},
  volume={},
  number={},
  pages={2918-2923},
  doi={10.1109/ICC45855.2022.9839268}}

@ARTICLE{9448271,
  author={Woolsey, Nicholas and Chen, Rong-Rong and Ji, Mingyue},
  journal={IEEE Transactions on Communications}, 
  title={A New Combinatorial Coded Design for Heterogeneous Distributed Computing}, 
  year={2021},
  volume={69},
  number={9},
  pages={5672-5685},
  doi={10.1109/TCOMM.2021.3087628}}

@ARTICLE{woolsey2020d2d,  author={N. {Woolsey} and R. {Chen} and M. {Ji}},  journal={IEEE Transactions on Communications},   title={Towards Finite File Packetizations in Wireless {Device}-to-{Device} Caching Networks},   year={2020},  volume={},  number={},  pages={1-1},}

@article{salehi2020low,
  title={Low-Complexity High-Performance Cyclic Caching for Large MISO Systems},
  author={Salehi, M. and Parrinello, E. and Shariatpanahi, Seyed P. and Elia, P. and T{\"o}lli, A.},
  journal={arXiv preprint arXiv:2009.12231},
  year={2020}
}

@INPROCEEDINGS{8437323,
  author={Woolsey, Nicholas and Chen, Rong-Rong and Ji, Mingyue},
  booktitle={2018 IEEE International Symposium on Information Theory (ISIT)}, 
  title={A New Combinatorial Design of Coded Distributed Computing}, 
  year={2018},
  volume={},
  number={},
  pages={726-730},
  doi={10.1109/ISIT.2018.8437323}}

@ARTICLE{shangguan2018hpyergraph,
author={C. Shangguan and Y. Zhang and G. Ge},
journal={IEEE Transactions on Information Theory},
title={Centralized coded caching schemes: A hypergraph theoretical approach},
year={2018},
volume={},
number={},
pages={1-1},
doi={10.1109/TIT.2018.2847679},
ISSN={0018-9448},
month={},}

@ARTICLE{tang2018subpacketization,
author={L. Tang and A. Ramamoorthy},
journal={IEEE Transactions on Information Theory},
title={Coded Caching Schemes With Reduced Subpacketization From Linear Block Codes},
year={2018},
volume={64},
number={4},
pages={3099-3120},
doi={10.1109/TIT.2018.2800059},
ISSN={0018-9448},
month={April},}

@ARTICLE{yan2017pda,
author={Q. Yan and M. Cheng and X. Tang and Q. Chen},
journal={IEEE Transactions on Information Theory},
title={On the Placement Delivery Array Design for Centralized Coded Caching Scheme},
year={2017},
volume={63},
number={9},
pages={5821-5833},
doi={10.1109/TIT.2017.2725272},
ISSN={0018-9448},
month={Sept},}

@ARTICLE{8950279,
  author={Tolli, Antti and Shariatpanahi, Seyed Pooya and Kaleva, Jarkko and Khalaj, Babak Hossein},
  journal={IEEE Transactions on Wireless Communications}, 
  title={Multi-Antenna Interference Management for Coded Caching}, 
  year={2020},
  volume={19},
  number={3},
  pages={2091-2106},
  doi={10.1109/TWC.2019.2962686}}

@INPROCEEDINGS{8815564,
  author={Lampiris, Eleftherios and Elia, Petros},
  booktitle={2019 IEEE 20th International Workshop on Signal Processing Advances in Wireless Communications (SPAWC)}, 
  title={Bridging two extremes: Multi-antenna Coded Caching with Reduced Subpacketization and {CSIT}}, 
  year={2019},
  volume={},
  number={},
  pages={1-5},
  doi={10.1109/SPAWC.2019.8815564}}

@INPROCEEDINGS{9739199,
  author={Salehi, Mohammad Javad and Bakhshzad Mahmoodi, Hamidreza and Toelli, Antti},
  booktitle={WSA 2021; 25th International ITG Workshop on Smart Antennas}, 
  title={A Low-Subpacketization High-Performance MIMO Coded Caching Scheme}, 
  year={2021},
  volume={},
  number={},
  pages={1-6},
  doi={}}

@INPROCEEDINGS{9348098,
  author={Salehi, MohammadJavad and Tolli, Antti},
  booktitle={GLOBECOM 2020 - 2020 IEEE Global Communications Conference}, 
  title={Diagonal Multi-Antenna Coded Caching for Reduced Subpacketization}, 
  year={2020},
  volume={},
  number={},
  pages={1-6},
  doi={10.1109/GLOBECOM42002.2020.9348098}}

@ARTICLE{9477627,
  author={Chittoor, Hari Hara Suthan and Krishnan, Prasad and Sree, K. V. Sushena and Mamillapalli, Bhavana},
  journal={IEEE Transactions on Information Theory}, 
  title={Subexponential and Linear Subpacketization Coded Caching via Projective Geometry}, 
  year={2021},
  volume={67},
  number={9},
  pages={6193-6222},
  doi={10.1109/TIT.2021.3095471}}

@INPROCEEDINGS{9739200,
  author={Zhao, Hui and Lampiris, Eleftherios and Caire, Giuseppe and Elia, Petros},
  booktitle={WSA 2021; 25th International ITG Workshop on Smart Antennas}, 
  title={Multi-Antenna Coded Caching Analysis in Finite {SNR} and Finite Subpacketization}, 
  year={2021},
  volume={},
  number={},
  pages={1-6},
  doi={}}

@ARTICLE{9103948,  author={Konstantinidis, Konstantinos and Ramamoorthy, Aditya},  journal={IEEE/ACM Transactions on Networking},   title={Resolvable Designs for Speeding Up Distributed Computing},   year={2020},  volume={28},  number={4},  pages={1657-1670},  doi={10.1109/TNET.2020.2992989}}

@ARTICLE{8620232,  author={Wang, Jinyu and Cheng, Minquan and Yan, Qifa and Tang, Xiaohu},  journal={IEEE Transactions on Communications},   title={Placement Delivery Array Design for Coded Caching Scheme in {D2D} Networks},   year={2019},  volume={67},  number={5},  pages={3388-3395},  doi={10.1109/TCOMM.2019.2893942}}

@article{wang2017placement,
  title={On the Placement Delivery Array Design for Coded Caching Scheme in D2D Networks},
  author={Wang, Jinyu and Cheng, Minquan and Yan, Qifa and Tang, Xiaohu},
  journal={arXiv preprint arXiv:1712.06212},
  year={2017}
}

@ARTICLE{9369971,  author={Zhang, Xiang and Woolsey, Nicholas and Ji, Mingyue},  journal={IEEE Transactions on Wireless Communications},   title={Cache-Aided Interference Management Using Hypercube Combinatorial Design With Reduced Subpacketizations and Order Optimal Sum-Degrees of Freedom},   year={2021},  volume={20},  number={8},  pages={4797-4810},  doi={10.1109/TWC.2021.3062264}}

@article{yapar2019optimality,
  title={On the optimality of {D2D} coded caching with uncoded cache placement and one-shot delivery},
  author={Yapar, {\c{C}}a{\u{g}}kan and Wan, Kai and Schaefer, Rafael F and Caire, Giuseppe},
  journal={IEEE Transactions on Communications},
  volume={67},
  number={12},
  pages={8179--8192},
  year={2019},
  publisher={IEEE}
}

@article{li2017fundamental,
  title={A fundamental tradeoff between computation and communication in distributed computing},
  author={Li, Songze and Maddah-Ali, Mohammad Ali and Yu, Qian and Avestimehr, A Salman},
  journal={IEEE Transactions on Information Theory},
  volume={64},
  number={1},
  pages={109--128},
  year={2017},
  publisher={IEEE}
}

@article{maddah2014fundamental,
  title={Fundamental limits of caching},
  author={Maddah-Ali, M. A. and Niesen, U.},
  journal={Information Theory, IEEE Trans. on},
  volume={60},
  number={5},
  pages={2856--2867},
  year={2014},
  publisher={IEEE}
}

@ARTICLE{ji2016fundamental,
author={M. Ji and G. Caire and A. F. Molisch},
journal={IEEE Transactions on Information Theory},
title={Fundamental Limits of Caching in Wireless {D2D} Networks},
year={2016},
volume={62},
number={2},
pages={849-869},
doi={10.1109/TIT.2015.2504556},
ISSN={0018-9448},
month={Feb},}

@INPROCEEDINGS{8006726,
  author={Shanmugam, Karthikeyan and Tulino, Antonia M. and Dimakis, Alexandros G.},
  booktitle={2017 IEEE International Symposium on Information Theory (ISIT)}, 
  title={Coded caching with linear subpacketization is possible using {Ruzsa-Szem$\acute{\rm e}$redi} graphs}, 
  year={2017},
  volume={},
  number={},
  pages={1237-1241},
  doi={10.1109/ISIT.2017.8006726}}
